\newcommand{\E}{\mathbb E}
\newcommand{\R}{\mathbb R}
\newcommand{\sgn}{\mbox{sgn}}
\numberwithin{equation}{section}
\numberwithin{figure}{section}
\theoremstyle{plain}
\newtheorem{thm}{\protect\theoremname}
  \theoremstyle{remark}
  \newtheorem*{rem*}{\protect\remarkname}
  \theoremstyle{plain}
  \newtheorem{lem}[thm]{\protect\lemmaname}
  \theoremstyle{plain}
  \newtheorem{prop}[thm]{\protect\propositionname}
  \theoremstyle{plain}
  \newtheorem{cor}[thm]{\protect\corollaryname}
   \theoremstyle{plain}
  \newtheorem*{thm*}{\protect\theoremname}
  \theoremstyle{plain}
    \theoremstyle{remark}
\theoremstyle{plain}
  \providecommand{\corollaryname}{Corollary}
  \providecommand{\lemmaname}{Lemma}
    \providecommand{\propositionname}{Proposition}
  \providecommand{\remarkname}{Remark}
  \providecommand{\theoremname}{Theorem}
\providecommand{\theoremname}{Theorem}
\providecommand{\resultname}{Result}
\providecommand{\definitionname}{Definition}
\title{Optimal consumption and investment under transaction costs}
\author{David Hobson \and Alex Tse \and Yeqi Zhu}
\thanks{David Hobson, Alex Tse: Department of Statistics, University of Warwick, Coventry, CV4 7AL, UK. D.Hobson@warwick.ac.uk; \\
Alex Tse: Cambridge Endowment for Research in Finance, Judge Business School, University of Cambridge, Cambridge, CB2 1AG, UK. S.Tse@jbs.cam.ac.uk; \\
Yeqi Zhu: Credit Suisse, London, UK. (The opinions expressed in the paper are those of the author and not of Credit Suisse.) Yeqi.Zhu@credit-suisse.com}
\date{\today}
\begin{document}
\begin{abstract}

In this article we consider the Merton problem in a market with a single risky asset and transaction costs. We give a complete solution of the problem up to the solution of a free-boundary problem for a first-order differential equation, and find that the form of the solution (whether the problem is well-posed, whether the problem is well-posed only for large transaction costs, whether the no-transaction wedge lies in the first, second or fourth quadrants) depends only on a quadratic whose co-efficients are functions of the parameters of the problem, and then only through the value and slope of this quadratic at zero, one and the turning point.

We find that for some parameter values and for large transaction costs the location of the boundary at which sales of the risky asset occur is independent of the transaction cost on purchases. We give both a mathematical and financial reason for this phenomena.

\end{abstract}

\maketitle

\section{Introduction}
\label{sec:intro}

In this article we consider the problem of maximising expected utility of consumption over the infinite horizon in a financial market consisting of a riskless bond and a single risky asset. Merton~\cite{Merton:69,Merton:71} considered this problem in a perfect, frictionless market and showed that the optimal strategy is to keep a constant fraction of wealth in the risky asset. Of the possible market frictions, arguably the most significant is transaction costs. This paper adds to the growing literature on optimal consumption/investment problems with proportional transaction costs.

In the Merton setting the market is complete. Constantinides and Magill~\cite{ConstantinidesMagill:76} generalised the problem to the incomplete case by introducing transaction costs. They argued that in the case of a single risky asset following exponential Brownian motion and power utility the scalings of the problem should mean that there is a no-transaction wedge, and the optimal strategy should be to trade in a minimal fashion so as to keep the fraction of wealth in the risky asset within an interval $I$. If the initial portfolio is such that the initial fraction of wealth in the risky asset is outside this interval then the agent makes an instantaneous transaction to bring the fraction of wealth in the risky asset to the closest boundary of $I$. Thereafter, the agent only trades when this fraction is on the boundary of $I$. In $(\mbox{Cash wealth}, \mbox{Wealth in risky asset})$ space, the interval $I$ becomes a no-transaction wedge.

The model and accompanying intuition was formulated precisely by Davis and Norman~\cite{DavisNorman:90}. They used the language of stochastic control and martingales
to give a rigorous description of the problem for power utility. For a certain subset of parameter combinations they gave a full description of the solution, specifying both the optimal consumption, and the optimal investment strategy. The optimal investment strategy involves a process which receives a local-time push at both boundaries of an interval, and these pushes are just sufficient to keep the process within the interval. Davis and Norman~\cite{DavisNorman:90} reduce the problem to solving a pair of first-order ordinary differential equations (ODEs) subject to value matching conditions at unknown free-boundaries. Their analysis was extended by Shreve and Soner~\cite{ShreveSoner:94} to a larger class of parameter combinations using methodologies from viscosity solutions.

Both \cite{DavisNorman:90} and \cite{ShreveSoner:94} consider the value function and the primal problem. Recently, there have been a trio of papers considering the dual problem and shadow prices. Kallsen and Muhle-Karbe~\cite{KallsenMuhle-Karbe:10} were the first to use the shadow-price approach in this context, but only consider logarithmic utility. Herczegh and Prokaj~\cite{HerczeghProkaj:15} extend their results to power utility. The most complete treatment of the problem via the shadow-price approach is the paper of Choi et al~\cite{ChoiSirbuZitkovic:13}. Choi et al give a full analysis of the problem, covering all parameter combinations (which involve an appreciating risky asset). They reduce the problem to the solution of a free-boundary problem for a single first-order ODE. There are multiple solutions to this free-boundary problem, and the one that is wanted is the one for which the solution to the free-boundary problem satisfies an integral condition.

In this paper we revisit the problem considered by \cite{DavisNorman:90, ShreveSoner:94,ChoiSirbuZitkovic:13,HerczeghProkaj:15}. The fundamental difference between this paper and Choi et al~\cite{ChoiSirbuZitkovic:13} is that we take the primal approach. We show that the problem of constructing the value function can be transformed into finding the solution of a first-order free-boundary problem, subject to an integral condition, as in Choi et al~\cite{ChoiSirbuZitkovic:13}. The advance relative to \cite{ChoiSirbuZitkovic:13} is that it is much easier to understand the character of the solutions to our differential equation, when compared to that of Choi et al: for our solution the possible behaviours correspond to the possible shapes of a simple quadratic, whereas in Choi et al it is necessary to consider a phase-diagram in which the behaviours depend on a pair of ellipses and/or hyperbolas. Although the two free-boundary problems must be transformations of each other, our solution leads to a simpler problem. Most especially, we can give a direct interpretation of the free-boundary points as the sale and purchase boundaries of the no-transaction wedge and we can prove comparative statics for these boundaries. In the conclusion we will expand on this remark and give further evidence of the benefits of our approach. Nonetheless, many features of our characterising ODE are to be found in the characterising ODE of \cite{ChoiSirbuZitkovic:13}; in particular in both cases the ODE has a singular point, and for some parameter combinations, though not all, the solution we want passes through this singular point.

The remainder of this paper is structured as follows. In the next section we formulate the problem and discuss how to express the solution in terms of the solution of a first-order ODE. The Hamilton-Jacobi-Bellman equation is second-order, so the key step is an order-reduction in which we make the solution of the equation the independent variable. (This order reductiion technique has been used before in investment/sale problems by Evans et al~\cite{EvansHendersonHobson:08}. Choi et al~\cite[Section 3.3]{ChoiSirbuZitkovic:13} make a similar transformation, and this may be one reason why they make more progress than \cite{HerczeghProkaj:15}.) In Section~\ref{sec:sc} we do this in the case where it is never optimal for the agent to have negative cash wealth (equivalently never optimal to borrow against holdings of the risky asset), and hence we can use cash wealth as the denominator of an autonomous univariate process which is the ratio of wealth in the risky asset to cash wealth.

In Section~\ref{sec:gc}
we show how the arguments can be extended to the general case. We can no longer use cash wealth as the denominator in the definition of our autonomous univariate process. Instead we use paper wealth (where paper wealth is defined by assuming liquidation is possible with zero transaction costs) as the denominator and consider as key variable the ratio of wealth in the risky asset to paper wealth. At first sight, it looks as if this makes the order-reduction impossible, but inspired by the results of Section~\ref{sec:sc} we how the problem may be reduced to the identical free-boundary problem as in that section.

In Section~\ref{sec:cases} we show how the solution to the free-boundary problem depends on a simple quadratic, the co-efficients of which depend on the parameters of the problem. There are several cases depending only on the values and slopes of this quadratic at zero and one, and on the value of the quadratic at the turning point. We can give exact conditions which determine when the problem is well-posed. This main result is stated and proved in Section~\ref{sec:verification} and mirrors the main result of Choi et al~\cite{ChoiSirbuZitkovic:13}, but our formulation is an improvement in the sense that we cover an extra case and we give an algebraic expression for a quantity that Choi et al can only express as an integral.

In Section~\ref{sec:statics} we consider how the boundaries of the no-transaction wedge depend on the parameters. Analysis of this type seems to be new, and would be difficult under previous approaches. More especially, we show that if the drift is small, then the no-transaction wedge includes the Merton line, and the no-transaction wedge gets larger as transaction costs increase. However, if the drift increases further, then we may loose both the monotonicity property of the no-transaction wedge, and the property that the Merton line (corresponding to zero transaction costs) lies within the no-transaction region. Remarkably, although in general the locations of both the sale and purchase boundaries depend on the transaction costs on both sale and purchases, in some circumstances the sale boundary is independent of the transaction cost on purchases.

Section~\ref{sec:conc} concludes. Some results, including those on the solution of the ODE are given in an Appendix.

\section{Problem specification and a motivating special case}
\label{sec:sc}

Let $Y = (Y_t)_{t \geq 0}$ denote the price of a risky asset and suppose $Y$ is an exponential Brownian motion with drift $\mu$ and volatility $\sigma$; then $Y_t = Y_0 e^{\sigma B_t + (\mu - \sigma^2/2)t}$ where $B = (B_t)_{t \geq 0}$ is a Brownian motion. Let $C = (C_t)_{t \geq 0}$ denote the  consumption rate of the individual and let $\Theta_t$ denote the number of units of the risky asset held by the investor. We assume that $C$ is non-negative and progressively measurable and that
$\Theta$ is of finite variation; in particular $\Theta_t = \Theta_0 + \Phi_t - \Psi_t$ where $\Phi$ and $\Psi$ are increasing, adapted, c\`{a}dl\`{a}g processes with $\Phi_{0-} = \Psi_{0-} = 0$ representing purchases and sales of the risky asset respectively.

Suppose cash wealth is right-continuous and evolves according to
\begin{equation}
\label{eq:XdefnewS}
dX_t = - C_t dt - Y_t (1 + \lambda) d \Phi_t + Y_t (1-\gamma) d \Psi_t .
\end{equation}
Here $\lambda \in [0,\infty)$ represents the transaction cost paid on purchases and $\gamma\in [0,1)$ represents the transaction cost paid on sales. We assume $\lambda + \gamma > 0$, else we are in the case of no transaction costs.

We say that a wealth portfolio $(X_t,\Theta_t)$ is {\em solvent at time $t$} if
\[ X_t + (1-\gamma)\Theta_t^+ Y_t  -  (1+\lambda) \Theta_t^- Y_t > 0, \]
or equivalently if instantaneous liquidation of the risky position yields a cash wealth which is non-negative.  A consumption/investment strategy $(C, \Theta)$ is {\em solvent from time $t_0$} if the resulting wealth portfolio process $(X_t, \Theta_t)_{t \geq t_0}$ is solvent for each $t \geq t_0$. Write $\mathcal A = {\mathcal A}(x,y,\theta,t)$ for the set of strategies which are solvent from time $t$ when $(X_{t-}=x,Y_t=y, \Theta_{t-}=\theta)$.

The objective of the agent is to maximise
the discounted expected utility of consumption over the infinite horizon,
where the discount factor is $\beta$ and the utility function of the agent is assumed
to have constant relative risk aversion with risk aversion co-efficient $R \in (0,\infty) \setminus {1}$. The maximisation takes place over the set of consumption/investment strategies which are solvent from time zero.
In particular, the goal is to find
\begin{equation}
\underset{\left(C,\Theta\right)\in\mathcal{A}(x_{0},y_{0},\theta_{0},0)}
{\sup}\mathbb{E}
\left[\int_{0}^{\infty}e^{-\beta t}\frac{C_{t}^{1-R}}{1-R}dt\right].
\label{eq:objective}
\end{equation}

Since the set-up has a Markovian structure, we expect
the value function, optimal consumption and optimal portfolio strategy to be
functions of the current wealth portfolio of the agent and of the
price of the risky asset.
Let $V=V(x,y,\theta,t)$ be the forward starting value function
for the problem
so that
\begin{equation} V(x,y,\theta,t) =
\underset{\left(C,\Theta\right)\in\mathcal{A}(x,y,\theta,t)}
{\sup}\mathbb{E}\left[ \left. \int_{t}^{\infty}e^{-\beta
s}\frac{C_{s}^{1-R}}{1-R}ds \right| X_{t-} = x, Y_t=y, \Theta_{t-} =
\theta\right].
\label{eq:100b}
\end{equation}
The goal is to solve for the value function $V=V(x,y,\theta,t)$. Note that it is the value $y \theta$ of the holdings of the risky asset which is important rather than the price level and quantity individually, and in most circumstances $y$ and $\theta$ appear as the product $y \theta$. From the scalings of the problem we expect that we can write
\begin{equation}
\label{eq:Vdef} V(x,y,\theta,t) = e^{-\beta t} \frac{x^{1-R}}{1-R} g \left( \frac{y \theta}{x} \right),
\end{equation}
where the key variable is the ratio $z= y \theta/x$ of wealth held in the risky asset to cash wealth.

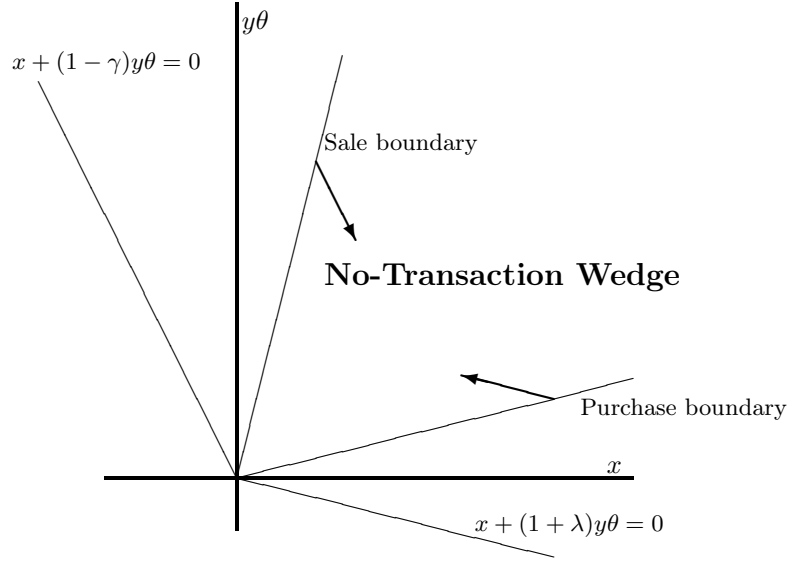
\begin{figure}
\begin{center}
\begin{picture}(200,200)(0,0)
\thicklines
\put (0,20){\line(1,0){200}}
\put (50,0){\line(0,1){200}}
\thinlines
\put (50,20){\line(4,-1){120}}
\put (50,20){\line(-1,2){75}}
\thinlines
\put (50,20){\line(4,1){150}}
\put (50,20){\line(1,4){40}}
\put (190,22){$x$}
\put (52,190){$y \theta$}
\put (140,0){{\small$x+(1+\lambda)y \theta=0$}}
\put (180,44){\small{Purchase boundary}}
\put (83,144){\small{Sale boundary}}
\put (83,93){{\Large{\bf No-Transaction Wedge}}}
\put (-35,175){\small{$x+(1-\gamma)y \theta=0$}}
\thicklines
\put (170,50){\vector(-4,1){35}}
\put (80,140){\vector(1,-2){15}}
\end{picture}
\end{center}
\caption{{\it The solvency and no-transaction regions}. The solvency region has boundaries given by the lines
$x+(1+\lambda)y \theta=0$ (for $x>0$ and $y \theta<0$) and $x+(1-\gamma)y \theta=0$ (for $x<0$ and $y \theta>0$). The no-transaction wedge is bounded by the sale and purchase boundaries. On and outside these boundaries, transactions are made to keep the process $(X_t, Y_t \Theta_t)$ inside the wedge.
The arrows represent the impact of transactions on the boundaries of the no-transaction wedge.}
\label{fig:wedge}
\end{figure}

The intuitive arguments of Constantinides and Magill~\cite{ConstantinidesMagill:76} and the concrete results of Davis and Norman~\cite{DavisNorman:90} lead us to expect that the no-transaction region will be a wedge. For the purposes of this section we suppose that this wedge is contained in the first quadrant of $(x,y \theta)$ space.
Define $Z = (Z_t)_{t \geq 0}$ by $Z_t = \Theta_t Y_t/X_t$. We expect that the agent trades to keep $Z$ in the interval $[z_*,z^*]$ where the pair of constants $ \{ z_*, z^* \}$ is to be determined. Then for initial values $y \theta > z^* x$ the optimal sale strategy includes an immediate sale to bring the ratio of risky wealth to cash wealth below $z^*$. Thus, if the initial portfolio $(X_{0-}=x,\Theta_{0-}=\theta)$ is such that $y_0 \theta > x z^*$ then we sell $\psi$ units of the risky asset, where $\psi = \frac{y_0 \theta - z^* x}{y_0(1+(1-\gamma)z^*)}$ so that
\[  \frac{y_0 \Theta_{0+}}{X_{0+}} \equiv \frac{y_0 \Theta_0}{X_0} = \frac{y_0 (\theta -\psi)}{x_0 + (1-\gamma)y_0\psi} =z^*. \]
This initial transaction at $t=0$ should not change the value function, and hence for $y \theta > x z^*$,
\[ x^{1-R} g \left( \frac{y \theta}{x} \right) = (x + (1-\gamma)y \psi)^{1-R} g(z^*) = \frac{(x+(1-\gamma)y\theta)^{1-R}}{(1 + (1-\gamma)z^*)^{1-R}} g(z^*),  \]
or equivalently $g(z) = (\frac{R}{\beta})^R A^*{(1 + (1-\gamma)z)^{1-R}}$ for $z>z^*$ where the constant $A^*$ is given by $A^* = (\frac{\beta}{R})^R \frac{g(z^*)}{(1 + (1-\gamma)z^*)^{1-R}}$.

If initial wealth is such that $y \theta < z_* x$ then the optimal strategy includes the immediate purchase of risky asset. We purchase $\phi$ units of $Y$ where $\phi = \frac{z_* x - y_0 \theta}{y_0(1 + (1+\lambda)z_*)}$ so that
\[ \frac{y_0 \Theta_{0+}}{X_{0+}} \equiv \frac{y_0 \Theta_0}{X_0} = \frac{y_0 (\theta +\phi)}{x_0 - (1 + \lambda)y_0\phi} =z_*. \]
Then $g(z) = (\frac{R}{\beta})^RA_*{(1 + (1+\lambda)z)^{1-R}}$ for $z \leq z^*$, where $A_* = (\frac{\beta}{R})^R \frac{g(z_*)}{(1 + (1+\lambda)z_*)^{1-R}}$.

Let $M = (M_t)_{t \geq 0}$ be given by
\[ M_t = \int_0^t e^{-\beta s} \frac{C_s^{1-R}}{1-R} ds + V(X_t, Y_t, \Theta_t, t). \]
Then we expect $M$ will be a supermartingale in general and a martingale under the optimal strategy.
Applying It\^{o}'s formula, and optimising over $C_t$ and $\Theta_t$ we obtain the Hamilton-Jacobi-Bellman equation which is a (second order, semi-linear) differential equation for $g$ in the no-transaction region: for $z_* < z < z^*$
\begin{equation}
\label{eq:h:gode}
0 = \frac{R}{1-R} \left( g(z) - \frac{z g'(z)}{1-R} \right)^{1-1/R} - \beta \frac{g(z)}{1-R} + \mu \frac{z g'(z)}{1-R} + \frac{\sigma^2}{2} \frac{z^2 g''(z)}{1-R} .
\end{equation}
Finally, we expect that there will be value matching and second-order smooth fit at the free boundary.

In analysing the problem our first goal is to solve \eqref{eq:h:gode}.  The equation can be simplified by setting $z=e^u$ and $h(u) = h(\ln z)= (\frac{\beta}{R})^R g(z)$. (If $z<0$ then we can construct a parallel argument based on $z =- e^{-u}$.)
Then $h$ solves a (second-order, non-linear) autonomous equation (with no $u$-dependence):
\begin{equation}
\label{eq:hode}
0 = \left( h - \frac{h'}{1-R} \right)^{1-1/R} - h + \left( \epsilon - \frac{\delta^2}{2} \right) h' + \frac{\delta^2}{2} h'' .
\end{equation}
where $\epsilon = \mu/\beta$ and $\delta^2 = \sigma^2/\beta$.
The order of this equation can be reduced by setting $\frac{dh}{du} = w(h)$ so that $\frac{d^2 h}{du^2} = h'' = w'(h) w(h)$. We find that
\begin{equation} \frac{\delta^2}{2} w(v) w'(v) -v  + \left(\epsilon -
\frac{\delta^2}{2}
\right) w(v) + \left( v - \frac{w(v)}{1-R} \right)^{1- 1/R}
= 0.
\label{eqn:wode}
\end{equation}

Various further transformations do not reduce the order of the problem, but rather simplify the problem in appearance, and add to our ability to interpret the solution. Set $W(h) = \frac{w(h)}{(1-R)h}$, let $N$ be inverse to $W$ so that $N = W^{-1}$ and finally set $n(q) = N(q)^{-1/R}(1-q)^{1-1/R}$. Then $n = n(q)$ solves
\begin{equation}
\label{eqn:node}
n'(q) = O(q,n(q))
\end{equation}
where
\begin{equation}
\label{eqn:O}
O(q,n) =
\frac{(1-R)}{R} \frac{n}{\left(1-q\right)}
-\frac{\delta^{2}}{2}\frac{\left(1-R\right)^{2}}{R}
\frac{q n}{\ell \left(q\right)-n}
 =
\frac{(1-R)}{R} \frac{n}{\left(1-q\right)} \frac{m(q)-n}{\ell(q) - n},
\end{equation}
and $m$ and $\ell$ are the quadratic functions
\begin{eqnarray}
\label{eq:mdef} m(q) & = & 1-\epsilon\left(1-R\right)q+\frac{\delta^{2}}{2}R\left(1-R\right)q^{2} \\
\label{eq:elldef} \ell(q) & = & 1 + \left( \frac{\delta^2}{2}-\epsilon \right) (1-R)q -
\frac{\delta^2}{2}(1-R)^2 q^2 =
m\left(q\right)+q\left(1-q\right) \frac{\delta^{2}}{2}\left(1-R\right).
\end{eqnarray}
It will turn out that the different solution regimes can be characterised by the quadratic $m$ and more especially, the derivatives of $m$ at 0 and 1 and the values of $m$ at 1 and at the turning point. To this end let $q_M = \frac{\epsilon}{\delta^2 R}$ be the location of the turning point, and let $m_M= m(q_M)$ be the value of $m$ at the turning point.

The advantage of switching to $n$ becomes apparent when we consider the solution outside the no-transaction region. For $z \leq z_*$, $g(z)= (\frac{R}{\beta})^R A_* (1 + (1 + \lambda)z)^{1-R}$ for $A_*$ to be determined. Then using the same transformations we find
$h(u) = (\frac{\beta}{R})^R g(e^u) = A_* (1 + (1 + \lambda)e^u)^{1-R}$ and
\begin{equation}
\label{eq:wode}
 w(h) = \frac{dh}{du} = (1-R) h \frac{(1+ \lambda)e^u}{1 + (1+ \lambda)e^u} = (1-R)h \frac{ (h/A_*)^{1/(1-R)} - 1}{(h/A_*)^{1/(1-R)}},
\end{equation}
so that $W(h) = 1 - (A_*/h)^{1/(1-R)}$, $N(q)=A_*(1-q)^{-(1-R)}$ and $n(q) = A_*^{-1/R}$ which is a constant. Similarly, on $z \geq z^*$ we have $n(q) = (A^*)^{-1/R}$.

Second order smooth fit of $g$ corresponds to first order smooth fit of $w$ (and $W$, $N$ and $n$). Hence we are looking for a solution $n$ and free boundaries $q_*$ and $q^*$ such that $n \in C^1$. Thus we require $n'=0$ at $q = q_*$ and $q= q^*$. However, the places in $(q,n)$ space where $n' = 0$ are exactly the points on the curve $(q, m(q))$. 
Hence, candidate solutions for the value function can be expressed in terms of solutions $n$ to \eqref{eqn:node} with boundary conditions $n(q_*) = m(q_*)$, $n(q^*)=m(q^*)$ for boundary points $\{q_*,q^*\}$ to be determined. Typically, there is a family of solutions to this problem, parameterised by the left-hand-endpoint $r$. Write $n_r = ( n_r(q) )_{q \geq r}$ for the solution to \eqref{eqn:node} started at the point $(r,m(r))$, and let
$\zeta(r) = \inf \{ q > r : (1-R)n_{r}(q) < (1-R)m(q) \}$. Clearly the solutions $n_r$ for different $r$ cannot cross, and $\zeta(r)$ is decreasing in $r$.

To fix ideas, suppose $0<R<1$, $0<\epsilon < \delta^2 R$ and $\epsilon^2 (1-R) < 2 \delta^2 R$. Then the quadratic $m$ is $U$-shaped with minimum at $q=q_M =\frac{\epsilon}{\delta^2 R} \in (0,1)$ and $m$ is non-negative everywhere. If $m(q) < n < \ell(q)$ then $n$ is decreasing at $q$; it can be shown that any solution $n_r$ started at $(r, m(r))$ with $0 < r <q_M$ lies strictly below the line joining $(0,m(0))$ with $(1,m(1))$.
With a little more work we can show that for $0 < r < q_M$, $\zeta(r) \in (q_M,1)$. Further, since the solutions $n_r$ cannot cross, $n_r(q)$ is decreasing in $r$ and $\zeta(r)$ is decreasing in $r$. See the first panel of Figure~\ref{fig:Genericnq1}.

Before describing how to choose the starting point $r$ corresponding to a given round-trip transaction cost $\xi$ it is useful to consider how important quantities of the value function and no-transaction region can be inferred immediately from (the correctly chosen) $n$. We have already seen that $A_* = n(q_*)^{-R}$ and $A^* = n(q^*)^{-R}$. Moreover, from \eqref{eq:wode} at $h_* = h(e^{u_*})$ where $u_* = \ln z_*$,
\[ q_* = W(h_*) = \frac{w(h_*)}{(1-R)h_*} = \frac{(1 + \lambda) z_*}{(1+(1 + \lambda) z_*)} = 1 - \frac{1}{(1+(1 + \lambda) z_*)}, \]
and similarly $q^*  = 1 -\frac{1}{1 + (1-\gamma)z^*}$.
In particular, we can infer the limits of the no-transaction region directly from the solution of the free boundary problem for $n$; $z_* = \frac{1}{1+\lambda} \frac{q_*}{1-q_*}$ and $z^* = \frac{1}{1-\gamma}\frac{q^*}{1-q^*}$.

Our objective is to solve the free-boundary problem:
\begin{quote} find $n$, $q_*$, $q^*$ such that $n$ is a nonnegative solution of \eqref{eqn:node} with boundary conditions $n(q_*) = m(q_*)$ and $n(q^*) = m(q^*)$.
\end{quote}
Typically there are many solutions to this problem and each solution corresponds to a different level of transaction costs. See the second panel of Figure~\ref{fig:Genericnq1}. The above equations provide the key to determining the solution we want. The solution corresponding to the round trip transaction cost $\xi = \frac{(1+\lambda)}{(1-\gamma)}-1 = \frac{\lambda +\gamma}{1-\gamma}$ must have
\begin{equation}
\label{eq:xicond}
\ln(1+\xi) = \ln(1+\lambda) - \ln(1-\gamma) = \ln \frac{z^*}{z_*} + \ln \left( \frac{(1-q^*)}{q^*} \frac{q_*}{(1-q_*)} \right).
\end{equation}
But $\frac{dh}{du} = w(h)$ so that $\ln \frac{z^*}{z_*} = u^* - u_* = \int_{h_*}^{h^*}\frac{dh}{w(h)}$ and then
\begin{equation}
\label{eq:xicond2}
\ln \left( \frac{z^*}{z_*} \right) = \int_{h_*}^{h^*} \frac{dh}{(1-R) h W(h)}
 = \int_{q_*}^{q^*} dq \frac{N'(q)}{(1-R)N(q) q} = \frac{\delta^2 (1-R)}{2} \int_{q_*}^{q^*} dq \frac{1}{( \ell(q) - n(q))}.
\end{equation}
Further,
\begin{equation}
\label{eq:xicond3}
\ln \left( \frac{q^*}{(1-q^*)} \right) - \ln \left( \frac{q_*}{(1-q_*)} \right) = \int_{q_*}^{q^*} \frac{dq}{q(1-q)} = \frac{\delta^2 (1-R)}{2} \int_{q_*}^{q^*} dq \left[ \frac{1}{\ell(q) - m(q)} \right] , \end{equation}
where we use $\ell(q)-m(q)= \frac{\delta^2}{2}(1-R)q(1-q)$. Hence
\begin{equation}
\label{eq:xicond4}
\ln \left(\frac{z^*}{z_*} \right) + \ln \left( \frac{(1-q^*)}{q^*} \frac{q_*}{(1-q_*)} \right) 
  =  \int_{q_*}^{q^*} dq  \frac{1}{q(1-q)} \left[ \frac{n(q)-m(q)}{\ell(q) - n(q)} \right] .
\end{equation} 

Define
\begin{equation}
 \label{eq:Lamdef}
 \Lambda(r) = \frac{\delta^2 (1-R)}{2} \int_{r}^{\zeta(r)} dq \left[ \frac{1}{\ell(q) - n_r(q)} - \frac{1}{\ell(q) - m(q)} \right] = \int_{r}^{\zeta(r)} dq  \frac{1}{q(1-q)} \frac{n_r(q)-m(q)}{\ell(q) - n_{r}(q)} \end{equation}
and set $\Sigma(q) = \exp( \Lambda(q)) - 1$.
Then, since $n_{r}(r) = m(r)$ and $n_{r}(\zeta(r)) = m(\zeta(r))$ we have from the first representation of $\Lambda(r)$,
\begin{equation}
\label{eq:Lamderiv}
 \frac{\partial \Lambda}{\partial r} =  \frac{\delta^2(1-R)}{2} \int_{r}^{\zeta(r)}
dq \frac{1}{(\ell(q) - n_{r}(q))^2}\frac{\partial n_{r}(q)}{\partial r} < 0 \end{equation}
where we use the fact that $n_{r}(q)$ is decreasing in $r$.
Further, we show in Lemma~\ref{lem:Lambda} in Section~\ref{sec:cases} below that
$\Lambda(0) = \infty$ and $\Lambda(q_M)=0$. Hence $\Sigma^{-1}$ is well defined on the domain $(0,\infty)$.
We set $q_* = \Sigma^{-1}(\xi)$ and $q^* = \zeta(q_*)$, and then $n_{q_*}$ solves \eqref{eqn:node} and \eqref{eq:xicond}.

\begin{figure}[!htbp]
	\captionsetup[subfigure]{width=0.4\textwidth}
	\centering
\subcaptionbox{Typical solutions $n_r$ together with $m$ and $\ell$.
	 \label{case1n}}{\includegraphics[scale =0.35] {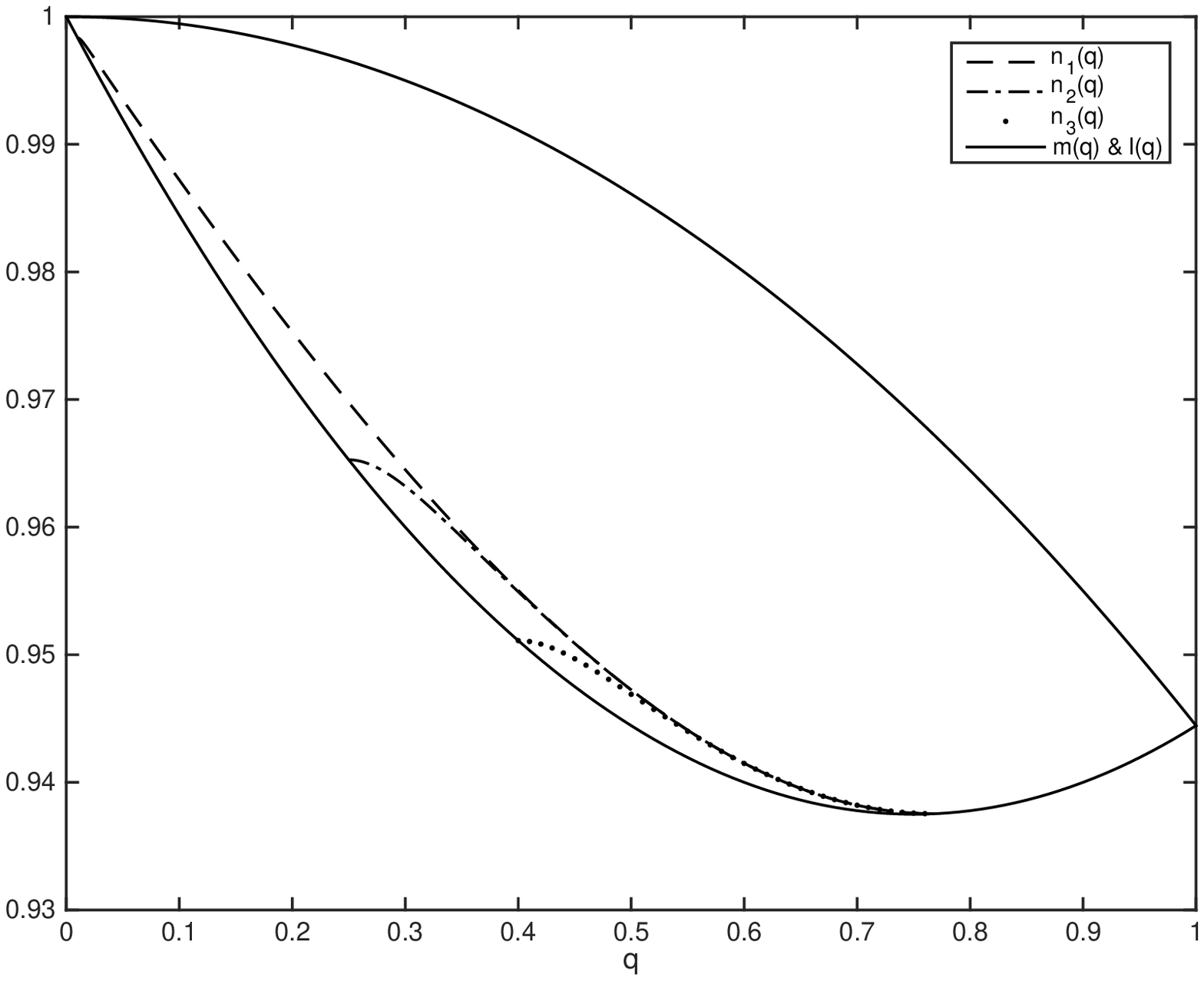}}	
\subcaptionbox{ $q_* = \Sigma^{-1}(\xi)$ and $q^* = \zeta (q_*)$
     \label{case1q}}{\includegraphics[scale =0.35] {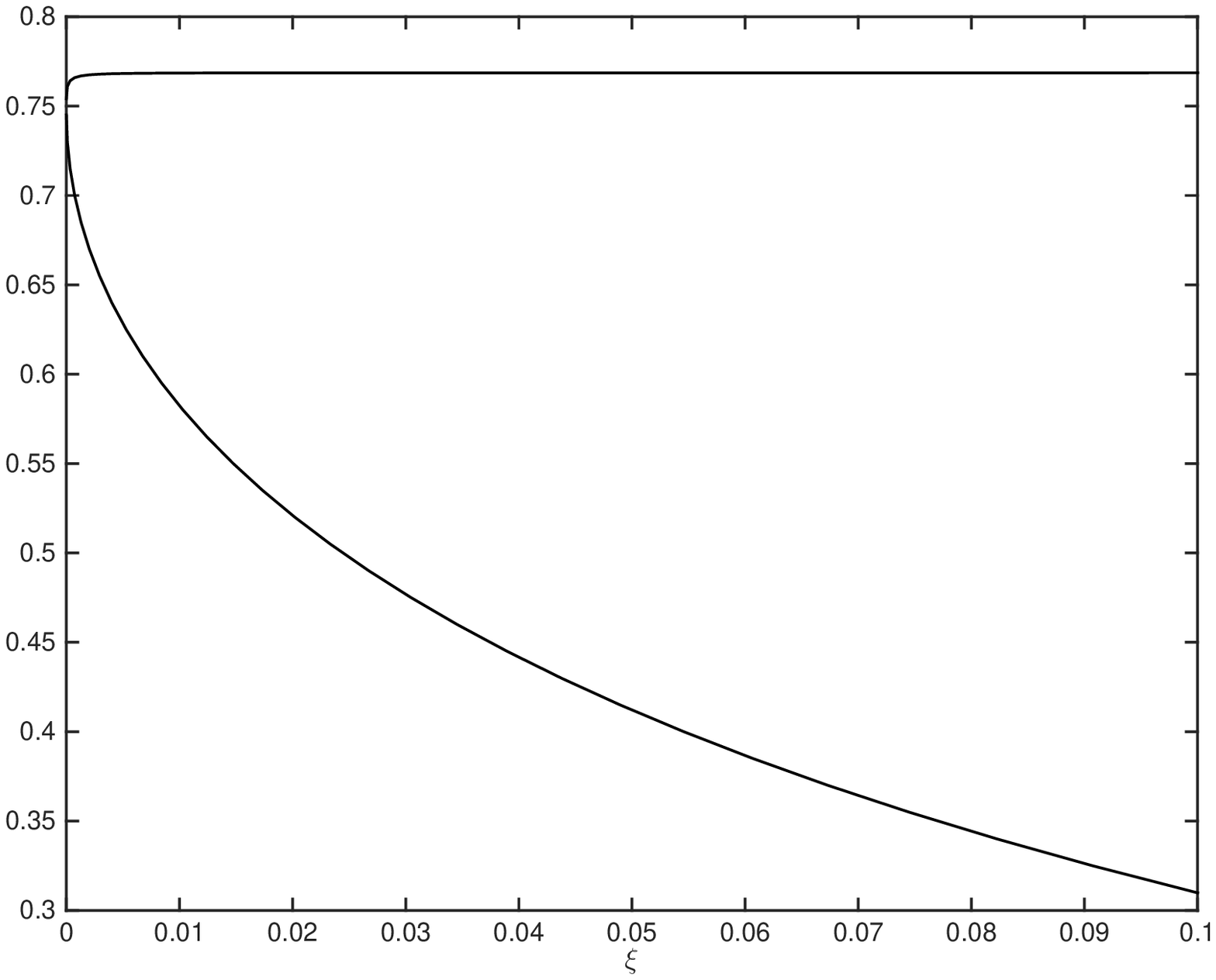}}
\caption{ {\small{
{\it Parameter values are $\epsilon = {1}/{2}$, $\delta = 1$ and $R=2/3$. Panel~\ref{case1n} shows some solutions $n_r(q)$. Panel~\ref{case1q} shows the boundaries of the no-transaction region as a function of the level of transaction cost. Note that the boundary $q^*$ corresponding to asset sales is insensitive to the level of transaction costs}.
}}}
\label{fig:Genericnq1}
\end{figure}

Our results are summarized in the following theorem, which is a special case of a more complete result given in Section~\ref{sec:verification}. At this stage we only have a candidate value function, but the verification argument that the candidate is indeed the value function is standard. Note that the candidate solution is $C^2$ throughout the solvency region.

\begin{thm}
\label{thm:special}
Suppose $R<1$ and $0 < \epsilon < \min \{ \delta^2 R , \delta \sqrt{\frac{2R}{1-R}} \}$.

Let $(n_r(\cdot))$ denote the family of solutions to \eqref{eqn:node} with initial values $n_r(r)= m(r)$. For any $\xi \in (0,\infty)$ there exists $\hat{r}$ such that $\Sigma(\hat{r}) = \xi = \frac{\lambda + \gamma}{1-\gamma}$, and
then the no-transaction wedge is $\frac{y \theta}{x} \in [z_*,z^*]$ where $z_* = \frac{1}{1+\lambda} \frac{\hat{r}}{1-\hat{r}}$ and $z^* = \frac{1}{1-\gamma}\frac{\zeta(\hat{r})}{1 - \zeta(\hat{r})}$.

Set $N(q)=n_r(q)^{-R}(1-q)^{R-1}$, $W=N^{-1}$ and $w(h) = (1-R)hW(h)$.
Set $h_* = N(\hat{r})$ and $h^*= N(\zeta(\hat{r}))$. The value function is given by \eqref{eq:Vdef} where in the no-transaction region $g(z) = (\frac{R}{\beta})^R h(e^u)$ and $u - \ln z_* = \int_{h_*}^h \frac{df}{w(f)}$; in the sale region $z>z^*$, $g(z) = h^* (1 + (1-\gamma)z)^{1-R} (1 + (1-\gamma)z^*)^{R-1}$; and in the purchase region $z<z_*$, $g(z) = h_* (1 + (1+\lambda)z)^{1-R} (1 + (1+\lambda)z_*)^{R-1}$.
\end{thm}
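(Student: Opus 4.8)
The plan is to separate the argument into (i) construction of the free boundary, i.e.\ existence of $\hat r$ with $\Sigma(\hat r)=\xi$, (ii) read-off of the no-transaction wedge and the value function from the correctly chosen solution $n_{\hat r}$, and (iii) a pointer to the standard verification. Before any of this I would record the qualitative picture the hypotheses deliver. The condition $\epsilon<\delta\sqrt{2R/(1-R)}$ is exactly the discriminant condition $\epsilon^2(1-R)<2\delta^2R$, so together with $0<\epsilon<\delta^2 R$ and $R<1$ the quadratic $m$ is $U$-shaped, strictly positive on $\R$, with turning point $q_M=\epsilon/(\delta^2R)\in(0,1)$. From \eqref{eqn:O}, since $(1-R)/R>0$ and $1-q>0$ on the relevant interval, the sign of $n_r'$ is governed by $(m(q)-n)/(\ell(q)-n)$: $n_r$ is strictly decreasing precisely when $m(q)<n<\ell(q)$ and increasing otherwise. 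This is the structure depicted in Figure~\ref{fig:Genericnq1}.

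First I would pin down the family $(n_r)$. Local existence and uniqueness for \eqref{eqn:node} give a solution through each $(r,m(r))$, and the comparison principle for the scalar ODE forbids crossing and yields monotone dependence, so $n_r(q)$ is decreasing in $r$ and hence $\zeta(r)$ is decreasing in $r$. A phase-plane argument confined to the strip $\{m(q)<n<\ell(q)\}$ shows that for $0<r<q_M$ the solution started at $(r,m(r))$ stays strictly below the chord through $(0,m(0))$ and $(1,m(1))$ and re-meets the curve $m$ at some $\zeta(r)\in(q_M,1)$; this is what makes $\Lambda(r)$ a well-defined integral over $[r,\zeta(r)]$. Differentiating the first representation in \eqref{eq:Lamdef} and using $\partial n_r/\partial r<0$ gives $\partial\Lambda/\partial r<0$ as in \eqref{eq:Lamderiv}, so $\Lambda$, and hence $\Sigma=e^{\Lambda}-1$, is strictly decreasing on $(0,q_M)$. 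Invoking Lemma~\ref{lem:Lambda} for the endpoint values $\Lambda(0)=\infty$ and $\Lambda(q_M)=0$, I conclude that $\Sigma$ is a strictly decreasing bijection from $(0,q_M)$ onto $(0,\infty)$; hence for every $\xi\in(0,\infty)$ there is a unique $\hat r=\Sigma^{-1}(\xi)\in(0,q_M)$.

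Next I would harvest the geometric and analytic output. Setting $q_*=\hat r$ and $q^*=\zeta(\hat r)$, the defining relation $\Sigma(\hat r)=\xi$ is, via \eqref{eq:xicond2}--\eqref{eq:xicond4}, exactly the matching condition \eqref{eq:xicond}, which reads $\ln(1+\xi)=\ln\frac{1+\lambda}{1-\gamma}$; thus the candidate is matched to the prescribed round-trip cost $\xi=(\lambda+\gamma)/(1-\gamma)$. Inverting the relations $q_*=1-1/(1+(1+\lambda)z_*)$ and $q^*=1-1/(1+(1-\gamma)z^*)$ established before the theorem gives the stated $z_*=\frac{1}{1+\lambda}\frac{\hat r}{1-\hat r}$ and $z^*=\frac{1}{1-\gamma}\frac{\zeta(\hat r)}{1-\zeta(\hat r)}$. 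The value function is then recovered by reversing the chain $n\mapsto N\mapsto W\mapsto w\mapsto h\mapsto g$: with $N(q)=n_{\hat r}(q)^{-R}(1-q)^{R-1}$, $W=N^{-1}$ and $w(h)=(1-R)hW(h)$, integrating $du=dh/w(h)$ recovers $h$ in the no-transaction region and hence $g$ there, while on $z>z^*$ and $z<z_*$ the power-law ansatz of \eqref{eq:wode} with matching constants $A^*=n(q^*)^{-R}$, $A_*=n(q_*)^{-R}$ gives the two closed forms. Since $q_*,q^*$ lie on the curve $m$ where $O$ vanishes, $n_{\hat r}\in C^1$, which is equivalent to second-order smooth fit of $g$, so the candidate is $C^2$ across the solvency region; that $M$ is a supermartingale for all admissible strategies and a martingale under the reflecting strategy keeping $Z\in[z_*,z^*]$ is the standard verification deferred to Section~\ref{sec:verification}.

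The main obstacle is the phase-plane analysis of the family $(n_r)$: one must show that each solution started at $(r,m(r))$ with $0<r<q_M$ remains below the chord joining $(0,m(0))$ and $(1,m(1))$ and returns to the curve $m$ at a point $\zeta(r)\in(q_M,1)$, so that $\Lambda(r)$ is a convergent integral, and that the family is monotone in $r$. Tied to this is the endpoint behaviour $\Lambda(0)=\infty$, $\Lambda(q_M)=0$ isolated in Lemma~\ref{lem:Lambda}: the divergence as $r\downarrow0$ hinges on controlling the denominator $\ell(q)-n_r(q)$ of the singular integrand as the solution is pushed toward the upper curve $\ell$, while the vanishing as $r\uparrow q_M$ reflects the collapse of $[r,\zeta(r)]$ to a point. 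By comparison, the bijectivity of $\Sigma$, the inversion to $z_*,z^*$, and the algebraic unwinding of the transformations are routine once these qualitative facts and Lemma~\ref{lem:Lambda} are in hand.
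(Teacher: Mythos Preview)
Your proposal is correct and tracks the paper's own construction closely: the paper does not give a separate proof of this theorem but assembles it from the narrative of Section~\ref{sec:sc} (the transformations $g\to h\to w\to W\to N\to n$, the identification of $q_*,q^*$ via $n=m$, and the integral identity \eqref{eq:xicond}--\eqref{eq:xicond4}) together with the endpoint lemma and the verification deferred to Section~\ref{sec:verification}. Your decomposition into (i) existence of $\hat r$, (ii) read-off of the wedge and value function, (iii) verification is exactly this.

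Two small remarks. First, the relevant endpoint lemma for this parameter regime is Lemma~\ref{lem:3.1} rather than Lemma~\ref{lem:Lambda} (the latter is the extension covering continuity at the singular point $r=1$, which does not arise here since $\zeta(r)<1$); the paper's own forward reference in Section~\ref{sec:sc} is a little loose on this point, so your citation is understandable. Second, your heuristic for why $\Lambda(r)\to\infty$ as $r\downarrow 0$ is slightly misdirected: the divergence is not driven by $n_r$ approaching $\ell$ in the denominator, but by the $1/q$ singularity in the factor $1/(q(1-q))$ combined with a lower bound $n_r(q)-m(q)\geq\hat\rho(q-r)$ near the origin (see the appendix proof). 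Since you correctly defer the result to the lemma, this does not affect the validity of your argument.
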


We close this section with a few remarks on the parametrisation of the problem, and on the form of the solution which hold both in the motivating special case of this section, and more generally.

Observe first that the assumption that the process $\Theta$ is finite variation can in fact be a conclusion, since if the agent follows a strategy which is not of bounded variation then transaction costs mean that she cannot keep her wealth process admissible.
Recall that $\xi =  \frac{1 + \lambda}{1 - \gamma} - 1 = \frac{\lambda + \gamma}{1- \gamma}$ where $\xi$ is the round-trip transaction cost. As we have seen it is $\xi$ which determines the nature of the solution to the problem, and not the individual transaction costs $\lambda$ and $\gamma$. Indeed, if we
define $\hat{Y}$ via $\hat{Y}_t = {Y_t}{(1-\gamma)}$ then \eqref{eq:XdefnewS} becomes
\begin{equation}
\label{eq:XdefnewSb}
dX_t = - C_t dt - \hat{Y}_t (1 + \xi) d \Phi_t + \hat{Y}_t d \Psi_t  .
\end{equation}
Thus, the problem with proportional transaction costs on both purchases and sales reduces to a problem with transaction cost $\xi$ on purchases only. Conversely, if we set $\tilde{Y_t} = Y_t(1+\lambda)$, then we have a problem in which the wealth process satisfies $dX_t = - C_t dt - \tilde{Y}_t d \Phi_t + \frac{1}{1+\xi} \tilde{Y}_t d \Psi_t$
corresponding to a problem with transaction cost $\frac{\xi}{(1 + \xi)}$ on sales only.

We have not included an interest rate term in the specification of the problem. However, the case of a constant interest rate can be reduced to the setting we describe by a simple switch to discounted units for both asset prices and consumption.

The quantities $q_M$ and $m_M$ have been defined as the location of the turning point of $m$ and the value of $m$ at that point. However, they have a direct interpretation in terms of the solution of the Merton problem with zero transaction costs. In the Merton problem with zero transaction costs the optimal strategy is to invest such that the ratio of wealth in the risky asset to total wealth (ie $\frac{\Theta_t Y_t}{X_t + \Theta_t Y_t}$) is kept equal to the constant $q_M$. Moreover, the value function under zero transaction costs $V=V(x,y,\theta,0)$ is equal to $\frac{(x+y\theta)^{1-R}}{1-R} ( \frac{\beta}{R})^R m_M^{-R}$.
In our present setting with $R<1$ it is clear that $n(q_*)=m(q_*) > n(q^*)=m(q^*) > m(q_M) =m_M$ and $A_* = n(q_*)^{-R} < A^* = n(q^*)^{-R} < m(q_M)^{-R} = m_M^{-R}$. Thus, the value function with non-zero transaction costs is bounded above by the value function with zero transaction costs.
Recall that in this section we are working under the hypothesis that $0<q_M<1$ and then $0 < q_* < q_M < q^* < 1$. If we set $z_M= \frac{q_M}{1-q_M}$ (with similar definitions for $z_*$ and $z^*$) then for $\xi>0$,
\[ z_* = \frac{1}{1+\lambda} \frac{q_*}{1-q_*} < \frac{q_*}{1-q_*} < \frac{q_M}{1-q_M} = z_M < \frac{q^*}{1-q^*} < \frac{1}{1-\gamma} \frac{q^*}{1-q^*} = z^*\] and the Merton line lies strictly within the no-transaction wedge. Note that this is not true in general, and there are parameter combinations for which the Merton line lies outside the no-transaction wedge.

\section{The general case}
\label{sec:gc}
In this section we want to develop and expand on the analysis of Section~\ref{sec:sc}, to include all parameter combinations, and not just those for which the
the no-transaction region is a wedge in the first quadrant.

One issue with the analysis of the previous section is that for some parameter values it may be that the line $(0,y \theta)$ lies inside the no-transaction wedge in $(x, y \theta)$-space. At $x=0$, $z = \frac{y \theta}{x}$ is ill-defined. For this reason in this section we consider an alternative parametrisation in which the key variable is the ratio of wealth in the risky asset to paper wealth, where paper wealth is calculated as the value of the portfolio under an assumption that holdings of the liquid asset can be sold or purchased at zero transaction cost. Note that the solvency requirement implies that paper wealth is non-negative.

Define $P = (P_t)_{t \geq 0}$ by $P_{t}= \frac{Y_t \Theta_t}{X_t + Y_t \Theta_t}$. The solvency requirement can be expressed as $- \frac{1}{\lambda} \leq P_t \leq \frac{1}{\gamma}$. Write
\begin{equation}
\label{eq:Vdef2} V(x,y,\theta,t) = e^{-\beta t} \frac{(x+ y \theta)^{1-R}}{1-R} \left(\frac{R}{\beta}\right)^R G \left( \frac{y \theta}{x+y \theta} \right) ,
\end{equation}
and consider
\begin{equation}
\label{eq:Mdef}
  M_t := \int_0^t e^{-\beta s} \frac{C_s^{1-R}}{1-R} ds + V(X_t, Y_t, \Theta_t, t) .
  \end{equation}
Applying It\^{o}'s formula we find (subscripts $x,y,\theta$ denote space derivatives and $\dot{V}$ denotes a time derivative)
\begin{eqnarray*} dM_t & = & \frac{C_t^{1-R}}{1-R} e^{-\beta t}dt + \dot{V} dt + V_x dX_t + V_y dY_t + V_\theta d\Theta_t + \frac{1}{2} V_{yy} d[Y]_t  \\
 & = & \left\{ \frac{C_t^{1-R}}{1-R} e^{-\beta t} - V_x  C_t \right\}
 dt  +[V_\theta - V_x(1+\lambda)Y_t] d\Phi_t + [V_x(1-\gamma)Y_t - V_\theta] d \Psi_t \\
     && \hspace{5mm} + e^{-\beta t} \frac{(X_t + Y_t \theta_t)^{1-R}}{1-R} \left( \frac{R}{\beta} \right)^R LG(P_t) dt + \sigma Y_t V_y dB_t
\end{eqnarray*}
where for $H=H(p)$
\[ LH = - \beta H + \mu \left[ (1-R) p H + p(1-p)H' \right]
      + \frac{\sigma^2}{2} \left[ -R(1-R) p^2 H - 2R p^2(1-p)H' + p^2 (1-p)^2 H'' \right]. \]

Since $M$ is a martingale under the optimal strategy and a super-martingale otherwise,
maximising over $C_t$ we find $C_t = e^{-\frac{\beta}{R}t} V_x^{-1/R} = \frac{\beta}{R}(X_t + Y_t \theta_t) \left[ G(P_t) - \frac{P_t G'(P_t)}{1-R} \right]^{-1/R}$. Since consumption is non-negative we must have that $G(p) > \frac{p G'(p)}{1-R}$.
Further, if $d \Phi_t>0$ then $V_\theta = (1+\lambda)yV_x$. Hence, if $d \Phi_t>0$ then $(1-R)G(P_t) + (1-P_t)G'(P_t) = (1 + \lambda)[(1-R)G(P_t) - P_tG'(P_t)]$ or equivalently
\[ -\lambda(1-R)G(P_t) + ( 1 + \lambda P_t) G'(P_t) = 0. \]
Similarly, if $d \Psi_t > 0$ then
\[ \gamma(1-R)G(P_t) + ( 1 - \gamma P_t) G'(P_t) = 0. \]
It follows that for $-\frac{1}{\lambda} \leq p \leq p_*$ we have $G(p) = \left( \frac{1+ \lambda p}{1 + \lambda p_*} \right)^{1-R} G(p_*)$ and for $p^* \leq p \leq \frac{1}{\gamma}$ we have
$G(p) = \left( \frac{1 - \gamma p}{1 -\gamma p^*} \right)^{1-R} G(p^*)$. (Alternatively we can derive the form of $G$ outside the no-transaction wedge by arguing that the optimal strategy includes an immediate transaction to move the wealth portfolio to the boundary of the no-transaction wedge, as in the previous section.)

Substituting for the optimal consumption we find that in the continuation region, $p_* < p < p^*$,
\begin{eqnarray}
0  & = & \beta \left[ G(p ) - \frac{p G'(p)}{1-R} \right]^{1-1/R} + LG(p) \nonumber \\
\label{eqn:Gode}   & = & \beta \left[ G(p ) - \frac{p G'(p)}{1-R} \right]^{1-1/R} - \beta{G}(p) + \mu \left[ (1-R) p G(p) + p(1-p)G'(p) \right] \\
 \nonumber && \hspace{10mm} + \frac{\sigma^2}{2} \left[ -R(1-R) p^2 G(p) - 2R p^2(1-p)G'(p) + p^2 (1-p)^2 G''(p) \right].
\end{eqnarray}
Now we can see the merit of the factor $\left(\frac{R}{\beta}\right)^R$ in the definition of $V$: we can divide through \eqref{eqn:Gode} by $\beta$ to reduce the problem to one expressed in the dimensionless quantities $\epsilon = \frac{\mu}{\beta}$ and $\delta^2 = \frac{\sigma^2}{\beta}$. (This completes the parameter reduction; the original parameters $\mu$, $\sigma$, $\beta$, $R$, $\lambda$, $\gamma$ have been replaced by $\epsilon$, $\delta$, $R$ and $\xi$.)

The parametrisation in terms of $P$ allows us to consider problems in which the no-transaction wedge lies outside the first quadrant. However, it has come at some expense: \eqref{eqn:Gode} appears considerably more complicated than \eqref{eq:h:gode} or \eqref{eq:hode}. However, inspired by the analysis of Section~\ref{sec:sc} we can make a transformation in which we recover the same first-order differential equation.

Set $h(p) = \sgn(p(1-p)) |1-p|^{R-1} G(p)$ and define $w(h) = p(1-p) \frac{dh}{dp}$. (The factor $p(1-p)$ is motivated by the identity $p = \frac{z}{1+z} = \frac{e^u}{1+e^u}$ whence $\frac{dp}{du} = p(1-p)$.) Then away from 0 and 1,
\[ \frac{dh}{dp} = \sgn(p(1-p)) |1-p|^{R-1} \left[ G'(p) + (1-R)\frac{G(p)}{1-p} \right] \]
and
\begin{equation}
\label{eq:wh}
 w(h) = \sgn(p(1-p)) |1-p|^{R-1} \left[ p(1-p)G'(p) + (1-R) p G(p) \right].
 \end{equation}
Moreover,
\[ w(h) \frac{d}{dh} w(h) = p(1-p)\frac{dh}{dp} \frac{d}{dh} w(h) = p(1-p)\frac{d}{dp} w(h)  \]
and on differentiating \eqref{eq:wh} we find for $p \notin \{0,1\}$
\begin{eqnarray*}
\lefteqn{ w(h) w'(h) } \\
& = & |1-p|^{R-1} \sgn(p(1-p)) \left[ p^2(1-p)^2 G''(p) + p(1-p)(1-2Rp)G'(p) + (1-R)p(1-Rp)G(p) \right]. \end{eqnarray*}
Then
\[ \left[ -R(1-R) p^2 G(p) - 2R p^2(1-p)G'(p) + p^2 (1-p)^2 G''(p) \right] = w(h)[w'(h)-1]|1-p|^{1-R}\sgn(p(1-p)). \]
Also using \eqref{eq:wh} we have
\[ G'(p) = \frac{w(h)}{|p| |1-p|^R} - (1-R) \frac{G(p)}{1-p} \]
and it follows that 
\[ G(p) - \frac{p G'(p)}{1-R} = |1-p|^{-R} \sgn(p) h(p) \left( 1 - \frac{w(h)}{(1-R)h} \right). \]
Since consumption must be non-negative this expression must be positive so we can write it as $G(p) - \frac{p G'(p)}{1-R} = |1-p|^{-R} |h| | 1 - \frac{w(h)}{(1-R)h}|$ and then
\[ \left( G(p) - \frac{p G'(p)}{1-R} \right)^{1-1/R} = |1-p|^{1-R} |h|^{1-1/R} \left| 1 - \frac{w(h)}{(1-R)h}\right|^{1-1/R} .  \]
Cancelling factors of $|1-p|^{1-R}$ and dividing by $\sgn(p(1-p)) = \sgn(h)$, Equation \eqref{eqn:Gode} becomes
\[ 0 =  h |h|^{-1/R} \left| 1 - \frac{w(h)}{(1-R)h} \right|^{1-1/R} - h + \epsilon w(h) + \frac{\delta^2}{2} w(h) \left[ w'(h) - 1 \right] , \]
and with $w(h) = (1-R)h W(h)$,
\[ \frac{\delta^2}{2}(1-R)^2 h W'(h) W(h) = - |h|^{-1/R} \left| 1 - W(h) \right|^{1-1/R} + \ell(W(h)). \]
Then setting $N= W^{-1}$ we find
\[ \frac{1}{N(q)} \frac{dN(q)}{dq} = \frac{\delta^2}{2}(1-R)^2 \frac{q}{\ell(q) - |N(q)|^{-1/R} |1-q|^{1-1/R}} .\]
Finally set $n(q) = |N(q)|^{-1/R} |1-q|^{1-1/R}$. Then $n>0$ and
\[ \frac{n'(q)}{n(q)} = \frac{1-R}{R(1-q)} - \frac{1}{R} \frac{N'(q)}{N(q)}. \]
In particular, $n$ solves $n' = O(q,n)$ where $O$ is as given by \eqref{eqn:O} for all values of $q \in [q_*,q^*]$ (except perhaps at the singular points $q=0$ and $q=1$).

Consider now the boundary conditions. For $-\frac{1}{\lambda} \leq p \leq p_*$ we have $G(p) = A_*(1+\lambda p)^{1-R}$. Then $h(p) = \sgn(p(1-p)) |1-p|^{R-1} A_*(1+\lambda p)^{1-R}$ and
\[ h'(p)= (1-R) h(p) \left[ \frac{1}{1-p} + \frac{\lambda}{1+\lambda p} \right] = (1-R) h(p) \frac{1+\lambda}{(1-p)(1+\lambda p)}. \]
It follows that $W(h) = \frac{(1+\lambda)p}{(1+\lambda p)}$; then $|1 - W(h)| = \frac{|1-p|}{1 + \lambda p} = (\frac{A_*}{|h|})^{1/(1-R)}$.
Writing $q=W(h)$ and $h = N(q)$ for $N=W^{-1}$ we have
\[ n(q) = |N(q)|^{-1/R} |1-q|^{1-1/R} = A_*^{-1/R}. \]
Note that $q = W(h) = \frac{(1+\lambda)p}{(1+\lambda p)}$ can be rewritten as
\begin{equation}
\label{eq:qpTR}
\frac{q}{1-q} = (1+ \lambda) \frac{p}{1-p}
\end{equation}
which is valid for $-\frac{1}{\lambda}<  p \leq p_*$ or equivalently
$- \infty < q < q_* = \frac{(1+\lambda)p_*}{(1+\lambda p_*)}$. A similar analysis gives
$n(q)= (A^*)^{-1/R}$ for $q \in [q^*,\infty)$ where $q^* = \frac{(1-\gamma)p^*}{(1-\gamma p^*)}$.
Thus, the condition of continuity of $n'$ at the free boundaries is equivalent to $n'=0$, which in turn means that candidate locations of the boundary can be identified with $O(q,n(q))=0$ or equivalently $n(q)=m(q)$.

Note that $q>1$ is equivalent to $p>1$ and that each of these conditions corresponds to the case of leverage (where the agent borrows to finance the position in the risky asset). Similarly $q<0$ is equivalent to $p<0$. These conditions corresponds to a short position in the risky asset.

From \eqref{eq:qpTR} at $(q_*,p_*)$ and the similar condition $\frac{q}{1-q} = (1 - \gamma) \frac{p}{1-p}$ at $(q^*,p^*)$ we have
\[ 1 + \zeta = \frac{1+\lambda}{1 -\gamma} = \frac{p^*}{1-p^*} \frac{1-p_*}{p_*} \frac{q_*}{1-q_*} \frac{1-q^*}{q^*} \]
and hence
\begin{equation}
\label{eq:4int}
\ln(1+\zeta) = \int_{p_*}^{p^*} \frac{dp}{p(1-p)} - \int_{q_*}^{q^*} \frac{dq}{q(1-q)} = \int_{h_*}^{h^*} \frac{dh}{w(h)} - \int_{q_*}^{q^*} \frac{dq}{q(1-q)} . \end{equation}
Following the same steps as in \eqref{eq:xicond}-\eqref{eq:xicond4} we conclude that the solution we want must satisfy
\begin{equation}
\label{eq:xidefG} \ln(1+\zeta) = \int_{q_*}^{q^*} dq  \frac{1}{q(1-q)} \frac{n(q)-m(q)}{\ell(q) - n(q)}. \end{equation}
Note that if $q_*<1<q^*$ then each of the integrals in \eqref{eq:4int} is over a domain which includes a singularity and hence the integral is not well-defined.
But, the integral in \eqref{eq:xidefG} is well defined since as we shall show $n(1)=m(1)$ and $n'(1)=m'(1)$, so that the integrand in \eqref{eq:xidefG} may be made bounded and
continuous at $q=1$.

The programme for constructing the value function is as before. Construct a family of solutions $n_r(\cdot)$ to $n'=O(q,n)$ parameterised by the initial value $n_r(r)=m(r)$. From this family choose the solution for which \eqref{eq:xidefG} is satisfied and let $N$, $W$ and $w$ be defined from the resulting $n_r$.
Then the candidate value function can be obtained by integrating $\frac{1}{w(h)}$ over the no-transaction region.

\section{Parameter regimes and possible behaviours}
\label{sec:cases}

We have shown above that the problem of constructing a candidate solution for the value function can be reduced to constructing a solution to a first order ordinary differential equation which starts and ends on a simple curve.

There are many cases to consider, each corresponding to different parameter regimes. However, the key point which we wish to emphasise is that the different cases can be distinguished by considering the behaviour of the function $m$ which is a simple quadratic.

We list ten cases. The different cases depend on the signs of the quantities $R-1$, $(1-R)m'(0)$, $m(1)$, $(1-R)m'(1)$ and $m_M$. Not all of the $2^5=32$ combinations are possible, and not all lead to different behaviours. (For instance, if $R>1$ then necessarily $m_M \geq 1 >0$; if $R>1$ and $m'(0)>0$, then the behaviour of the solution does not depend on the sign of $m(1)$ or the sign of $m'(1)$.)  In general, we do not analyse in detail the boundary cases, such as $R=1$ (which can be investigated using similar techniques, but would require a separate analysis) or $m'(0)=0$ etc (which can be understood as an appropriate limiting case). This is not because these cases are in any way difficult, but rather that it is simple to decide what should happen from the arguments we give below, and they bring no new insights.

Compared with the analysis in Section~\ref{sec:sc} the new cases bring new phenomena. First, we may find that the problem is ill-posed. Second, the problem may be ill-posed for low transaction costs, but have a finite solution for higher transaction costs. Third, the no-transaction wedge may lie in the second or fourth quadrants of $(x,y \theta)$-space, or may intersect both the first and second quadrants. In this case the find an ex-ante remarkable phenomena --- the boundary of the no-transaction region at which sales of the risky asset occur does not change as the level of transaction cost on purchases changes. (Ex-post, there is a simple explanation). The phenomena and associated cases are listed in Table~\ref{tab:list3}.

\begin{table}[htbp]
\begin{center}
\begin{tabular}{|l|c|c|c|c|}
\hline
                   &  & \multicolumn{3}{c|}{Location of NT wedge} \\ \cline{3-5}
                   & R & 1st Quadrant & 1st \& 2nd Quadrant  & 4th Quadrant   \\ \hline
 Unconditionally well-posed & $R < 1$ & {\em 1AbIIii} & {\em 1AbIii} & {\em 1Bii } \\ \cline{2-5}
                            & $R > 1$ & {\em 2AII} & {\em 2AI } & {\em 2B}  \\  \hline
 Conditionally well-posed   & $R < 1$ & {\em 1AbIIi} & {\em 1AbIi} & {\em 1Bi}  \\ \hline \cline{3-5}
 Unconditionally ill-posed  & $R < 1$ & \multicolumn{3}{c|}{ {\em 1Aa} } \\
\hline
\end{tabular}
\caption{The different cases, arranged phenomenologically}
\label{tab:list3} \end{center}
\end{table}

We distinguish the cases as follows, based on the behaviour of $m$. Call $R<1$ {\em Case 1} and $R>1$ {\em Case 2}. Call $(1-R)m'(0)<0$ {\em Case A} and $(1-R)m'(0)>0$ {\em Case B}. In {\em Case 1A} only, call $m(1)<0$ {\em Case a} and $m(1)>0$ {\em Case $b$}. (Note that in {\em Case 1B}, we must have $m(1)>m(0)=1>0$, so the sign of $m(1)$ is determined; in {\em Case 2}, it turns out that the sign of $m(1)$ is not important.) In {\em Cases 1Ab} and {\em 2A}, call $(1-R)m'(1)<0$ {\em Case I} and call $(1-R)m'(1)>0$ {\em Case II}. Finally, in {\em Cases 1AbI}, {\em 1AbII} and {\em 1B} call $m_M<0$ {\em Case i} and $m_M>0$ {\em Case ii}.

Table~\ref{tab:list} lists all the different cases. Where there is no entry in a cell, it means that the same analysis covers both possible cases for that value.
(So, for example, in {\em Case 1Aa}, the form of the solution does not depend on the sign of $m'(1)$.) Where the entry in the cell is +ve or -ve it means that the sign of the cell is determined by the signs of previous cells in the row. (So, for example, if $R>1$ then $m_M>0$ necessarily.)
It follows from exhaustion that all possible parameter combinations are included in one of the rows, except the boundary cases for which $\epsilon \in \{ - \delta \sqrt{\frac{2R}{1-R}}, 0, \delta^2 R, \delta \sqrt{\frac{2R}{1-R}}, \frac{1}{1-R} + \frac{\delta^2 R}{2} \}$. Note that necessarily $\delta \sqrt{\frac{2R}{1-R}} \leq \frac{1}{1-R} + \frac{\delta^2 R}{2}$, but that any ordering between $\delta^2 R$ and these two quantities is possible.

\begin{table}[htbp]
\begin{center}
\begin{tabular}{|l|c|cccc|c|c|c|c|}
\hline
                   &  R     & $m'(0)$   & $m(1)$   & $m'(1)$   &  $m_M$  & Range of values of $\epsilon$   & $\epsilon$ & $\delta$ & R  \\ \hline
{\em Case 1AbIIii} & $<1$  & $<0$ & $>0$ & $>0$ & $>0$ & $0 < \epsilon < \min \{ \delta^2 R , \delta \sqrt{\frac{2R}{1-R}} \}$ &  1/2 & 1 & 2/3 \\
{\em Case 1Aa}     & $<1$  & $<0$ & $<0$ &      & -ve  & $\frac{1}{1-R} + \frac{\delta^2 R}{2} < \epsilon$ & 35/2 & 6 & 2/3 \\
{\em Case 1AbIIi}  & $<1$  & $<0$ & $>0$ & $>0$ & $<0$ & $\delta \sqrt{\frac{2R}{1-R}} < \epsilon < \min \{ \delta^2 R , \frac{1}{1-R} + \frac{\delta^2 R}{2} \}$ & 27/2 & 6 & 2/3 \\
{\em Case 2AII}    & $>1$  & $>0$ &      & $<0$ & +ve  & $0 < \epsilon < \delta^2 R$ & 1 & 1 & 2  \\
{\em Case 1AbIii}  & $<1$  & $<0$ & $>0$ & $<0$ & $>0$ & $\delta^2 R < \epsilon <  \delta \sqrt{\frac{2R}{1-R}}$ & 3/2 & 1 & 2/3 \\
{\em Case 1AbIi}   & $<1$  & $<0$ & $>0$ & $<0$ & $<0$ & $\max \{ \delta^2 R , \delta \sqrt{\frac{2R}{1-R}} \} < \epsilon <  \frac{1}{1-R} + \frac{\delta^2 R}{2} \}$ & 13/4 & 3/2 & 2/3 \\
{\em Case 2AI}     & $>1$  & $>0$ &      & $>0$ & +ve  & $\delta^2 R < \epsilon$  & 5/2 & 1 & 2\\
{\em Case 1Bii}    & $<1$  & $>0$ &      &      & $>0$ & $- \delta \sqrt{\frac{2R}{1-R}} < \epsilon < 0$  & -1 & 1 & 2/3 \\
{\em Case 1Bi}     & $<1$  & $>0$ &      &      & $<0$ & $\epsilon < - \delta \sqrt{\frac{2R}{1-R}}$  & - 3 & 1 & 2/3 \\
{\em Case 2B}      & $>1$  & $<0$ &      &      & +ve  & $\epsilon < 0$ & -1 & 1 & 2 \\
\hline
\end{tabular}
\caption{The different cases, and associated parameter values. The last three columns refer to parameter values used in numerical examples.}
\label{tab:list} \end{center}
\end{table}

The distinction between {\em Cases A} and {\em B} is that in the former case $Y$ is a depreciating asset, in the latter case $Y$ has positive drift. Then in {\em Case A} the no-transaction wedge is contained in the upper-half-plane, and in {\em Case B} it is a subset of the fourth quadrant.

The distinction between {\em Cases a} and {\em b} is that in {\em Case a} the problem is ill-posed and the value function is infinite. Note that the problem can only be ill-posed if $R<1$.

The distinction between {\em Cases I} and {\em II} is that in the former case the solution $n$ may pass through the singular point $(1, m(1))$. Then in {\em Case I} the no-transaction wedge intersects the second quadrant (and may be a strict subset of the second quadrant) whereas in {\em Case II} the no-transaction wedge is contained in the first quadrant. In {\em Case I}, for large enough $\xi$, the value of $q^*$ does not depend on the round-trip transaction cost $\xi$.

Finally, the distinction between {\em Cases i} and {\em ii} is that in the latter case the problem has a solution for all transaction costs; in {\em Case i} we have $m_M<0$ and then the problem is ill-posed if the round-trip transaction cost is sufficiently small.

The descriptions of the various cases should be studied in parallel with
Figures~\ref{fig:Genericnq2}---\ref{fig:Genericnq10}
which provide a pictorial representation of the different cases. Proofs are given in the next section; in this section we describe and characterise the solution in terms of the behaviour of the quadratic $m$.


\subsection{Case 1AbIIii: $R<1$, $0 <\epsilon< \min \{ \delta^2 R , \delta \sqrt{\frac{2 R}{1-R}} \}$. }
We begin with considering Case 1AbIIii. This is the case we considered in Section~\ref{sec:sc}, and is the simplest case. We have that $m$ has a minimum at $q_M = \frac{\epsilon}{\delta^2 R} \in (0,1)$ and $m_M>0$.

The following result is completely intuitive given the definitions of $\zeta$ and $\Lambda$. A proof is given in the Appendix.

\begin{lem}
\label{lem:3.1}
For all starting points $r \in (0,q_M)$ we have $\zeta(r) \in (q_M, 1)$. Also $\zeta(q_M)=q_M$.

$\Lambda(q_M) = 0$, $\lim_{r \downarrow 0} \Lambda(r)=\infty$ and $\Lambda$ is continuous and strictly increasing.
\end{lem}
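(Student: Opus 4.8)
The plan is to read everything off the geometry of the ODE \eqref{eqn:node} in the $(q,n)$-plane, treating the two halves of the statement in turn. In this regime $m$ is convex with minimum $m_M=m(q_M)>0$ at $q_M\in(0,1)$; by \eqref{eq:elldef}, $\ell-m=\frac{\delta^2}{2}(1-R)q(1-q)>0$ on $(0,1)$ with $\ell(0)=m(0)=1$ and $\ell(1)=m(1)$, so $m$ and $\ell$ bound a lens-shaped region pinched at $q=0$ and $q=1$; and by \eqref{eqn:O}, $O(q,n)<0$ exactly when $m(q)<n<\ell(q)$ (for $0<q<1,\ n>0$). First record the soft facts about one trajectory. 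Since $n_r(r)=m(r)$ we have $O(r,m(r))=0$, so $n_r'(r)=0$ and $n_r$ is tangent to $m$ at the start; writing $D=n_r-m$ gives $D(r)=0$ and $D'(r)=-m'(r)>0$ for $r<q_M$, so $n_r$ enters the lens immediately. The curve $\ell$ is a strict upper barrier, since $n\uparrow\ell$ forces $O\to-\infty<\ell'$; hence $m<n_r<\ell$ while $D>0$, and there $n_r'=O<0$, i.e. $n_r$ is strictly decreasing in the lens.

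Next the claims about $\zeta$. At any first return $D(\zeta(r))=0$, and since $n_r=m$ there gives $n_r'=0$, also $D'(\zeta(r))=-m'(\zeta(r))$; a downward crossing needs $D'\le0$, so $m'(\zeta(r))\ge0$ and $\zeta(r)\ge q_M$. For $r=q_M$ one checks $O(q_M,m_M)=0$ and $O_q(q_M,m_M)=0$ (both the factor $m(q_M)-n$ and, through $m'(q_M)=0$, its $q$-derivative vanish), whence $n_{q_M}''(q_M)=0$ while $D''(q_M)=-m''(q_M)<0$; thus $D<0$ just past $q_M$ and $\zeta(q_M)=q_M$. The same second-order computation excludes $\zeta(r)=q_M$ for $r<q_M$ (it would force $D<0$ on both sides of $q_M$, contradicting $D>0$ on $(r,q_M)$), giving $\zeta(r)>q_M$.

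The upper bound $\zeta(r)<1$ is the crux. On $(q_M,1)$, $m$ is strictly increasing and, while trapped in the lens, $n_r$ is strictly decreasing, so $D$ is strictly decreasing there; hence either $n_r$ meets $m$ transversally (note $D'=-m'<0$ at such a point) at some $\zeta(r)<1$, or $n_r$ stays in the closing lens up to the pinch $(1,m(1))$. I would exclude the second alternative by a local analysis of the singular point $(1,m(1))$: exactly one trajectory enters it from inside the lens (the separatrix), and this separatrix is the locally uniform limit of $n_r$ as $r\downarrow0$. A trajectory $n_r$ with $r\in(0,q_M)$ is distinct from it (trajectories cannot cross, and $n_r$ starts strictly below it), so by uniqueness of the entering trajectory $n_r$ cannot also reach the corner and must therefore leave the lens through $m$ at a finite $\zeta(r)<1$. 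The main obstacle is precisely this singular-point analysis: the soft squeeze $m<n_r<\ell\to m(1)$ alone is consistent with both alternatives, so uniqueness of the separatrix and its identification with the $r\downarrow0$ limit is what does the work.

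Finally the statements on $\Lambda$. The integrand in the first representation of \eqref{eq:Lamdef} vanishes at both endpoints (where $n_r=m$) and $\zeta$ is a transversal, hence continuous, crossing, so $\Lambda$ is continuous and $\zeta(q_M)=q_M$ yields $\Lambda(q_M)=0$. Strict monotonicity is exactly \eqref{eq:Lamderiv}: the boundary terms drop out on differentiating, leaving $\partial_r\Lambda=\frac{\delta^2(1-R)}{2}\int_r^{\zeta(r)}(\ell-n_r)^{-2}\,\partial_r n_r\,dq$, which is strictly negative because $n_r$ is strictly decreasing in $r$; with the endpoint values this shows $\Lambda$ decreases strictly from $\infty$ at $0^+$ to $0$ at $q_M$ (so $\Sigma=e^{\Lambda}-1$ is a decreasing bijection onto $(0,\infty)$, as needed for $\Sigma^{-1}$). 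For $\lim_{r\downarrow0}\Lambda(r)=\infty$ I would use the left pinch at $q=0$, where $\frac{1}{q(1-q)}\geq \frac1q$: by local-uniform convergence $n_r\to n_0$ (the separatrix) and the fact that $n_0$ leaves $(0,1)$ strictly into the lens interior, the ratio $\frac{n_r-m}{\ell-n_r}$ is bounded below by a fixed positive constant on $[\eta,q_0]$ for all small $r$, so $\Lambda(r)\geq c\int_\eta^{q_0}\frac{dq}{q(1-q)}$; letting $\eta\downarrow0$ forces $\Lambda(r)\to\infty$. Making this uniform lower bound near $q=0$ precise is the second technical point, and is again a statement about the trajectory near a singular pinch.
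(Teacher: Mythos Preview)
Your geometric setup and the arguments for $\zeta(r)\ge q_M$, $\zeta(q_M)=q_M$, $\Lambda(q_M)=0$, and the strict monotonicity of $\Lambda$ via \eqref{eq:Lamderiv} are all fine and match the paper. The two places you flag as ``technical points'' are exactly where the work lies, but your proposed resolution of the first one is based on a false picture of the singular point.

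\textbf{The bound $\zeta(r)<1$.} Your separatrix argument does not apply in this regime. We are in Case~II, i.e.\ $m'(1)>0$, and here \emph{no} trajectory from inside the lens reaches the corner $(1,m(1))$; in particular $n_0$ is not a separatrix but itself exits through $m$ at some $\zeta(0)<1$, which is precisely what the paper asserts just after Lemma~\ref{lem:3.1}. (It is in Case~I, $m'(1)<0$, that every $n_r$ with $r<1$ funnels into the corner---see Lemma~\ref{lem:singular}---so the behaviour at the singular point flips with the sign of $m'(1)$.) The paper's argument is direct and avoids any classification of separatrices: split on the sign of $\ell'(1)$. If $\ell'(1)\ge 0$ one checks $m(1)>m(0)\ge m(r)$, and since $n_r$ is strictly decreasing in the lens it must fall below the rising branch of $m$ before $q=1$. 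If $\ell'(1)<0$, suppose $\zeta(r)\ge 1$; then $n_r(q)\in(m(1),\ell(q))$ near $q=1$ and one estimates
\[
(1-q)\,O(q,n_r(q))\;<\;-\tfrac{1-R}{R}\,m(1)\,\frac{m(1)-m(q)}{\ell(q)-\ell(1)}\;\longrightarrow\;-\tfrac{1-R}{R}\,m(1)\,\frac{m'(1)}{|\ell'(1)|}\;<\;0,
\]
so $n_r'(q)\to-\infty$, which is incompatible with $n_r<\ell$ and $\ell'(1)$ finite.

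\textbf{The divergence $\Lambda(0+)=\infty$.} Your sketch is close in spirit but the ``let $\eta\downarrow 0$'' step is not justified as written: the uniform lower bound $c$ on $(n_r-m)/(\ell-n_r)$ over $[\eta,q_0]$ comes from local-uniform convergence to $n_0$, and you have not shown that $c$ stays bounded away from zero as $\eta\downarrow 0$ (equivalently, that $n_0$ leaves $(0,1)$ with slope strictly between $m'(0)$ and $\ell'(0)$). The paper bypasses this by producing a quantitative linear sub-barrier at $q=0$: for a suitable $\rho\in(m'(0),0)$ one shows $O(q,1+\rho q)>\rho$ on a small right-neighbourhood of $0$, which forces $n_r(q)-m(q)\ge \hat\rho\,(q-r)$ there, uniformly in small $r$. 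Combining with $\ell-n_r<\tfrac{\delta^2}{2}(1-R)q(1-q)$ gives
\[
\Lambda(r)\;\ge\;\frac{2\hat\rho}{\delta^2(1-R)}\int_r^{\tilde q}\frac{q-r}{q^2}\,dq
\;=\;\frac{2\hat\rho}{\delta^2(1-R)}\Bigl(\ln\tfrac{\tilde q}{r}+\tfrac{r}{\tilde q}-1\Bigr)\;\longrightarrow\;\infty,
\]
which is the clean way to finish.
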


It follows from the Lemma that $\Sigma:(0,q_M] \mapsto [0,\infty)$ is onto and that for any $\xi \in [0,\infty)$ there exists a solution to the free-boundary problem: $n$ solves \eqref{eqn:node} subject to $n(q_*) =m(q_*)$, $n(q^*) =m(q^*)$, $\Sigma(q_*)=\xi$. Moreover the solution has the additional property that $0<q_* \leq q_M \leq q^* < 1$.

Let $\zeta(0) = \lim_{r \downarrow 0} \zeta(r) < 1$ and let
$n_0$ defined on $[0,\zeta(0)]$ be given by $n_0(q) = \lim_{r \downarrow 0} n_r(q)$.
Both $\zeta(0)$ and $n_0$ are well defined by the monotonicity of solutions to \eqref{eqn:node}.
It follows that however large the transaction costs, the no-transaction wedge is a strict subset of the first quadrant and is such that $[q_*,q^*] \subset (0, \zeta(0)) \subset (0,1)$.

\subsection{Case 1Aa: $R<1$, $ \frac{1}{1-R} + \frac{\delta^2 R}{2} <\epsilon$. }
In this case $m(1)<0$. Let $q_\pm$ be the roots of $m$ (with $q_-<q_+$) and let $p_\pm$ be the roots of
$\ell$ (with $p_- < p_+$). Note that $p_- <0<q_-<p_+<1<q_+$. It is clear from the presence of the factor $n$ in the numerator of $O$ that $n$ cannot hit zero before $\ell$ hits zero at $p_+$. See Figure~\ref{fig:Genericnq2}. On the other hand $n \leq \ell$ on $(0,p_+)$. Thus for any $r \in (0, q_-)$ we have $n_r(p_+)=0$. It follows that there is no solution to the free boundary problem. The optimal consumption/investment problem is ill-posed (in the sense that there is a strategy which generates infinite expected discounted utility) for {\em any} value for the round-trip transaction cost. (One strategy is at time zero to trade to a cash only position, and thereafter to keep $\Theta_t = 0$ and to consume at the constant rate $C_t = \frac{\beta}{R} X_t$ per unit time.)

Note that in this case we treat both $\epsilon < \delta^2 R$ and $\epsilon>\delta^2 R$ in the same fashion: the problem is ill-posed in both cases. Thus we do not distinguish between {\em Case I} and {\em Case II}. On the other hand, if $\epsilon > \frac{1}{1-R} + \frac{\delta^2 R}{2}$ then necessarily $ \epsilon > \delta \sqrt{\frac{2R}{1-R}}$. Thus {\em Case ii} cannot occur, and we must be in {\em Case i}.

\begin{figure}[!htbp]
	\captionsetup[subfigure]{width=0.4\textwidth}
	\centering
\subcaptionbox{Typical solutions $n_r$ together with $m$ and $\ell$.
	 \label{case2n}}{\includegraphics[scale =0.4] {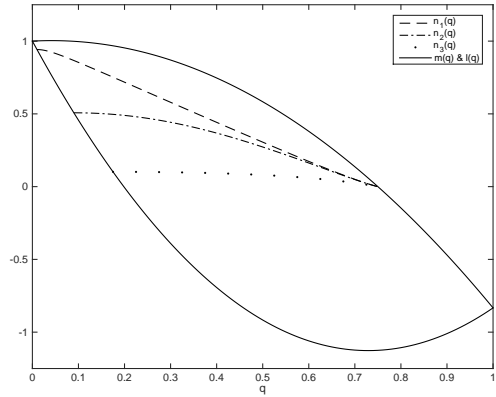}}	
\caption{ {\small{
{\it Case 1Aa. Parameter values are $\epsilon = 35/2$, $\delta = 6$ and $R=2/3$.
In this case all candidate solutions $n_r$ with $r \leq q_-$ hit zero at $p_+$. The problem is ill-posed}.
}}}
\label{fig:Genericnq2}
\end{figure}

\subsection{Case 1AbIIi: $R<1$, $\delta \sqrt{\frac{2R}{1-R}} <\epsilon < \min \{ \frac{1}{1-R} + \frac{\delta^2 R}{2}, \delta^2 R \}$. }
In this case $m(1)>0$ but the turning point of $m$ is at $q_M \in (0,1)$ and $m$ takes a negative value at the turning point. It follows that for these parameter values the problem with zero transaction costs is ill-posed. We argue that the problem remains ill-posed for small transaction costs. However, for large transaction costs the value function is finite. Moreover, we can identify the threshold value $\underline{\xi}$ of the round-trip transaction cost which lies at the boundary between the two regimes.

Let $q_\pm$ be the roots of $m$ as above. For $0<r<q_-$ we can define a family of non-negative solutions $n_r$ to \eqref{eqn:node}. Note $\lim_{r \uparrow q_-} \zeta(r) = q_+$, and $\lim_{r \uparrow q_-} n_r(q)=0$ on $[q_-,q_+]$. Then
\begin{equation}
\label{eq:undLamdef}
 \Lambda(q_-) = \int_{q_-}^{q_+} dq \frac{1}{q(1-q)} \frac{|m(q)|}{\ell(q)} .
\end{equation}
This integral can be evaluated (see Proposition~\ref{prop:integral} in the appendix) and we find
\begin{eqnarray*}
\Lambda(q_-) = \underline{\Lambda}
& := & - \ln \frac{q_+}{q_-}
-  \ln \frac{1-q_-}{1-q_+} \\
&& \hspace{5mm}
+ \frac{R}{1-R}\frac{(p_+ - q_+)(p_+ - q_-)}{p_+(p_+ - 1)(p_+ - p_-)} \ln \frac{p_+- q_-}{p_+ - q_+} \\
&& \hspace{5mm}
- \frac{R}{1-R} \frac{(q_+ - p_-)(q_- - p_-)}{p_-(1-p_-)(p_+ - p_-)} \ln \frac{q_+ - p_-}{q_- - p_-}
\end{eqnarray*}
where
\[ q_\pm =
\frac{\epsilon \pm \sqrt{\epsilon^2 - (\delta \sqrt{\frac{2R}{1-R}})^2}}{\delta^2 R};
\hspace{10mm}
p_\pm = \frac{\frac{1}{2}\delta^2 - \epsilon \pm \sqrt{2 \delta^2 + (\frac{1}{2} \delta^2 - \epsilon)^2}}{\delta^2(1-R)} . \]
Then, for $\xi > e^{\underline{\Lambda}}-1$ the optimal consumption/investment problem with transaction costs is well-posed, but for $\xi \leq e^{\underline{\Lambda}} - 1$ the problem is ill-posed, and there is a strategy which yields infinite expected utility from consumption. See Figure~\ref{fig:Genericnq3}.

\begin{figure}[!htbp]
	\captionsetup[subfigure]{width=0.4\textwidth}
	\centering
\subcaptionbox{Typical solutions $n_r$ together with $m$ and $\ell$.
	 \label{case3n}}{\includegraphics[scale =0.35] {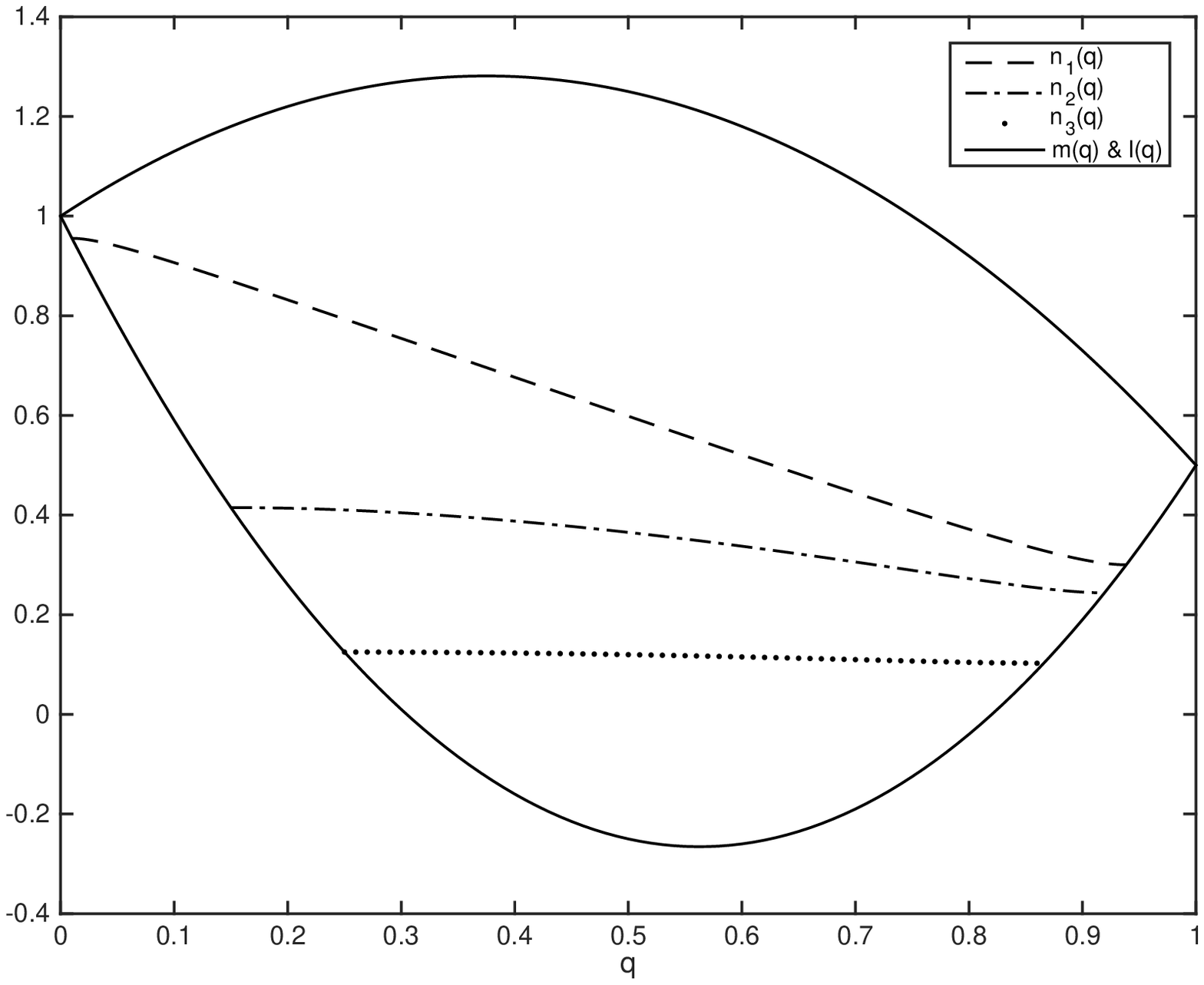}}	
\subcaptionbox{$q_* = \Sigma^{-1}(\xi)$ and $q^* = \zeta (q_*)$
     \label{case3q}}{\includegraphics[scale =0.35] {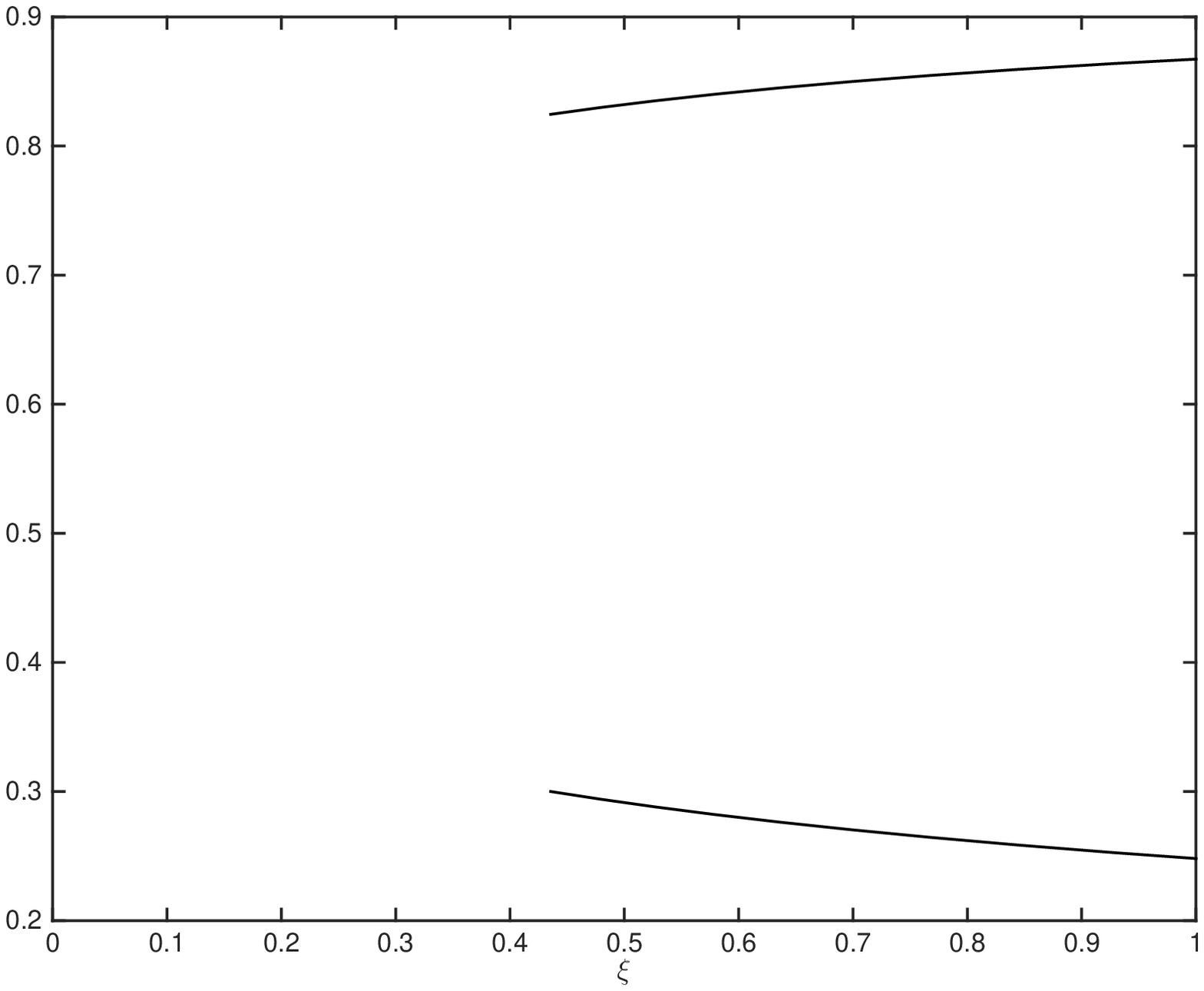}}
\caption{ {\small{
{\it Case 1AbIIi. Parameter values are $\epsilon = 27/2$, $\delta = 6$ and $R=2/3$. Note that if $\xi$ is too small then the problem is ill-posed. At the critical $\xi$, $\xi = e^{\underline{\Lambda}}-1$ we have $n_{q_*} = 0$ on $(q_*,q^*)$}.
}}}
\label{fig:Genericnq3}
\end{figure}

\subsection{Case 2AII: $R>1$, $0 < \epsilon < \delta^2 R$.}
In this case $m>\ell$ on $(0,1)$ and $n$ is increasing provided $q \in (0,1)$ and $\ell(q) < n(q) < m(q)$. See Figure~\ref{fig:Genericnq4}. The condition $0 < \epsilon < \delta^2 R$ ensures that $m$ has a turning point at $q_M \in (0,1)$. Using the same reasoning as in Lemma~\ref{lem:3.1}, as in {\em Case 1AbIIii} for $r \in (0,q_M)$ we must have $\zeta(r) \in (q_M, 1)$. Again as in {\em Case 1AbIIii} we can define $\zeta(0)$ and $n_0$ and we find $\zeta(0) < 1$. However, unlike for the case $R<1$, the value of $m(1)$ does not matter; since all solutions $n_r$ (with $r \in (0,q_M)$) are increasing and lie between $\ell$ and $m$, and since $m'(1)<0$, they must intersect $m$ before $1$ and they must stay positive. The problem cannot be ill-posed if $R>1$. (Of course, this is clear from the utility function; if $U(c) = \frac{c^{1-R}}{1-R}$ then $U$ is bounded above and the expected utility from consumption is also bounded above. Conversely a strategy for which $\Theta_t \equiv 0$, and which involves consuming a constant fraction of wealth per unit time thereafter (so that $C_t = \frac{\beta}{R} X_t$) yields a finite lower bound on the value function.)

Recall the final remarks in Section~\ref{sec:sc} relating $n$ to the value function of the zero-transaction cost Merton problem in the case $R<1$.
Now we have that $n(q_*) < n(q^*) < m_M$ and $A_* = n(q_*)^{-R} > A^* = n(q_*)^{-R} > m_M^{-R}$. Note that in this case the presence of the term $\frac{1}{1-R}$ in the value function means that increases in $n$ (or $G$) decrease the value function. Hence again the value function is bounded above by the value function with zero transaction costs.

\begin{figure}[!htbp]
	\captionsetup[subfigure]{width=0.4\textwidth}
	\centering
\subcaptionbox{Typical solutions $n_r$ together with $m$ and $\ell$.
	 \label{case4n}}{\includegraphics[scale =0.35] {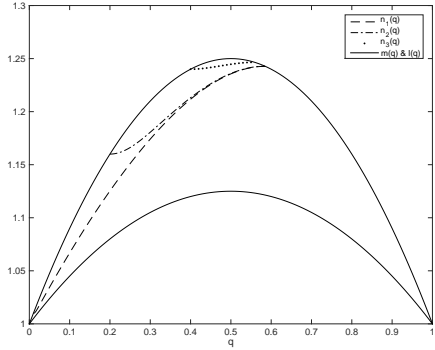}}	
\subcaptionbox{$q_* = \Sigma^{-1}(\xi)$ and $q^* = \zeta (q_*)$
     \label{case4q}}{\includegraphics[scale =0.35] {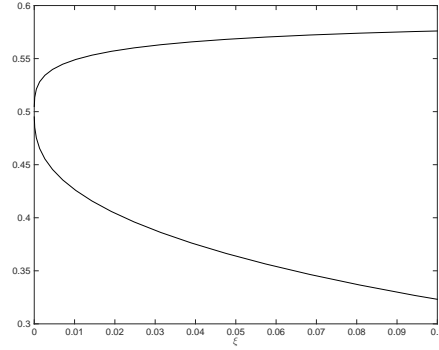}}
\caption{ {\small{
{\it Case 2AII. Parameter values are $\epsilon = 1$, $\delta = 1$ and $R=2$.
Since $R>1$ we now find $m > \ell$ over $(0,1)$ and the solutions we want satisfy $\ell < n < m$}.
}}}
\label{fig:Genericnq4}
\end{figure}

\subsection{Case 1AbIii: $R<1$, $\delta^2 R <\epsilon< \delta\sqrt{\frac{2R}{1-R}}$. }
In this case $m$ is positive everywhere, but $m'(1)<0$ so that $m$ is decreasing on $(0,1)$, and for $r<1$, $\zeta(r) > 1$. See Figure~\ref{fig:Genericnq5}.

Since $q_M>1$ the Merton line lies in the second quadrant. For small transaction costs the no-transaction wedge lies in the second quadrant and the agent has a leveraged position in the risky asset, ie borrows to finance the position in the risky asset. This is the case for which we can find a solution $n_r$ with $r>1$ such that $\Sigma(r)=q$.
For large transaction costs we have $0 < q_* < 1 < q_M <q^*$ and the no-transaction wedge intersects both quadrants in the upper-half-plane.
The threshold $\overline{\xi}$ between small and large transaction costs is not given as an algebraic function, but is $\overline{\xi}=e^{\Lambda(1)}-1$ where
\begin{equation}
\label{eq:Lambda1}
\Lambda(1) = \int_1^{\zeta(1)} dq \frac{1}{q(q-1)}\frac{n_1(q)-m(q)}{n_1(q)- \ell(q)}.
\end{equation}

In the case $\xi \geq \overline{\xi}$ we find that $q^*$ does not depend on $\xi$.

\begin{lem}
\label{lem:singular}
(i) $n_1(\cdot)$ is well defined. Further $\zeta(1)>q_M>1$ and $n'_1(1)=m'(1)<0$. \\
(ii) For $0<r<1$, $n_r(1) = m(1)$ and $n_r'(1)=m'(1)<0$.  \\
(iii) For $0<r<1<q<\zeta(1)$, $n_r(q) = n_1(q)$. In particular, for $0<r<1$, $\zeta(r)=\zeta(1)$.
\end{lem}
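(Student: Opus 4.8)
The plan is to analyse everything in a neighbourhood of the singular point $q=1$, where $\ell(1)=m(1)$ and the coefficient $\tfrac{1}{1-q}$ in $O$ blows up; all three parts will fall out of a careful description of the flow of \eqref{eqn:node} there. The key geometric fact is that, since $\ell(q)-m(q)=\tfrac{\delta^{2}}{2}(1-R)q(1-q)$, the two quadratics cross transversally at $q=1$, with $\ell>m$ on $(0,1)$ and $\ell<m$ just to the right of $1$; thus the open region between $m$ and $\ell$ is a ``window'' whose width collapses to zero as $q\to 1$, pinching down onto the single point $(1,m(1))$.

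First I would establish the trapping that yields part (ii). Fix $r\in(0,1)$; since $q_M>1$ here we have $m'<0$ on $(0,1)$, so at any point of the curve $(q,m(q))$ one has $O(q,m(q))=0>m'(q)$, which means a solution touching $m$ from above cannot cross it downwards, and hence $n_r>m$ on $(r,1)$ (this is just the non-crossing fact already used to define $\zeta$). On the other side, as $n\uparrow\ell$ with $q<1$ the factor $\tfrac{m-n}{\ell-n}$ drives $O\to-\infty$, so $n_r$ cannot reach $\ell$; hence $m<n_r<\ell$ throughout $(r,1)$. Because the window pinches, this squeezes $n_r(q)\to m(1)$ as $q\uparrow 1$, so $n_r$ extends continuously to $q=1$ with $n_r(1)=m(1)$. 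To pin the slope I would run an L'Hopital-type balance: writing $n'=O$ and expanding near $q=1$, a finite limit of $n'$ is possible only if the numerator $m-n$ vanishes to second order, since otherwise $\tfrac{n}{1-q}\tfrac{m-n}{\ell-n}$ is of order $(1-q)^{-1}$. This forces $n_r'(1)=m'(1)$ and, at the next order, $m(q)-n_r(q)\sim \tfrac{R\delta^{2}m'(1)}{2m(1)}(1-q)^{2}$.

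Next I would treat the continuation into $q>1$, giving part (i) and part (iii). Existence of a solution $n_1$ with $n_1(1)=m(1)$, $n_1'(1)=m'(1)$ and $n_1>m$ for $q>1$ I would obtain from the barriers $m(q)+a(q-1)^{2}$, which are super- or sub-solutions according as $a$ lies above or below $\tfrac{-R\delta^{2}m'(1)}{2m(1)}>0$ and which pinch together at the node. Uniqueness of this right-continuation is the heart of the matter, and is where the singularity works in our favour: linearising \eqref{eqn:node} along the solution gives $O_n(q,n)\sim-\tfrac{2m(1)}{R\delta^{2}}(q-1)^{-2}\to-\infty$, which is \emph{non-integrable} at $q=1$. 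Hence if $n^{(1)},n^{(2)}$ both solve \eqref{eqn:node} for $q>1$ with value $m(1)$ at $q=1$, their difference satisfies $\Delta'=\bar O_n\,\Delta$, so $|\Delta(q)|=|\Delta(q_0)|\exp\!\big(\int_{q_0}^{q}\bar O_n\big)$; integrating backwards from $q_0>1$ toward $q=1$ the exponent tends to $+\infty$, forcing $|\Delta(q)|\to\infty$ unless $\Delta(q_0)=0$, which contradicts $\Delta\to 0$ at the node. The asymmetry is the whole point: approached from the left the same computation gives $\exp(-\infty)=0$, which is exactly why the entire family $\{n_r\}_{r<1}$ may merge at $(1,m(1))$, whereas on the right a single trajectory emerges. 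Thus $n_1$ is well defined, and each $n_r$ with $r<1$, having reached $(1,m(1))$ by the trapping above, must coincide with $n_1$ on $(1,\zeta(1))$; in particular $\zeta(r)=\zeta(1)$, proving (iii).

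Finally, for $\zeta(1)>q_M$ I would show $n_1$ cannot meet $m$ on $(1,q_M]$. On $(1,q_M)$ we have $m'<0$, so the non-crossing argument applies verbatim; and a touch exactly at $q_M$ is excluded by a second-order test, since there $m'(q_M)=0$ forces $O_q(q_M,m_M)=0$, hence $n_1''(q_M)=O_q(q_M,m_M)=0<m''(q_M)=\delta^{2}R(1-R)$, so $n_1-m$ would have a strict local maximum of value $0$ at $q_M$, contradicting $n_1-m>0$ just to the left. Therefore $n_1>m$ through $q_M$, the curves can only meet on the increasing branch $q>q_M$, and $\zeta(1)>q_M>1$. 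The main obstacle throughout is the degeneracy at $q=1$; once the pinching-window picture and the non-integrable sign of $O_n$ are secured, existence, the common slope $m'(1)$, and the merging of the family all follow, with the remaining claims reducing to monotonicity and second-order checks.
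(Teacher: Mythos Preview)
Your argument is largely sound and arrives at the right conclusions, but it proceeds differently from the paper in parts (i) and (iii), and there is one soft spot in (ii) worth flagging.

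For the right-continuation through the singular point the paper makes the substitution $\eta(x)=\tfrac{2}{\delta^{2}(1-R)}\bigl[n(1+x)-m(1+x)\bigr]$, which converts \eqref{eqn:node} into $\eta'=-a(x,\eta)\,\eta/x^{2}+b(x)$ with $a,b$ continuous and positive at $x=0$. Existence and uniqueness on $x\geq 0$ are then obtained \emph{constructively} via a monotone Picard scheme $f_{k+1}'=b(x)-a(x,f_k)f_{k+1}/x^{2}$ (Proposition~\ref{prop:real}), with the asymptotic $\eta(x)\sim \tfrac{b(0)}{a(0,0)}x^{2}$ read off afterwards. Your barrier-plus-linearisation route is a legitimate alternative and is arguably more transparent about \emph{why} the node is one-sided: your estimate $O_n\sim -\tfrac{2m(1)}{R\delta^{2}}(q-1)^{-2}$ is precisely the paper's $a(x,\eta)/x^{2}$ in different coordinates, and the non-integrability is the same mechanism. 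The trade-off is that your mean-value step $\Delta'=\bar O_n\,\Delta$ needs both candidate solutions to stay in the regime where $\ell-n$ is genuinely first-order in $(q-1)$, so that $\bar O_n$ inherits the $-C(q-1)^{-2}$ blow-up; the paper's iteration sidesteps this by never linearising. Your treatment of $\zeta(1)>q_M$ via the first- and second-order comparison at $m$ matches the paper's reasoning exactly.

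The gap to tighten is the slope claim in (ii). Your L'H\^opital balance shows only that \emph{if} $n_r'(1^-)$ exists and is finite then it equals $m'(1)$; it does not rule out oscillation of the difference quotient. The paper closes this by a dichotomy relative to the family of lines $L_\theta(q)=m(1)+\theta(q-1)$ with $\theta\in(\ell'(1),m'(1))$: either $n_r$ eventually lies above some $L_\theta$ on a left neighbourhood of $1$, in which case a direct bound gives $(1-q)\,O(q,n_r(q))$ bounded away from zero and negative, forcing $n_r'\to-\infty$ and contradicting the squeeze $m<n_r<\ell$; or $n_r$ meets $L_\theta$ along a sequence $q_k\uparrow 1$, and evaluating $O$ at the crossings shows $n_r$ can only cross $L_\theta$ from above to below, contradicting the existence of infinitely many crossings. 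Inserting an argument of this type (or an equivalent Gr\"onwall estimate on $n_r-m$) would make your proof of (ii) complete.
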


\begin{proof}
See appendix.
\end{proof}

The intuition behind these results is as follows.
Since $q_M>1$ and $m$ is decreasing on $(0,1)$, for $r \in (0,1)$ we have that $n_r$ cannot cross $m$ before $q=1$. Since we have $n_r(q) \leq \ell(q)$ on $(0,1)$ we must have that $n$ passes through the singular point $(1, m(1))$. At this point $n'(1)=m'(1)$. A solution for $n$ can be constructed beyond $q=1$, but since $n$ solves a first order equation, the solution does not depend in any way on the behaviour of $n$ to the left of $1$. Thus, if $r<1$, $\zeta(r)$ does not depend on $r$.

\begin{lem}
\label{lem:Lambda}
$\Lambda(r)$ is continuous at $r=1$. Further $\Lambda$ is strictly decreasing with $\Lambda(q_M)=0$ and $\lim_{r \downarrow 0}\Lambda(r)=\infty.$
\end{lem}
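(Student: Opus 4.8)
The plan is to work throughout on the range $r \in (0, q_M]$ (for $r > q_M$ one checks that $n_r$ drops below $m$ immediately, so $\zeta(r)=r$ and $\Lambda \equiv 0$; the interesting domain is $(0,q_M]$), and to prove the four assertions --- continuity at $1$, strict monotonicity, $\Lambda(q_M)=0$, and blow-up as $r\downarrow 0$ --- essentially separately, gluing the monotonicity across $r=1$ by the continuity proved there. The organising device is Lemma~\ref{lem:singular}: for every $r\in(0,1)$ the solution $n_r$ reaches the singular point $(1,m(1))$ with $n_r'(1)=m'(1)$ and coincides with $n_1$ on $[1,\zeta(1)]$, where $\zeta(r)=\zeta(1)$. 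Hence for $r\in(0,1)$ the part of the integral in \eqref{eq:Lamdef} over $[1,\zeta(1)]$ is independent of $r$ and equals $\Lambda(1)$ from \eqref{eq:Lambda1}, so that $\Lambda(r)=\Lambda(1)+I(r)$ with
\[ I(r) := \int_r^1 \frac{dq}{q(1-q)}\,\frac{n_r(q)-m(q)}{\ell(q)-n_r(q)}. \]
Continuity at $r=1$ reduces to showing $I(r)\to 0$ as $r\uparrow 1$.

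This limit is the main obstacle, because the integrand of $I$ carries the factor $1/(1-q)$ which is singular exactly at the upper endpoint $q=1$, while the solutions $n_r$ also vary with $r$; what is needed is a bound on the integrand uniform in $r$. First I would record that on $(r,1)$ one has $m<n_r<\ell$: the lower bound because $\zeta(r)>1$, the upper bound because $\ell$ acts as a barrier (as $n\uparrow\ell$ with $q<1$ the field $O$ in \eqref{eqn:O} tends to $-\infty$). The key step is a local expansion at the singular point. Writing $d=n_r-m$, substituting $n=m+d$ into \eqref{eqn:node} and using $\ell-m=\frac{\delta^2}{2}(1-R)q(1-q)$ together with $d(1)=0$, $d'(1)=0$ (Lemma~\ref{lem:singular}(ii)), one finds $d(q)\sim\alpha(1-q)^2$ as $q\uparrow 1$, where matching the leading terms forces
\[ \alpha = -\frac{R\delta^2 m'(1)}{2\,m(1)} > 0. \]
The decisive feature is that $\alpha$ depends only on $m(1),m'(1)$, and \emph{not} on $r$: every solution through $(1,m(1))$ with slope $m'(1)$ has the same quadratic leading behaviour. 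Consequently $\frac{1}{q(1-q)}\frac{n_r-m}{\ell-n_r}$ tends, as $q\uparrow 1$, to the $r$-independent limit $2\alpha/(\delta^2(1-R))$, and so is bounded by a constant $K$ on a fixed interval $[1-\eta,1]$ uniformly in $r$. For $r>1-\eta$ this gives $|I(r)|\le K(1-r)\to 0$, proving continuity at $r=1$.

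For strict monotonicity I would argue on each side of $1$ and then invoke the continuity just shown. On $(1,q_M)$ the integral in \eqref{eq:Lamdef} runs over $(r,\zeta(r))\subset(1,\infty)$, where $\ell<m<n_r$ so $\ell-n_r$ stays bounded away from zero; formula \eqref{eq:Lamderiv} applies verbatim and gives $\partial\Lambda/\partial r<0$ since $n_r$ is decreasing in $r$. On $(0,1)$ I would differentiate $\Lambda(r)=\Lambda(1)+I(r)$: the boundary term from the lower limit vanishes because the integrand is proportional to $n_r(r)-m(r)=0$ at $q=r$, whence $\partial_r I(r)=\frac{\delta^2(1-R)}{2}\int_r^1(\ell-n_r)^{-2}\,\partial_r n_r\,dq<0$, again by $\partial_r n_r<0$ (the common $\alpha$ makes $\partial_r n_r=o((1-q)^2)$ near $1$, so the integral converges). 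Strict decrease on all of $(0,q_M]$ follows by gluing.

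Finally, the two endpoint statements are local analyses at the remaining singular points of \eqref{eqn:node}. For $\Lambda(q_M)=0$ I would show $\zeta(q_M)=q_M$: with $d=n_{q_M}-m$ one has $d(q_M)=0$, $d'(q_M)=-m'(q_M)=0$, and $d''(q_M)=n_{q_M}''(q_M)-m''(q_M)=0-\delta^2 R(1-R)<0$, where $n_{q_M}''(q_M)=0$ because $O$ and its $q$-derivative both vanish at $(q_M,m(q_M))$ as $m'(q_M)=0$. Thus $n_{q_M}<m$ immediately to the right of $q_M$, so $\zeta(q_M)=q_M$ and the integral in \eqref{eq:Lamdef} is empty. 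For $\lim_{r\downarrow 0}\Lambda(r)=\infty$ I would linearise \eqref{eqn:node} at the singular point $(0,1)$: writing $m(q)=1-aq+O(q^2)$ and $\ell(q)=1+bq+O(q^2)$ with $a=\epsilon(1-R)$ and $b=(\frac{\delta^2}{2}-\epsilon)(1-R)$, the limiting solution $n_0$ leaves $(0,1)$ with slope $\kappa$ solving $\kappa^2-(b+\frac{1-R}{R})\kappa-\frac{1-R}{R}a=0$, and the sandwiching $m<n_0<\ell$ forces $\kappa$ to be the root in $(-a,b)$. Then $n_0-m\sim(\kappa+a)q$ and $\ell-n_0\sim(b-\kappa)q$, so the integrand behaves like $\frac{1}{q}\cdot\frac{\kappa+a}{b-\kappa}$ with $\frac{\kappa+a}{b-\kappa}>0$, and the logarithmic divergence of $\int_r(\cdot)\,dq$ yields $\Lambda(r)\to\infty$. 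The genuinely delicate point is the uniform estimate near $q=1$ in the continuity step; the rest is a routine singular-point expansion or a direct application of \eqref{eq:Lamderiv}.
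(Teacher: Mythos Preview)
Your proposal is correct and follows essentially the same strategy as the paper: reduce the monotonicity, the value at $q_M$, and the blow-up at $0$ to the analysis already carried out for Lemma~\ref{lem:3.1}, and isolate continuity at $r=1$ as the only new issue, which is settled by the quadratic vanishing $n_r(q)-m(q)\sim\alpha(1-q)^2$ at the singular point (the paper phrases this as finiteness of two tail integrals $I_1,I_2$ near $q=1$, using exactly the same asymptotic derived in its Appendix~B). One small point worth tightening: the inference ``same limit as $q\uparrow 1$ $\Rightarrow$ uniform bound on $[1-\eta,1]$'' is not automatic; the clean way to get uniformity is to use the monotonicity $n_r(q)$ decreasing in $r$, so that $0<n_r-m\le n_{r_0}-m\le C(1-q)^2$ on $(r,1)$ for any fixed $r_0<r<1$, which then bounds the integrand by a constant independent of $r$.
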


\begin{proof}
See appendix.
\end{proof}

Given the function $\Lambda$ we can define $\Sigma(r) = e^{\Lambda(r)}-1$ and then $\Sigma^{-1}$ is well-defined. We set $q_* = \Sigma^{-1}(\xi)$ and $q^* = \zeta(q_*)$ and derive the value function in the no-transaction wedge by integrating $n_{q_*}$ over the interval $[q_*,q^*]$.

The following corollary is an immediate consequence of the fact that for $\xi \geq \overline{\xi}$, $q^* = q^*(\xi) = \zeta(1)$ does not depend on $\xi$.

\begin{figure}[!htbp]
	\captionsetup[subfigure]{width=0.4\textwidth}
	\centering
\subcaptionbox{Typical solutions $n_r$ together with $m$ and $\ell$.
	 \label{case5n}}{\includegraphics[scale =0.35] {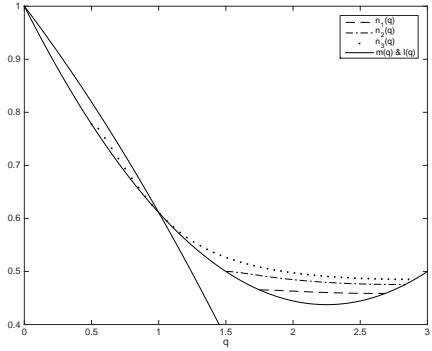}}	
\subcaptionbox{$q_* = \Sigma^{-1}(\xi)$ and $q^* = \zeta (q_*)$
     \label{case5q}}{\includegraphics[scale =0.35] {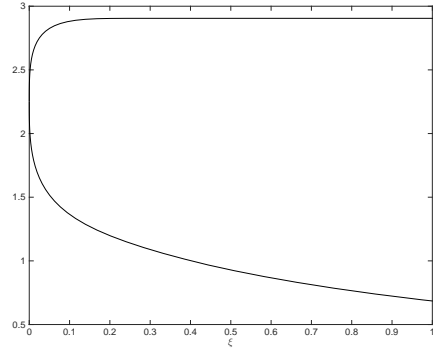}}
\caption{ {\small{
{\it Case 1AbIii. Parameter values are $\epsilon = 3/2$, $\delta = 1$ and $R=2/3$.
Since $m'(1)<0$, for each $r \in (0,1)$ we have $n_r(q)$ passes through the point $(1,m(1))$.
For large enough $\xi$, $q^*$ is constant}.
}}}
\label{fig:Genericnq5}
\end{figure}

\begin{cor}
\label{cor:surp}
For $\xi > \overline{\xi}$ the no transaction wedge intersects both the first and second quadrants. If the transaction cost on sales is held fixed, then provided $\xi > \overline{\xi}$ the threshold $p^*$ at which sales occur does not depend on the transaction cost on purchases.
\end{cor}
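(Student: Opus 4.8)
The plan is to read both assertions straight off Lemmas~\ref{lem:singular} and \ref{lem:Lambda} together with the boundary relations linking $(q_*,q^*)$ to $(p_*,p^*)$; no new estimate is required, since the corollary is really a reinterpretation of Lemma~\ref{lem:singular}(iii) in the financial coordinate $p$.

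First I would locate the free boundaries when $\xi>\overline{\xi}$. By construction $q_*=\Sigma^{-1}(\xi)$, that is $e^{\Lambda(q_*)}-1=\xi$, whereas $\overline{\xi}=e^{\Lambda(1)}-1$; hence $\xi>\overline{\xi}$ forces $\Lambda(q_*)>\Lambda(1)$, and as $\Lambda$ is strictly decreasing by Lemma~\ref{lem:Lambda} this gives $q_*<1$ (and $q_*>0$ since $\Lambda(r)\to\infty$ as $r\downarrow0$). Now Lemma~\ref{lem:singular}(iii) applies with $0<q_*<1$ and yields $q^*=\zeta(q_*)=\zeta(1)$, while Lemma~\ref{lem:singular}(i) gives $\zeta(1)>q_M>1$; thus $0<q_*<1<q^*$. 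To obtain the quadrant statement I would translate through the boundary relations: \eqref{eq:qpTR} turns $0<q_*<1$ into $p_*\in(0,1)$, while the analogous sale relation $\frac{q^*}{1-q^*}=(1-\gamma)\frac{p^*}{1-p^*}$ turns $q^*>1$ into $\frac{p^*}{1-p^*}<0$, i.e.\ $p^*>1$. Since $0<p<1$ is the first quadrant and $p>1$ the second, the interval $[p_*,p^*]$ straddles $p=1$ and meets both quadrants, which is the first sentence.

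Next I would establish invariance of $p^*$. Fix $\gamma$ and treat $\lambda$ as the variable, so $\xi=\frac{\lambda+\gamma}{1-\gamma}$ increases strictly with $\lambda$. The crucial point is that both $\overline{\xi}=e^{\Lambda(1)}-1$ (via \eqref{eq:Lambda1}) and the number $\zeta(1)$ are determined by the ODE \eqref{eqn:node} alone, hence by $\epsilon,\delta,R$ only, and are independent of $\lambda$ and $\gamma$. Consequently, for every $\lambda$ with $\xi>\overline{\xi}$ the previous paragraph returns the \emph{same} constant $q^*=\zeta(1)$; solving the sale relation then gives $p^*=\frac{\zeta(1)}{\zeta(1)+(1-\gamma)(1-\zeta(1))}$, which involves only $\zeta(1)$ and the fixed $\gamma$. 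Thus $p^*$ does not move as $\lambda$ varies, subject to $\xi>\overline{\xi}$.

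The step carrying the real content is Lemma~\ref{lem:singular}(iii): because every $n_r$ with $r<1$ is forced through the singular point $(1,m(1))$ and $n'=O(q,n)$ is first order, the solution to the right of $q=1$ forgets its starting point, freezing the sale boundary at $\zeta(1)$ while only the purchase boundary $q_*$ (equivalently $p_*$) keeps responding to $\lambda$. I expect the only delicate bookkeeping to be the observation that $\overline{\xi}$ and $\zeta(1)$ are intrinsic to the ODE and not functions of $(\lambda,\gamma)$, which is precisely what makes ``$q^*$ constant'' a meaningful statement as $\lambda$ is varied with $\gamma$ held fixed.
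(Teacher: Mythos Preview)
Your proposal is correct and follows exactly the route the paper takes: the paper presents the corollary as an immediate consequence of Lemma~\ref{lem:singular}(iii) (so $q^*=\zeta(1)$ whenever $q_*<1$, i.e.\ whenever $\xi>\overline{\xi}$) together with the relation $p^* = \frac{q^*}{(1-\gamma)+\gamma q^*}$, and you have simply written out these steps in full, including the use of Lemma~\ref{lem:Lambda} to place $q_*\in(0,1)$ and Lemma~\ref{lem:singular}(i) to place $q^*>1$.
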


Note that for $\xi < \overline{\xi}$ the threshold $p^*$ at which sales occur does depend on the transaction cost on purchases.
This is the generic result: in {\em Case II} if the problem is well-posed then the locations of both boundaries of the no-transaction wedge depend on the values of both transaction costs.

At first sight Corollary~\ref{cor:surp} may appear surprising. At a mathematical level, the result is a consequence of the fact that the relevant solution $n_r$ passes through the singular point $(1,m(1))$ and on doing so `forgets' its starting point $(r, m(r))$. Hence $q^*$ does not depend on $\xi$ for $\xi \geq \overline{\xi}$. Since $p^* = \frac{q^*}{(1-\gamma) + \gamma q^*}$ we find that $p^*$ does not depend on $\lambda^*$.
The financial explanation of this result is fairly simple also.
If the no-transaction wedge intersects both quadrants in the upper half plane, then it includes the half-line $(x=0, y \theta>0)$. If ever $x=0$, then the agent finances consumption first by borrowing and then when borrowing levels become too great, from sales of the risky asset. But, once cash-wealth is non-positive,
the agent will trade in such a way that cash wealth is never positive at any future moment. Thus, once $x=0$ the agent will never again purchase units of risky asset, and the transaction cost on purchases becomes irrelevant. Hence the location of the sell threshold does not depend on $\xi$. The same arguments show that the value function in the second quadrant does not depend on $\xi$ (for $\xi \geq \overline{\xi}$), although it continues to depend on $\xi$ in the first quadrant.

Note that if $\xi = \overline{\xi}$ then there is no solution of the free-boundary problem (FBP). Instead the solution we want has an endpoint at $q_* =1$ at which $n'(q_*) \neq 0$, and is a solution of the initial-value problem (IVP)
\begin{quote} find $n$, $q^*$ such that $n$ is a nonnegative solution of \eqref{eqn:node} in $[1,q^*]$ with boundary conditions $n(1) = m(1)$ and $n(q^*) = m(q^*)$.
\end{quote}

\subsection{Case 1AbIi: $R<1$, $\max \{ \delta^2 R, \delta\sqrt{\frac{2R}{1-R}} \} <\epsilon< \frac{1}{1-R} + \frac{\delta^2 R}{2}$. }
This case combines the novel features of {\em Case 1AbIIi} and {\em Case 1AbIii}. See Figure~\ref{fig:Genericnq6}. The value of $m$ at the turning point is negative and so for very small transaction costs $\xi \leq \underline{\xi}$ the problem is ill-posed. For moderate transaction costs the solution is such that the no-transaction wedge lies in the second quadrant. For larger transaction costs the axis $x=0$ lies inside the no-transaction wedge, and then the ratio $q^*$ defining the sales boundary the of no-transaction wedge in the second quadrant does not depend on $\xi$.

The critical threshold $\underline{\xi}$ for well-posedness is given by $\underline{\xi} = e^{\underline{\Lambda}}-1$ where
$\underline{\Lambda}$  is given in Corollary~\ref{cor:underLam}. The critical threshold $\overline{\xi}$ above which the no-transaction wedge includes the half-line $x=0$ is not given by an algebraic expression, but instead by $\overline{\xi}=e^{\Lambda(1)}-1$ where $\Lambda(1)$ is as given in \eqref{eq:Lambda1}.

Note that for $r>1$, $n_r(q) < n_1(q)$ over the domain where both are defined, and hence $\Lambda(q_-) < \Lambda(1)$. Thus $\underline{\xi} < \overline{\xi}$.

\begin{figure}[!htbp]
	\captionsetup[subfigure]{width=0.4\textwidth}
	\centering
\subcaptionbox{Typical solutions $n_r$ together with $m$ and $\ell$.
	 \label{case6n}}{\includegraphics[scale =0.35] {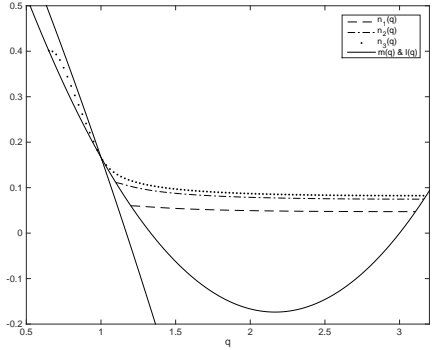}}	
\subcaptionbox{$q_* = \Sigma^{-1}(\xi)$ and $q^* = \zeta (q_*)$
     \label{case6q}}{\includegraphics[scale =0.35] {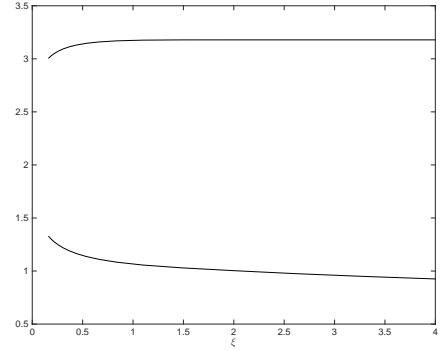}}
\caption{ {\small{
{\it Case 1AbIi. Parameter values are $\epsilon = 13/4$, $\delta = 3/2$ and $R=2/3$.
For small $\xi$ there is no solution. For large $\xi$ the candidate solution we want passes through the singular point $(1,m(1))$, and the value of $q^*$ is independent of $\xi$}.
}}}
\label{fig:Genericnq6}
\end{figure}

\subsection{Case 2AI: $R>1$, $\delta^2 R <\epsilon$.}
In this case $m'(1)>0$, and as in {\em Case 1AbIii} for any $0<r<1$ we have $\zeta(r) > 1$. See Figure~\ref{fig:Genericnq7}. Since $R>1$ the problem cannot be ill-posed. For small transaction costs (where small is $\xi < \overline{\xi}:= e^{\Lambda(1)}-1$) we find that the no transaction wedge lies strictly inside the second quadrant, and the agent always holds negative cash wealth. For $\xi \geq \overline{\xi} = e^{\Lambda(1)}-1$ the no-transaction region contains the half-line $(x=0, y \theta>0)$ and the value of $q^*$ does not depend on $\xi$.

\begin{figure}[!htbp]
	\captionsetup[subfigure]{width=0.4\textwidth}
	\centering
\subcaptionbox{Typical solutions $n_r$ together with $m$ and $\ell$.
	 \label{case7n}}{\includegraphics[scale =0.35] {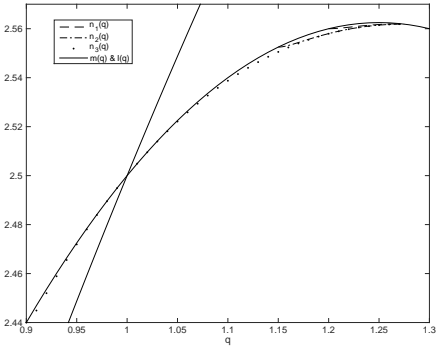}}	
\subcaptionbox{$q_* = \Sigma^{-1}(\xi)$ and $q^* = \zeta (q_*)$
     \label{case7q}}{\includegraphics[scale =0.35] {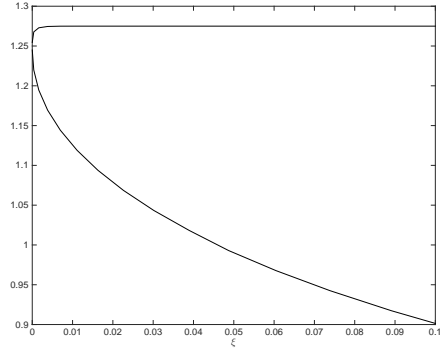}}
\caption{ {\small{
{\it Case 2AI. Parameter values are $\epsilon = 5/2$, $\delta = 1$ and $R=2$.
As in the previous case, for large $\xi$ the candidate solution we want passes through the singular point $(1,m(1))$, and the value of $q^*$ is independent of $\xi$}.
}}}
\label{fig:Genericnq7}
\end{figure}

\subsection{Case 1Bii: $R<1$, $- \delta \sqrt{\frac{2R}{1-R}} < \epsilon < 0$.}
Now $m$ is a quadratic which is increasing at zero and everywhere positive. See Figure~\ref{fig:Genericnq8}. We have $q_M<0$ and so the Merton line lies in the fourth quadrant. We are interested in the behaviour of $m$ for $q \leq 0$. For $q < 0$ we have $\ell(q) < m(q)$ and on $n>m(q)$, $O(q,n)>0$. For $q_M<r<0$ let $n_r = (n_r(q))$ solve \eqref{eqn:node} on the domain $q \leq r$ and let $\zeta(r) = \sup \{ q \leq r : n_r(q) < m(q)$. It is convenient to rewrite the definition of $\Lambda$ as
\[ \Lambda(r) = \int_{\zeta(r)}^r dq \frac{1}{|q|(1-q)}\frac{n_r(q) - m(q)}{n_r(q) - \ell(q)} \]
As in {\em Case 1AbIIii}, $\Lambda$ is increasing with $\Lambda(q_M)=0$ and $\Lambda(0)=\infty$. Setting $\Sigma(r) = e^{\Lambda(r)}-1$, $\Sigma$ is continuous and strictly increasing on $[q_m,0)$ with well-defined inverse. Set $q^* = \Sigma^{-1}(\xi)$ and $q_* = \zeta(q^*)$. The the solution to the free boundary problem is given by $n_{q^*}$ on $[q_*,q^*]$.

\begin{figure}[!htbp]
	\captionsetup[subfigure]{width=0.4\textwidth}
	\centering
\subcaptionbox{Typical solutions $n_r$ together with $m$ and $\ell$.
	 \label{case8n}}{\includegraphics[scale =0.35] {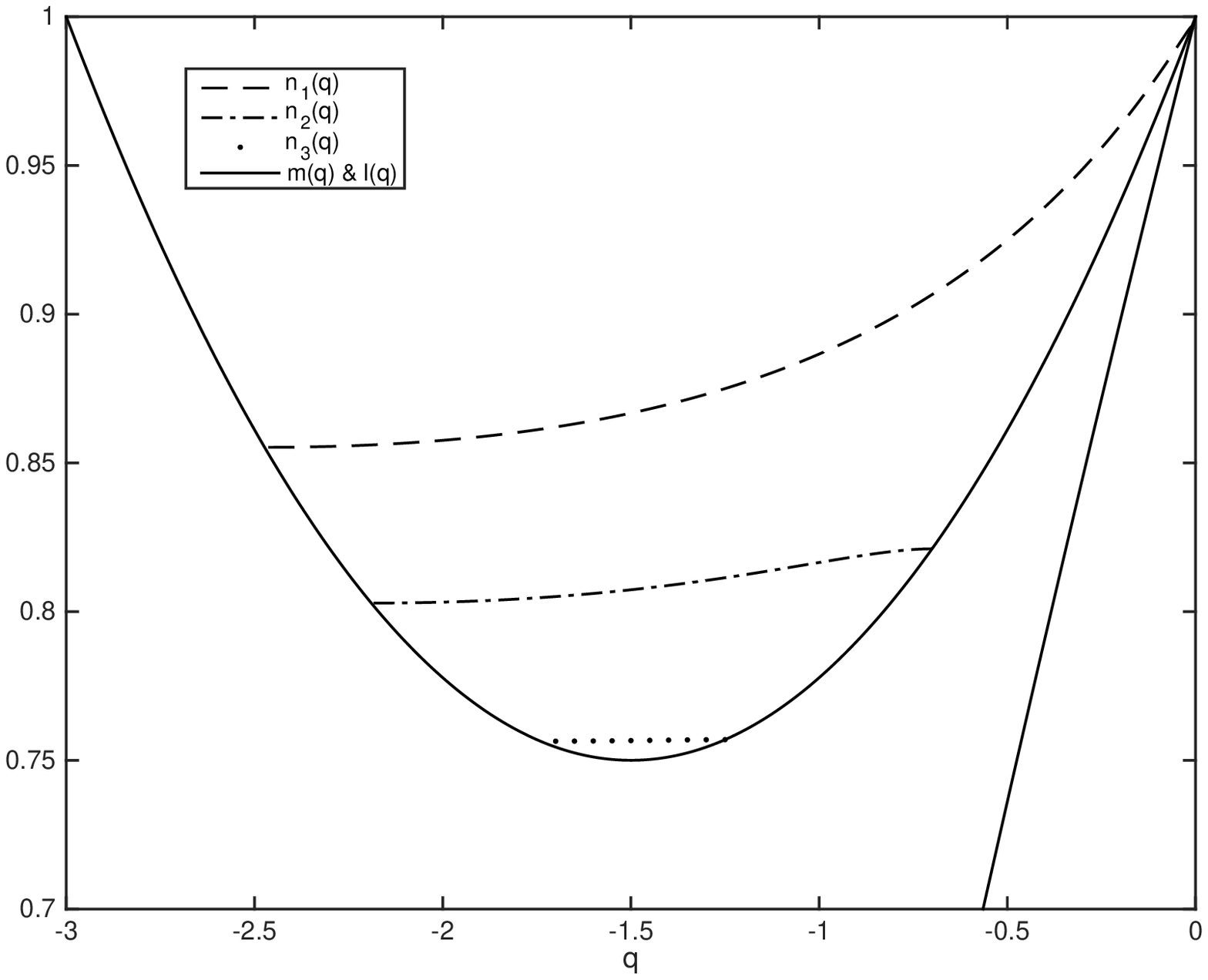}}	
\subcaptionbox{$q^* = \Sigma^{-1}(\xi)$ and $q_* = \zeta (q^*)$. For $\epsilon < 0$ ({\em Case B}) we define solutions $n_r(q)$ for $q \leq r$.
     \label{case8q}}{\includegraphics[scale =0.35] {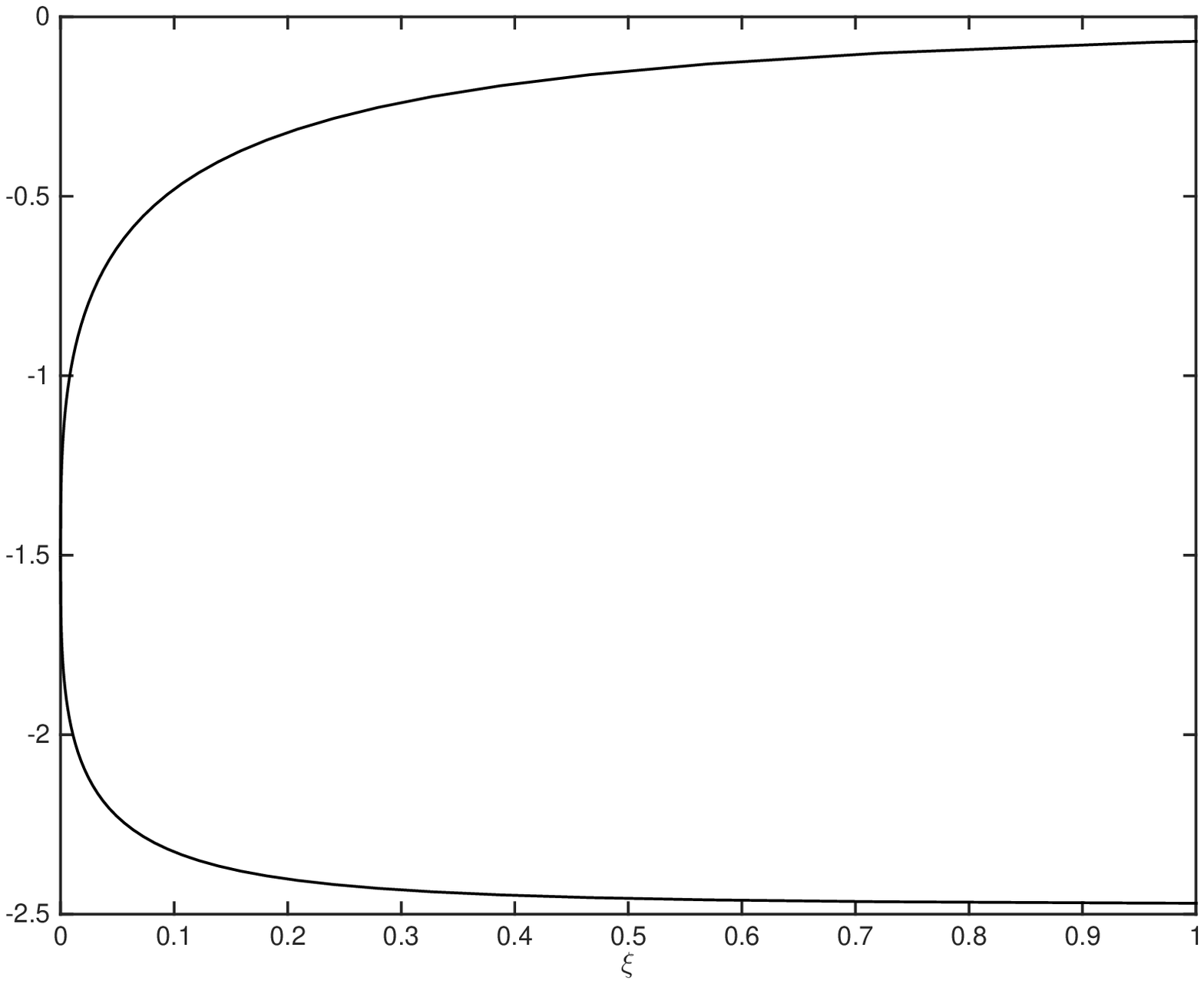}}
\caption{ {\small{
{\it Case 1Bii. Parameter values are $\epsilon = -1$, $\delta = 1$ and $R=2/3$.
Now $\epsilon <0$ and we are interested in $m$, $\ell$ and $n$ on $q<0$}.
}}}
\label{fig:Genericnq8}
\end{figure}

\subsection{Case 1Bi: $R<1$, $\epsilon < - \delta \sqrt{\frac{2R}{1-R}}$.}
This case combines the features of {\em Cases 1AaIIi} and {\em 1Bii}. The no-transaction wedge lies in the fourth quadrant, but for sufficiently small transaction costs the problem is ill-posed. See Figure~\ref{fig:Genericnq9}. The round-trip transaction cost threshold between ill-posed and well-posed problems is given by $\underline{\xi} = e^{\underline{\Lambda}}-1$ where $\underline{\Lambda}$ is as given in Corollary~\ref{cor:underLam}.

\begin{figure}[!htbp]
	\captionsetup[subfigure]{width=0.4\textwidth}
	\centering
\subcaptionbox{Typical solutions $n_r$ together with $m$ and $\ell$.
	 \label{case9n}}{\includegraphics[scale =0.35] {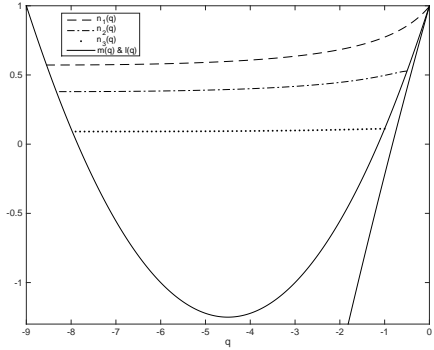}}	
\subcaptionbox{$q^* = \Sigma^{-1}(\xi)$ and $q_* = \zeta (q^*)$
     \label{case9q}}{\includegraphics[scale =0.35] {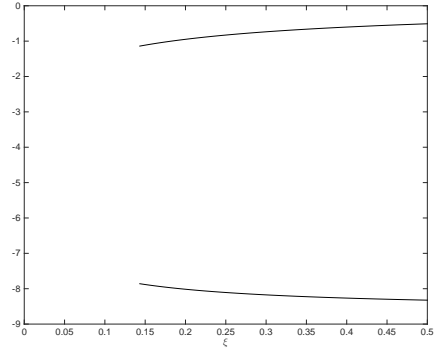}}
\caption{ {\small{
{\it Case 1Bi. Parameter values are $\epsilon = -3$, $\delta = 1$ and $R=2/3$.
For small transaction costs the problem is ill-posed}.
}}}
\label{fig:Genericnq9}
\end{figure}

\subsection{Case 2B: $R>1$, $\epsilon < 0$.}
In this case $\ell(q)>m(q)$ on $q < 0$ and $O(q,n)<0$ on $q<0$ and $n < m(q) < \ell(q)$. The solutions we want are decreasing on $q_* <  q < q^*$, and a solution with $q_* < q_M < q^* < 0$  exists for each possible round-trip transaction cost. See Figure~\ref{fig:Genericnq10}.

\begin{figure}[!htbp]
	\captionsetup[subfigure]{width=0.4\textwidth}
	\centering
\subcaptionbox{Typical solutions $n_r$ together with $m$ and $\ell$.
	 \label{case10n}}{\includegraphics[scale =0.35] {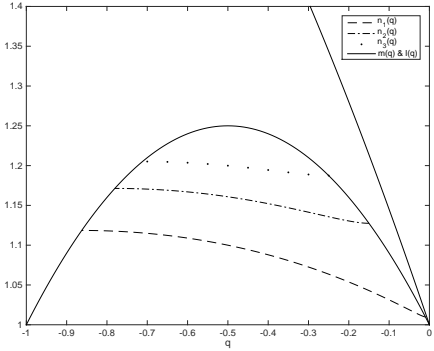}}	
\subcaptionbox{$q^* = \Sigma^{-1}(\xi)$ and $q_* = \zeta (q^*)$
     \label{case10q}}{\includegraphics[scale =0.35] {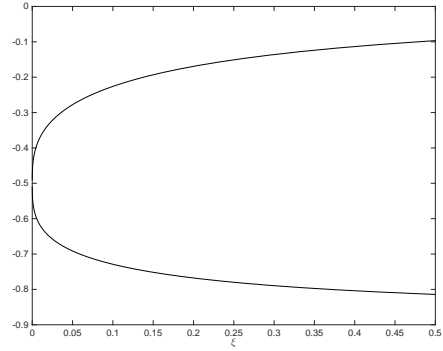}}
\caption{ {\small{
{\it Case 2B. Parameter values are $\epsilon = -1$, $\delta = 1$ and $R=2$.
Since $R>1$ we have $n<m<\ell$ on the region of interest}.}}}
\label{fig:Genericnq10}
\end{figure}

\subsection{Boundary cases}
If $m'(0)=0$ or equivalently $\epsilon = 0$ then $q_*=q^*=q_M=0$. It is optimal to sell any initial endowment in the risky asset immediately. Thereafter no further trading is required. Note that in this case the solution to the Merton problem (with no transaction costs) is also to have zero investment in the risky asset since holding the asset brings risk but no return. For this reason it better to sell now rather than later.

If $R<1$ and $m(1)=0$ or equivalently $\epsilon = \frac{1}{1-R} + \frac{\delta^2 R}{2}$ then the problem is ill-posed.

If $m'(1)=0$ (and $m(1)>0$) or equivalently $\epsilon = \delta^2 R$ (and, if $R<1$, $\epsilon < \frac{2}{1-R}$) then $q_* < q^* = 1$ for any value of the transaction cost. The optimal solution to the Merton problem is to invest exclusively in the risky asset and to keep zero cash holdings. Consumption is financed from sales of the risky asset. For the transaction cost problem, if ever cash wealth hits zero, then the investor keeps cash wealth at zero (and finances consumption from sales of the risky asset). But if the agent has a small positive cash wealth, then before selling any risky asset, he first finances consumption from cash wealth. In fact, if there are adverse movements in the price of the risky asset, the agent may also purchase units of risky asset.

If $m_M=0$ (and $m'(1) \neq 0$) or equivalently $\epsilon = \delta \sqrt{\frac{2 R}{1-R} }$ (and $\delta^2 \neq \frac{2}{R(1-R)}$) the problem is ill-posed for zero-transaction costs, but well posed for any positive level of round-trip transaction cost.


\section{Verification Lemmas}
\label{sec:verification}

The main theorem of this paper is the following

\begin{thm}
Recall the formula for $\underline{\Lambda}$ in \eqref{eq:undLamdef} and set $\underline{\xi}=e^{\underline{\Lambda}}-1$.
\begin{enumerate}
\item Suppose either
\begin{enumerate}
\item $R>1$ or
\item $R<1$ and $|\epsilon| < \delta \sqrt{\frac{2R}{1-R}}$ or
\item $R<1$, $\delta \sqrt{\frac{2R}{1-R}} < \epsilon < \frac{1}{1-R} + \frac{\delta^2 R}{2}$ and $\xi > \underline{\xi}$ or
\item $R<1$, $\epsilon < - \delta \sqrt{\frac{2R}{1-R}}$ and $\xi > \underline{\xi}$.
\end{enumerate}
Then the problem is well-posed.
\item
Suppose either
\begin{enumerate}
\item $R<1$ and $\epsilon > \frac{1}{1-R} + \frac{\delta^2 R}{2}$ or
\item $R<1$, $\delta \sqrt{\frac{2R}{1-R}} < \epsilon < \frac{1}{1-R} + \frac{\delta^2 R}{2}$ and $\xi \leq \underline{\xi}$ or
\item $R<1$, $\epsilon < - \delta \sqrt{\frac{2R}{1-R}}$ and $\xi \leq \underline{\xi}$.
\end{enumerate}
Then the problem is ill-posed.
\end{enumerate}
\end{thm}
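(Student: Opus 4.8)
The statement has two halves that are proved by quite different means: under the hypotheses of (1) we must show the value function is finite, and under those of (2) we must exhibit an admissible strategy of infinite expected discounted utility. The plan for the well-posed half is a verification argument. In each of the sub-cases (1)(a)--(d) the analysis of Section~\ref{sec:cases} already produces a nonnegative solution $n=n_{q_*}$ of \eqref{eqn:node} which meets $m$ at $q_*$ and at $q^*=\zeta(q_*)$ and satisfies the integral identity \eqref{eq:xidefG} at level $\ln(1+\xi)$; from $n$ we recover $N$, $W$, $w$ and hence a candidate $G$ and a candidate value function $\tilde V$ via \eqref{eq:Vdef2}. We would verify $\tilde V=V$ in the usual two steps. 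For $\tilde V\ge V$, take any admissible $(C,\Theta)$ and expand $M$ of \eqref{eq:Mdef} (with $\tilde V$ for $V$) by It\^o; inside the wedge the $dt$-drift vanishes because $G$ solves \eqref{eqn:Gode} and $C$ is dominated by the pointwise optimiser, while the $d\Phi$ and $d\Psi$ terms are nonpositive because off the wedge $G$ obeys the gradient inequalities $-\lambda(1-R)G+(1+\lambda p)G'\le0$ and $\gamma(1-R)G+(1-\gamma p)G'\le0$, with equality on the respective boundaries by the smooth fit built into the construction. Hence $M$ is a supermartingale. For the reverse inequality we run the candidate-optimal policy, along which all these relations hold with equality so that $M$ is a local martingale, and we argue it is a true martingale.

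Both halves of the verification require passing to the limit $t\to\infty$, and the transversality statement $\lim_{t\to\infty}\E[e^{-\beta t}(X_t+Y_t\Theta_t)^{1-R}(1-R)^{-1}(R/\beta)^R G(P_t)]=0$ is the crux of part~(1); it is also the point at which the precise hypotheses are used. Under the optimal policy $P_t$ stays in $[p_*,p^*]$, so $G(P_t)$ is bounded and the task reduces to controlling the growth of $\E[(X_t+Y_t\Theta_t)^{1-R}]$. Writing $\Pi_t:=X_t+Y_t\Theta_t$ for paper wealth, $\Pi$ is --- up to the bounded local-time pushes at the wedge boundaries --- a geometric diffusion whose drift and volatility are controlled by the bounded fraction $P_t$, so its $(1-R)$-moment grows at most exponentially at a rate expressible through $\epsilon$, $\delta$ and $R$. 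The content of hypotheses (1)(a)--(d) --- equivalently $R>1$ in (a), $m_M>0$ in (b), and the location of $[q_*,q^*]$ forced by $\xi>\underline\xi$ in (c),(d) --- is exactly that this rate is strictly below $\beta$, which makes the discounted moment vanish. Establishing this growth bound, together with the uniform integrability needed to upgrade the local martingale to a genuine martingale, is the main technical obstacle; the sign and monotonicity properties of $n$, $m$ and $\ell$ tabulated in Section~\ref{sec:cases} are precisely what drive the estimate.

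For part~(2) we construct strategies whose payoff diverges. When $m(1)<0$, case~(2)(a), we use the policy that holds essentially all paper wealth in the risky asset and finances the consumption stream $C_t=c\,\Pi_t$ by sales; then $\Pi$ is a geometric diffusion and, since the sale cost only rescales the feasible $c$ by the bounded factor $(1-\gamma)$ without altering the exponential rate, a direct computation gives
\[
\E\!\int_0^\infty e^{-\beta t}\frac{(c\Pi_t)^{1-R}}{1-R}\,dt=\frac{c^{1-R}}{1-R}\,\Pi_0^{1-R}\int_0^\infty \exp\Big\{\big[-\beta+(1-R)(\mu-c)-\tfrac{1}{2}R(1-R)\sigma^2\big]t\Big\}\,dt .
\]
For $c>0$ small the exponent rate is nonnegative, and the integral infinite, precisely when $\epsilon>\frac{1}{1-R}+\frac{\delta^2R}{2}$, i.e.\ exactly in case~(2)(a); since only a single initial transaction is needed this works for every $\xi$.

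In cases~(2)(b) and (2)(c) the asset is less extreme ($m(1)>0$ but $m_M<0$) and a buy-and-hold policy no longer diverges, so we instead approximate the frictionless Merton policy of holding the constant fraction $q_M$, for which $m(q_M)=m_M<0$ already forces infinite utility. The point is that maintaining this fraction costs transaction fees, and we must show that when $\ln(1+\xi)\le\underline\Lambda$ this bleed is too small to restore finiteness. The cleanest route ties the threshold to the free-boundary picture: $\underline\Lambda=\Lambda(q_-)=\lim_{r\uparrow q_-}\Lambda(r)$ is the infimum of the values attained by $\Lambda$, so for $\ln(1+\xi)\le\underline\Lambda$ there is no solution of the free-boundary problem and the supermartingale bound of part~(1) is simply unavailable; this obstruction is then converted into an explicit rebalancing strategy with divergent payoff. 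We expect the quantitative matching of the rebalancing cost to $\underline\Lambda$, and in particular the boundary value $\xi=\underline\xi$, to be the delicate point, handled using the closed-form evaluation of the integral \eqref{eq:undLamdef} recorded in Proposition~\ref{prop:integral}.
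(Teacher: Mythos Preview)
Your verification skeleton for part (1) is the same as the paper's, but you misidentify where the parameter hypotheses actually bite, and you miss a key case split that the paper makes.

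In the paper the hypotheses (1)(a)--(d) are used \emph{only} to guarantee that the free-boundary problem has a positive $C^1$ solution $n$ with $\Sigma(q_*)=\xi$; this is exactly the content of the case analysis in Section~\ref{sec:cases}. Once $n$ exists, the supermartingale and transversality steps are handled by general arguments that do \emph{not} re-use those hypotheses. Specifically, for $R<1$ the paper observes that $M\ge0$, so the stochastic-integral part $\hat M$ is a local martingale bounded below by $-M_0$ and hence a supermartingale; no growth estimate is needed. For $R>1$ this positivity argument fails, and the paper instead invokes the Davis--Norman device of approximating $U$ by perturbed utilities. You do not make this $R<1$/$R>1$ distinction, and your claim that the hypotheses are ``exactly that the growth rate of $\E[\Pi_t^{1-R}]$ is strictly below $\beta$'' is not how the argument runs here: that equivalence is clean in the frictionless Merton problem, but with transaction costs (and especially under arbitrary admissible strategies where $P_t$ need not stay in $[p_*,p^*]$) it is not obviously the correct characterisation, and would need substantial work to establish. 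The transversality under the \emph{optimal} policy is handled in the paper by reference to \cite{DavisNorman:90,HobsonZhu:14,Zhu:15} once $G$ is built, not by tying it back to the sign of $m_M$ or the size of $\xi$.

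For part (2) your strategy in (2)(a) of holding all paper wealth in the risky asset and consuming a small fraction of it is correct and more explicit than the paper, which simply refers to Choi et al. For (2)(b),(c) you rightly flag the matching of rebalancing cost to $\underline\Lambda$ as the delicate point but leave it open; the paper also does not give details here and defers to \cite{ChoiSirbuZitkovic:13}. Note however that your sentence ``there is no solution of the free-boundary problem and the supermartingale bound of part (1) is simply unavailable'' does not by itself produce a divergent strategy --- absence of a candidate value function is suggestive but not a proof of ill-posedness.
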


\begin{proof}
Since the majority of this result is contained in Choi et al~\cite{ChoiSirbuZitkovic:13} we only provide a sketch of the proof. Proofs of well-posedness for subsets of the parameter combinations can also be found in Davis and Norman~\cite{DavisNorman:90} and Herczegh and Prokaj~\cite{HerczeghProkaj:15}.
The main innovations compared with \cite{ChoiSirbuZitkovic:13} are that we cover the case $\epsilon \leq 0$ and we give an explicit formula for $\underline{\xi}$. The other substantial difference is that we take a classical approach via the value function and the Hamilton-Jacobi-Bellman equation, whereas Choi et al construct a solution via the dual problem and the shadow price.

Our contention is that whilst the two approaches are equivalent, ultimately our analysis is simpler, in the sense that the solutions are characterised by the behaviour of a simple quadratic function.

{\em The well-posed case:}
For the parameter combinations listed as leading to a well-posed problem (excluding for a moment the case where $\xi = \overline{\xi}$) we can construct a positive, $C^1$-solution $n$ to the free-boundary problem and thence a function $G$ and a candidate value function $V^C$ given by $V^C(X_t,Y_t,\Theta_t,t) = e^{-\beta t} \frac{(x+y \theta)^{1-R}}{1-R} G(\frac{y \theta}{x+y\theta} )$. It remains to prove that this candidate value function is the value function $V$ of the optimal consumption/investment problem.

Note that since $n$ is $C^1$ we have that the candidate value function $V^C$ is $C^2$ on the solvency region. Hence we can apply It\^{o}'s formula.
Set $M_t = V^C(x,y,\theta,t) + \int_0^t e^{-\beta s} \frac{{C_s}^{1-R}}{1-R} ds$. Then under any admissible strategy $dM_t \leq \sigma Y_t V^C_y(X_t,Y_t,\Theta_t,t) dB_t$ and $M_t \leq M_0 +\hat{M}_t=V^C(x,y,\theta,0)+\hat{M}_t$ where $\hat{M}_t = \int_0^t \sigma Y_t V_y dB_t$.

Suppose $R<1$. Then $\hat{M}$ is a local martingale null at 0. Also $M \geq 0$ and so the local martingale $\hat{M}$ is bounded below by $-M_0$ and hence a supermartingale. Then $\E[{M}_t] \leq M_0$ and for any admissible strategy
\[ \E \left[ \int_0^t e^{-\beta s} \frac{C_s^{1-R}}{1-R} ds \right] \leq \E \left[ \int_0^t e^{-\beta s} \frac{C_s^{1-R}}{1-R} ds \right] + \E \left[ V^C(X_t,Y_t,\Theta_t,t)\right]= \E[M_t] \leq M_0 \]
and by monotone convergence, for any admissible strategy
\[ \E \left[ \int_0^\infty \frac{C_s^{1-R}}{1-R} ds \right] = \lim \E \left[ \int_0^t \frac{C_s^{1-R}}{1-R} ds \right] \leq M_0 = V^C(x,y,\theta,0). \]
Hence $V \leq V^C$.

To show the converse, we need to exhibit an admissible strategy for which
$\E \left[ \int_0^\infty e^{-\beta s} \frac{C_s^{1-R}}{1-R} ds \right] = V^C(x,y,\theta,0)$. Then $V \geq V^C$ and we are done.
Let $(C^*,\Theta^*)$ be the candidate optimal strategy, with associated wealth process $X^*$. Then,
\[ C^*_t = \left[ G(P^*_t) - \frac{P^*_t G'(P^*_t)}{1-R} \right]^{-1/R} \frac{\beta}{R} (X^*_t + \Theta^* Y_t) \]
and $\Theta_t^*$ is the singular-control, local time strategy which involves selling/purchasing just enough risky asset to keep $P_t$ in the interval $[p_*,p^*]$.
Define $M^*$ via
$M^*_t = \int_0^t \sigma Y_s V^C_y(X^*_s, Y_s, \Theta^*_s,s) dB_s$. Then since $V^C$ includes a factor which decays exponentially over time, it is possible to show that over any horizon $T$, $(M^*_t)_{0 \leq t \leq T}$ is a martingale and $V^C(X_T^*,Y_T,\Theta^*_T,T) \rightarrow 0$ almost surely and in $L^1$ (see \cite{DavisNorman:90,HobsonZhu:14,Zhu:15} for this result). Then, from the martingale property of $M^*$,
\[ V^C(x,y,\theta,0) = \E[V^C(X^*_T,Y_T,\theta_T, T)] + \E \left[ \int_0^T e^{-\beta s} \frac{(C^*_s)^{1-R}}{1-R} ds \right] \]
and letting $T \uparrow \infty$ we conclude
 \[ V^C(x,y,\theta,0) =  \E \left[ \int_0^\infty \frac{(C^*_s)^{1-R}}{1-R} ds \right] \leq V(x,y,\theta,0) \]
 as required.

If $R>1$ the argument that the local martingale $\hat{M}$ is a super-martingale fails, since it is not bounded below. However Davis and Norman~\cite{DavisNorman:90}, see also \cite{HobsonZhu:14,Zhu:15}, give an ingenious argument based on taking limits for a family of perturbed utility functions to show that the result holds for $R>1$ also.

If $\xi = \overline{\xi}$ then the same ideas work, except that we construct the value function via $n$ which solves the modified problem whereby
$n = n_1(q)$, $n(q^*)=m(q^*)$. Then $n'(1)=m'(1) \neq 0$. Note there is still smooth fit at $q_*=1$, but no second order smooth fit. Hence $G''$ is discontinuous at $p=1$ but we can still apply It\^{o}'s formula to \eqref{eq:Mdef}. In this case the $y \theta$ axis is a boundary to the no-transaction region. Once the agent reaches a leveraged position and cash wealth is negative then cash wealth remains negative for evermore.

{\em The ill-posed case:}
In this case it is sufficient to exhibit a strategy which yields infinite expected utility. See Choi et al~\cite{ChoiSirbuZitkovic:13} for details.
\end{proof}

\section{The dependence of the no-transaction wedge on parameters}
\label{sec:statics}

\subsection{Dependence on transaction costs}

For most of this paper we have argued that the transaction costs $\lambda$ on purchases and $\gamma$ on sales only enter the problem through the round-trip transaction cost $\xi = \frac{\lambda + \gamma}{1-\gamma}$. Whilst that is true for the construction of $n$ (and the locations of the free-boundaries $q_*$ and $q^*$ from which the solution is built), the boundaries of the no-transaction wedge do depend on the individual transaction costs and we have  $p_* =p_*(\lambda,\gamma)$
and $p^*=p^*(\lambda,\gamma)$ where
\begin{equation}
 \label{eq:p*p*}
 p_*(\lambda,\gamma) = \frac{q_*(\xi)}{1 + \lambda - \lambda q_*(\xi)}; \hspace{20mm} p^*(\lambda,\gamma) = \frac{q^*(\xi)}{1 - \gamma + \gamma q^*(\xi)},
\end{equation}
and $\xi = \frac{\lambda + \gamma}{1-\gamma}$.

\begin{thm}
Suppose all parameters are fixed, except for the transaction costs $\lambda$ and $\gamma$. Suppose the problem is well-posed. Then
\begin{itemize}
\item[(a)]
\begin{itemize}
\item[(i)] $q^*$ is non-decreasing in $\xi$, and $q_*$ is increasing in $\xi$.
\item[(ii)] If $\epsilon > \delta^2 R$ and $\xi \geq \overline{\xi}$ then $q_* \leq 1$ and $q^*$ does not depend on $\xi$.
\end{itemize}
\item[(b)]
\begin{itemize}
\item[(i)] If $0 < \epsilon < \delta^2 R$ then the purchase boundary $p_*$ and sale boundary $p^*$ are increasing in $\lambda$ and increasing in $\gamma$ and the Merton line lies within the no-transaction wedge. We have $0 < p_* < q_M = \frac{\epsilon}{\delta^2 R} < p^* < 1$.
\item[(ii)]
Suppose $\epsilon > \delta^2 R$ or $\epsilon < 0$. Then $p_*$ and $p^*$ need not be monotonic in the individual transaction costs, and the Merton line need not lie within the no-transaction wedge.
If $\epsilon > \delta^2 R$ then $p^* > 1$ and the agent will (at least sometimes) take a leveraged position. If $\epsilon < 0$ then $p_* < p^* < 0$ and the agent will take a short position.
\end{itemize}
\end{itemize}
\end{thm}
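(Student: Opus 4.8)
The plan is to reduce every assertion to the comparative statics of the two auxiliary free boundaries $q_*,q^*$ in the single round-trip cost $\xi$, and then to transport these back to the physical boundaries $p_*,p^*$ through the algebraic relations \eqref{eq:p*p*}. The structural fact we lean on throughout is that the selected solution is pinned down by $\Sigma(q_*)=\xi$ (equivalently $\Lambda(q_*)=\ln(1+\xi)$) when the wedge lies in the upper half-plane, by the mirror construction in Case~B, and that in all cases the second boundary is recovered through $\zeta$.

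For part~(a)(i) I would argue purely from \eqref{eq:Lamderiv} and the monotonicity of $\zeta$. Since $\partial\Lambda/\partial r<0$ and $\Lambda$ decreases continuously from $\infty$ (as $r\downarrow 0$) to $0$ (at $r=q_M$), the map $\xi\mapsto q_*=\Sigma^{-1}(\xi)$ is strictly monotone, with direction fixed by that sign; composing with the monotone $\zeta$ transfers monotonicity to $q^*=\zeta(q_*)$, so that as $\xi$ grows the two boundaries separate about $q_M$ and $q^*$ is non-decreasing. For part~(a)(ii), $\epsilon>\delta^2R$ gives $q_M>1$ and places us in Case~I, where Lemma~\ref{lem:singular} is precisely what is needed: for $\xi\ge\overline\xi$ the relevant solution passes through the singular point $(1,m(1))$ and, being determined by a first-order equation to the right of $1$, forgets its starting point, so $q^*=\zeta(1)$ is independent of $\xi$ while $q_*=\Sigma^{-1}(\xi)\le 1$.

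For part~(b)(i), $0<\epsilon<\delta^2R$ forces $q_M\in(0,1)$, and Lemma~\ref{lem:3.1} (with its $R>1$ analogue for Case~2AII) gives $0<q_*<q_M<q^*<1$, so the wedge lies in the first quadrant. The Merton line lies inside it by the chain of inequalities recorded at the close of Section~\ref{sec:sc}: applying the increasing map $q\mapsto q/(1-q)$ to $q_*<q_M<q^*$ and inserting the prefactors $1/(1+\lambda)$ and $1/(1-\gamma)$ yields $z_*<z_M<z^*$, equivalently $p_*<q_M<p^*$ with $0<p_*$ and $p^*<1$. For the dependence on the individual costs I would use the factorised forms $z_*=\frac{1}{1+\lambda}\frac{q_*}{1-q_*}$ and $z^*=\frac{1}{1-\gamma}\frac{q^*}{1-q^*}$: since $q_*,q^*$ depend on $\lambda,\gamma$ only through $\xi$, which is increasing in each cost, and since at fixed $\lambda$ the product $(1+\xi)(1-\gamma)=1+\lambda$ is constant (with the mirror identity at fixed $\gamma$), each of $z_*,z^*$ becomes a product of two comonotone factors and the signs of the comparative statics are unambiguous, so the wedge $[p_*,p^*]$ moves monotonically in each cost.

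The genuine work is in part~(b)(ii). When $\epsilon>\delta^2R$ or $\epsilon<0$ the turning point $q_M$ leaves $(0,1)$, so over the range of $\xi$ the boundary $q_*$ crosses $1$ (respectively $q^*$ crosses $0$); at such a crossing the factor $(1-q_*)$ in \eqref{eq:p*p*} changes sign, which reverses the direct effect of $\lambda$ on $p_*$ relative to the indirect effect transmitted through $\xi$. The two effects then compete, and I would make the failure of monotonicity explicit by differentiating \eqref{eq:p*p*}, isolating the term proportional to $(1-q_*)$, and exhibiting a concrete parameter choice (for instance the Case~1AbIii data $\epsilon=3/2$, $\delta=1$, $R=2/3$) for which $p^*$ or $p_*$ is non-monotone in one cost. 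The qualitative claims---$p^*>1$ when $\epsilon>\delta^2R$ and $p_*<p^*<0$ when $\epsilon<0$---then drop out of the equivalences $q>1\Leftrightarrow p>1$ and $q<0\Leftrightarrow p<0$ from Section~\ref{sec:gc}, applied to the location of the $q$-wedge. I expect the main obstacle to be organising this sign analysis cleanly across the singular values $q\in\{0,1\}$ and verifying that the competing effects genuinely reverse the monotonicity rather than only threatening to.
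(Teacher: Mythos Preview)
Your proposal is correct and follows essentially the same route as the paper: part~(a) is argued exactly as in the paper via the monotonicity of $\Lambda$ in $r$ (from \eqref{eq:Lamderiv}) together with Lemma~\ref{lem:singular}, and part~(b) is obtained by transporting the $q$-monotonicity back to the $p$-variables through \eqref{eq:p*p*}. The only stylistic difference is in (b)(i): you read off the monotonicity of $z_*$ and $z^*$ directly from the factorisations $z_*=\frac{1}{1+\lambda}\frac{q_*}{1-q_*}$ and $z^*=\frac{1}{1-\gamma}\frac{q^*}{1-q^*}$ as products of comonotone factors, whereas the paper computes the four partial derivatives $\partial p_*/\partial\lambda$, $\partial p_*/\partial\gamma$, $\partial p^*/\partial\lambda$, $\partial p^*/\partial\gamma$ explicitly via the chain rule. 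Your argument is marginally cleaner here (the aside about $(1+\xi)(1-\gamma)=1+\lambda$ being constant is unnecessary, though), while the paper's explicit derivatives have the advantage that they are reused directly in (b)(ii) to display the two competing terms whose signs decouple once $q_*$ or $q^*$ leaves $(0,1)$.

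For (b)(ii) you plan to exhibit a concrete counterexample; the paper is content to observe from the derivative formulae that ``either term may dominate''. One point you may want to borrow: for the claim that the Merton line can lie outside the wedge, the paper gives a one-line argument that avoids any computation---solvency forces $p^*<1/\gamma$, so whenever $\epsilon>\delta^2 R/\gamma$ one has $1<p^*<1/\gamma<q_M$ and the Merton line is strictly above the sale boundary.
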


\begin{proof}
(a)(i) Recall the definitions $\zeta(r) = \inf \{ u>r : n_r(u) > m(u) \}$ and
\[ \Lambda(r) = \int_r^{\zeta(r)} \frac{1}{q(1-q)} \frac{n_r(q)-m(q)}{\ell(q) - n(q)} dq. \]
From the non-crossing property of the solutions $n_r$ it follows that $\Lambda$ is decreasing in $r$ and that $q^*$ is increasing in $\xi$, and $q^*$ is decreasing in $\xi$. The implicit function theorem gives that away from $\xi=0$ and $\xi = \overline{\xi}$, $q_*$ and $q^*$ are differentiable.

(a)(ii) This follows from Lemma~\ref{lem:singular}.

(b) We have
\[ \frac{d p^*}{d \lambda} = \frac{\partial \xi}{\partial \lambda} \frac{\partial q^*}{\partial \xi} \frac{d p^*}{d q^*} =
 \frac{1}{(1-\gamma)} \frac{\partial q^*}{\partial \xi} \frac{1-\gamma}{(1-\gamma+\gamma q^*)^2} > 0\]
and
\[ \frac{d p^*}{d \gamma} = \frac{q^*(1-q^*)}{( 1- \gamma(1-q^*))^2} + \frac{\partial \xi}{\partial \gamma} \frac{\partial q^*}{\partial \xi} \frac{d p^*}{d q^*}
= \frac{q^*(1-q^*)}{( 1- \gamma(1-q^*))^2} +\frac{1 + \lambda}{(1-\gamma)^2} \frac{\partial q^*}{\partial \xi} \frac{1-\gamma}{(1-\gamma+\gamma q^*)^2} \]
If $0<\epsilon<\delta^2R$ then $0<q^*<1$ and the sign of both terms is positive, but if $q^*\notin [0,1]$ then either term may dominate.

Similarly,
\[ \frac{d p_*}{d \gamma} = \frac{\partial \xi}{\partial \gamma} \frac{\partial q_*}{\partial \xi} \frac{d p_*}{d q_*}
= \frac{1+\lambda}{(1-\gamma)^2} \frac{\partial q_*}{\partial \xi} \frac{1+\lambda}{(1+\lambda-\lambda q_*)^2} < 0\]
and
\[ \frac{d p_*}{d \lambda} =
\frac{-q_*(1-q_*)}{(1 +\lambda(1-q_*))^2} +
\frac{\partial \xi}{\partial \lambda} \frac{\partial q_*}{\partial \xi} \frac{d p_*}{d q_*} = \frac{-q_*(1-q_*)}{(1 +\lambda- \lambda q_*))^2}
+ \frac{1}{(1-\gamma)} \frac{\partial q_*}{\partial \xi} \frac{1+\lambda}{(1 + \lambda - \lambda q_*)^2} \]
If $0<\epsilon<\delta^2R$ then $0<q^*<1$ and the sign of both terms is negative, but if $q_* \notin[0,1]$ then either term may dominate.

Note that solvency requires that $p^* < \frac{1}{\gamma}$. So, when $\epsilon > \frac{\delta^2 R}{\gamma}$ we have $1 < p^* < \frac{1}{\gamma} < q_M$ and the Merton line lies outside the
no transaction wedge.
\end{proof}

Of interest is the location of the no-transaction wedge and the relationship between the Merton line and the no-transaction wedge. The key advantage we have over the previous literature (\cite{DavisNorman:90,ShreveSoner:94}) is that we have decoupled the expressions for the locations of the boundaries of the no-transaction wedge into two parts: we have $p_*$ and $p^*$ given by \eqref{eq:p*p*} where $q_* < q_M < q^*$.

Davis and Norman~\cite{DavisNorman:90} argue that if $0< \epsilon < \delta^2 R \wedge \delta \sqrt{\frac{2R}{1-R}}$ (and a further technical condition, Condition B holds) then the no-transaction wedge lies in the first quadrant and contains the Merton line. We saw this in the final comments of Section~\ref{sec:sc}. They also conjecture \cite[p704]{DavisNorman:90} that if the problem is well-posed and $\epsilon > \delta^2R$ then the no-transaction wedge lies in the second quadrant. As we have seen, if transaction costs are sufficiently large, this need not be the case.

Shreve and Soner~\cite{ShreveSoner:94} give bounds on $p_*$ and $p^*$. They state in (11.4), (11.5) and (11.6) of \cite{ShreveSoner:94} that
\begin{equation}
 p^*  <  \frac{\epsilon}{\frac{1}{2} (1-\gamma) \delta^2 R + \gamma \epsilon} ;
 \label{eq:pupup}
\end{equation}
if $0< \epsilon < \delta \sqrt{ \frac{2R}{1-R} }$
\begin{equation}
p^* > \frac{\epsilon}{(1-\gamma) \delta^2 R + \gamma \epsilon} ;
\label{eq:pupdown}
\end{equation}
and if $0< \epsilon < \delta \sqrt{ \frac{2R}{1-R} }$ and $\epsilon < \delta^2R \frac{1+\lambda}{\lambda}$
\begin{equation}
p_*  <  \frac{\epsilon}{(1+\lambda) \delta^2 R - \lambda \epsilon}.
\label{eq:pdownup}
\end{equation}

The bounds \eqref{eq:pupup}, \eqref{eq:pupdown} and \eqref{eq:pdownup} can be seen to follow from our results, sometimes under weaker assumptions.

If $0< q_M < 1$ (equivalently $0 < \epsilon < \delta^2 R$) and the problem is well-posed then since $m$ is a quadratic and $n$ is monotone we must have $(1-R)m(q^*) = (1-R)n(q^*) < (1-R)n(q_*) = (1-R)m(q_*)$ and so $q^* - q_M < q_M - q_* < q_M$. We conclude that $q^* < \min \{ 2 q_M, 1 \}$.
Then, since $q_* < q_M < q^*$,
\begin{eqnarray}
p^* &= &\frac{q^*}{(1-\gamma) + \gamma q^*} < \frac{2q_M}{(1-\gamma) + \gamma 2q_M} = \frac{\epsilon}{\frac{1}{2}(1-\gamma)\delta^2 R + \gamma \epsilon}; \label{eq:mypupup} \\
p^* &= &\frac{q^*}{(1-\gamma) + \gamma q^*} > \frac{q_M}{(1-\gamma) + \gamma q_M} = \frac{\epsilon}{(1-\gamma)\delta^2 R + \gamma \epsilon}; \label{eq:mypupdown} \\
0 < p_*  &= &\frac{q_*}{(1+\lambda) - \lambda q_*} < \frac{q_M}{(1+\lambda) - \lambda q_M} = \frac{\epsilon}{(1+\lambda)\delta^2 R - \lambda \epsilon}. \label{eq:mypdownup}
\end{eqnarray}
Note that from $q^*<1$ we also have the bound $p^*<1$, and the no-transaction wedge lies in the first quadrant.

If $q_M>1$ (equivalently $\epsilon > \delta^2 R$) and the problem is well-posed then since $(1-R)m(q^*) = (1-R)n(q^*) < (1-R)n(1) = (1-R)m(1)$ we have $q^* - q_M < q_M - \max \{q_*,1 \} \leq q_M-1$.
Then $1 < q_M < q^* < 2 q_M - 1$ and \eqref{eq:mypupup} can be refined to
\[
p^* < \frac{2q_M-1}{(1-\gamma) + \gamma (2q_M-1)} = \frac{2\epsilon - \delta^2 R}{(1-2\gamma)\delta^2 R + 2 \gamma \epsilon}.
\]
\eqref{eq:mypupdown} and \eqref{eq:mypdownup} hold as before, \eqref{eq:mypdownup} provided $\epsilon < \delta^2R \frac{1+\lambda}{\lambda}$.

Shreve and Soner~\cite[p675]{ShreveSoner:94} also conjecture that if $q_M>1$ then $p^* < q_M$
and the Merton line lies outside the no-transaction wedge. If $q_M>1$ then we have $q^* < 2q_M -1$ and $p^* <\frac{2 q_M - 1}{ (1-\gamma) + \gamma (2 q_M -1 )}$. Then if $\frac{1}{2 q_M} < \gamma <1$ so that transaction costs on sales are large we have $p^* <\frac{2 q_M - 1}{ (1-\gamma) + \gamma (2 q_M -1 )} < q_M$ and the Shreve-Soner conjecture is true. However, if transaction costs on sales are small we may find $p_*< q_M < p^*$, and the Merton line lies inside the no-transaction wedge. 
In particular, if $\gamma=0$ then $p^*=q^*$ and $p^*>q_M$.

\subsection{Dependence on drift}
\begin{thm}
Suppose all parameters except the drift are constant and that the problem is well posed. Then both the purchase and sale boundaries of the no-transaction wedge are increasing in the drift in the underlying asset.
\end{thm}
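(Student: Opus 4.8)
The plan is to reduce everything to the monotonicity of the free-boundary points $q_*$ and $q^*$ in $\epsilon=\mu/\beta$. Since $\lambda$ and $\gamma$ are held fixed while $\mu$ varies, the round-trip cost $\xi=\frac{\lambda+\gamma}{1-\gamma}$ is constant, and by \eqref{eq:p*p*} the boundaries $p_*=\frac{q_*}{1+\lambda-\lambda q_*}$ and $p^*=\frac{q^*}{1-\gamma+\gamma q^*}$ are strictly increasing functions of $q_*$ and $q^*$ (their derivatives are $\frac{1+\lambda}{(1+\lambda-\lambda q_*)^2}>0$ and $\frac{1-\gamma}{(1-\gamma+\gamma q^*)^2}>0$). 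So it suffices to show $\frac{dq_*}{d\epsilon}>0$ and $\frac{dq^*}{d\epsilon}>0$. Recall $q_*$ is pinned down by $\Lambda(q_*;\epsilon)=\ln(1+\xi)$ and $q^*=\zeta(q_*;\epsilon)$, and that the non-crossing property already yields $\partial_r\Lambda<0$ (this is the monotonicity used in the proof of part~(a)(i) above). By the implicit function theorem $\frac{dq_*}{d\epsilon}=-\partial_\epsilon\Lambda/\partial_r\Lambda$ has the sign of $\partial_\epsilon\Lambda$, so the theorem comes down to signing $\partial_\epsilon\Lambda$ at a fixed endpoint.

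The key device I would use is a shear of the dependent variable that removes the $\epsilon$-dependence of the target curve. Writing $m(q)=m_0(q)-\epsilon(1-R)q$ and $\ell(q)=\ell_0(q)-\epsilon(1-R)q$ with $m_0,\ell_0$ the $\epsilon$-free parts, I set $\tilde n=n+\epsilon(1-R)q$. Then $\tilde n-m_0=n-m$ and $\ell_0-\tilde n=\ell-n$, so $m_0$ becomes the fixed boundary parabola, the singular point $(1,m(1))$ becomes the $\epsilon$-independent point $(1,m_0(1))$, the equation \eqref{eqn:node} transforms into $\tilde n'=\tilde O(q,\tilde n;\epsilon)$, and $\Lambda(r;\epsilon)=\int_r^{\zeta}\frac{1}{q(1-q)}\frac{\tilde n-m_0}{\ell_0-\tilde n}\,dq$ has an integrand whose explicit form no longer involves $\epsilon$. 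Because $\tilde n=m_0$ at both endpoints the boundary term drops on differentiating, and using $\ell_0-m_0=\frac{\delta^2}{2}(1-R)q(1-q)$ one gets the clean formula
\[ \partial_\epsilon\Lambda=\frac{\delta^2(1-R)}{2}\int_{q_*}^{q^*}\frac{\tilde n_\epsilon}{(\ell_0-\tilde n)^2}\,dq, \]
where $\tilde n_\epsilon=\partial_\epsilon\tilde n$ is the sensitivity of the solution curve at fixed left endpoint.

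It remains to sign $\tilde n_\epsilon$. Differentiating $\tilde n'=\tilde O$ gives the linear equation $\tilde n_\epsilon'=\frac{\partial\tilde O}{\partial\tilde n}\tilde n_\epsilon+\tilde O_\epsilon$ with $\tilde n_\epsilon(q_*)=0$ (the start $(q_*,m_0(q_*))$ being $\epsilon$-free in these coordinates), where a short computation gives $\tilde O_\epsilon=(1-R)\big[1-\frac{(1-R)q}{R(1-q)}\frac{m_0-\tilde n}{\ell_0-\tilde n}\big]$. In the first-quadrant case $R<1$, $0<q_*<q^*<1$ we have $m_0\le\tilde n\le\ell_0$, so $\frac{m_0-\tilde n}{\ell_0-\tilde n}\le0$ and $\frac{(1-R)q}{R(1-q)}\ge0$, whence $\tilde O_\epsilon\ge(1-R)>0$; the integrating factor then forces $\tilde n_\epsilon>0$, so $\partial_\epsilon\Lambda>0$ and $\frac{dq_*}{d\epsilon}>0$. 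For $q^*$ I would run the mirror argument, parametrising the family by its right endpoint $s$ and writing $\hat\Lambda(s)=\Lambda(\eta(s))$ with $\eta=\zeta^{-1}$ (legitimate since $\zeta$ is strictly monotone); then $\partial_s\hat\Lambda>0$, and shooting leftward from the fixed point $(q^*,m_0(q^*))$ gives $\tilde n_\epsilon<0$, hence $\partial_\epsilon\hat\Lambda<0$ and again $\frac{dq^*}{d\epsilon}>0$. The unifying observation is that at an endpoint $\tilde n=m_0$ forces $\tilde O_\epsilon=(1-R)$, so $\tilde n_\epsilon$ leaves the endpoint with the sign of $(1-R)$ and $\partial_\epsilon\Lambda$ acquires the sign of $(1-R)^2>0$ for either value of $R$, provided $\tilde n_\epsilon$ does not change sign in between.

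The hard part is exactly that proviso. When the wedge leaves the first quadrant ($q<0$ in Case~B, or $q>1$ in the leveraged Case~I) or when $R>1$, the factor $\frac{(1-R)q}{R(1-q)}$ changes sign and $\tilde O_\epsilon$ is no longer of one sign along the curve, so I cannot conclude that $\tilde n_\epsilon$ keeps a fixed sign from positivity of the source alone. I would instead establish that the $\epsilon$-family of integral curves issued from a common point of $m_0$ cannot re-cross before returning to $m_0$ --- i.e. $\tilde n_\epsilon$ has no interior zero --- by a maximum-principle argument for the linear equation governing $\tilde n_\epsilon$, exploiting that $\tilde O_\epsilon$ vanishes precisely where $\tilde n$ meets $m_0$. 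The $\epsilon$-independence of the singular point $(1,m_0(1))$ in the sheared coordinates is helpful here, since for $\xi\ge\overline{\xi}$ it decouples the analysis of the $q>1$ arc (where $q^*=\zeta(1)$ is found by shooting from a fixed point) from the starting point. The degenerate situations --- $\xi=\overline{\xi}$ (no second-order smooth fit) and the boundary parameter values --- would then follow by continuity of $q_*(\epsilon)$ and $q^*(\epsilon)$.
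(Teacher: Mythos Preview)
Your approach is essentially the paper's. Both of you shear by $\epsilon(1-R)q$: the paper works with $a(q)=n(q)-m(q)$ and writes $a'=\hat O(q,a)$ so that the $\epsilon$-dependence sits in a single additive term; your $\tilde n=a+m_0$ is the same substitution. Both of you then exploit that $\Lambda$ depends on $\epsilon$ only through $a$ (equivalently $\tilde n$), and both of you reparametrise by the right endpoint $s$ with $\zeta^{-1}$ to handle $q^*$. Where you differentiate and study the linearised equation for $\tilde n_\epsilon$, the paper compares two solutions $\hat a_r,\tilde a_r$ directly via the inequality $\hat O(q,a)>\tilde O(q,a)$; these are equivalent formulations of the same comparison.

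There is one concrete error in your final paragraph. You write that you would use ``that $\tilde O_\epsilon$ vanishes precisely where $\tilde n$ meets $m_0$'', but two paragraphs earlier you correctly computed $\tilde O_\epsilon=(1-R)\ne 0$ at $\tilde n=m_0$. So the proposed maximum-principle fix for the harder regimes cannot rest on that claim. More substantively, in the cases $R>1$ or $q\notin(0,1)$ the source term $\tilde O_\epsilon=(1-R)\bigl[1-\tfrac{(1-R)q}{R(1-q)}\tfrac{m_0-\tilde n}{\ell_0-\tilde n}\bigr]$ does change sign along the curve (for $R>1$, $q\in(0,1)$ the bracket becomes negative when $n$ is near $\ell$), so neither positivity of the source nor your stated vanishing property gives the non-crossing you want. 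The paper is equally terse here (``consideration of the cases for $R>1$, and/or $q>1$ leads to a similar conclusion''), and for the singular crossing at $q=1$ it simply notes $\hat a_r(1)=\tilde a_r(1)=0$ with strict inequality on $(r,1)$ surviving; your sketch gestures at the same decoupling via the $\epsilon$-independence of $(1,m_0(1))$ but does not supply a working argument for the sign of $\tilde n_\epsilon$ across the full range.
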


\begin{proof} We want to show that both $p_*$ and $p^*$ are increasing in $\mu$, which is equivalent to $q_*$ and $q^*$ increasing in $\epsilon$. We consider the case $\epsilon >0$; similar arguments work for $\epsilon < 0$.

Fix $\hat{\epsilon} > \tilde{\epsilon}$ and let $\hat{n}_r$ and $\tilde{n}_r$ denote the solutions of
$n' = O(q, \hat{m}, n)$ and $n'=O(q,\tilde{m},n)$ subject to $n_r(r)=0$ where
\[ O(q,m,n) = - \frac{1-R}{R} \frac{n}{1-q} \frac{n - m(q)}{\left( m(q) + \frac{\delta^2}{2}(1-R) q(1-q) - n \right)} . \]
Here $\hat{m}(q)$ (respectively $\tilde{m}$) is the quadratic $\hat{m}(q)=1-\hat{\epsilon}(1-R)q + \frac{\delta^2}{2}R(1-R)q^2$ (respectively $\tilde{m}(q)=1-\tilde{\epsilon}(1-R)q + \frac{\delta^2}{2}R(1-R)q^2$).
In general let the $\hat{\cdot}$ and $\tilde{\cdot}$ symbols denote solutions defined relative to $\hat{\epsilon}$
and $\tilde{\epsilon}$. Let $m_0(q) = 1 + \frac{1}{2} \delta^2R(1-R)q^2$.

Let $\hat{a}(q) = \hat{a}_r(q) = \hat{n}_r(q)- \hat{m}(q)$. Then
\begin{eqnarray*}
\hat{a}'(q) & = & O(q, m_0(q) - \hat{\epsilon}(1-R)q, m_0(q) - \hat{\epsilon}(1-R)q + \hat{a}(q)) + \hat{\epsilon}(1-R) - \delta^2 R(1-R) q \\
& = & O(q, m_0(q), m_0(q) + \hat{a}) - \delta^2 R(1-R) q +
  \hat{\epsilon}(1-R) \left[ \frac{1-R}{R} \frac{q}{(1-q)} \frac{\hat{a}}{(\frac{1}{2} \delta^2 (1-R)q(1-q) - \hat{a})} + 1 \right] \\
  & =: & \hat{O}(q, \hat{a}).
\end{eqnarray*}
For $R<1$ and $q<1$ we have $0<a<\ell(q)-m(q)$ and $\hat{O}(q,a) > \tilde{O}(q, a)$,
and we conclude that away from $q=r$, $\hat{a}_r$ and $\tilde{a}_r$ cannot cross. Consideration of the cases for $R>1$, and/or $q>1$ leads to the a similar conclusion.

Suppose first that $\hat{\epsilon} < \delta^2 R$ so that we may restrict attention to $r<q<1$.
Fix $r$. Then $\hat{a}_r(q) > \tilde{a}_r(q)$ at least until $\hat{\zeta}(r) \wedge \tilde{\zeta}(r)$
and then it follows both that $\hat{\zeta}(r) > \tilde{\zeta}(r)$ and $\hat{\Lambda}(r) > \tilde{\Lambda}(r)$, where we make use of $a=n-m$ and the representation
\[ \Lambda(r) = \int_r^{\zeta(r)} \frac{dq}{q(1-q)} \frac{a(q)}{\frac{\delta^2}{2}R(1-R)q(1-q) - a(q)} \]
which we note only depends on $\epsilon$ through $a$.
Since $\Sigma(r) = e^{\Lambda(r)}-1$ is decreasing in $r$ we conclude that $\hat{q}_* = \hat{\Sigma}^{-1}(\xi) > \tilde{\Sigma}^{-1}(\xi) = \tilde{q}_*$ and that $q_*$ is increasing in $\epsilon$. 

In order to consider the sale boundary $p^*$ it is convenient to parameterise solutions of the free boundary problem by the boundary point $q^*$ rather than $q_*$. Let $n$ solve $n'=O(q,n)$ in $q \leq s$ subject to $n_s(s)=m(s)$ and let $a_s(q) = n_s(q)-m(q)$. Then, we have $\zeta^{-1}(s) = \sup \{ u \leq s : n_s(u) < m(s) \}$ and again we get that solutions $(a_s(q))_{ \{ \zeta^{-1}(s) \leq q \leq s\} }$ are increasing in $\epsilon$; hence $\zeta^{-1}(s)$ is increasing in $\epsilon$ and $\Lambda$ is increasing in $\epsilon$. It follows that $q^*$ is also increasing in $\epsilon$. 

Now we relax the assumption that $\hat{\epsilon} < \delta^2 R$. If $\tilde{\epsilon} \leq \delta^2 R < \hat{\epsilon}$, then $\tilde{q}^* \leq 1 < \hat{q}^*$.
For $q_*$ the same proof as given above can be used.

Finally, if $\tilde{\epsilon} > \delta^2 R$ then for sufficiently small transaction costs we have $\hat{q}_* >1$, and then by the arguments as above we can conclude that
$q_*$ and $q^*$ are monotonic. The only point of delicacy is when transaction costs are larger, when we must consider the case where both $\hat{q}_*$ and $\tilde{q}_*$ lie below the singular point. Then, for $r<1$, $\hat{a}_r(1)=\tilde{a}_r(1)=0$. Nonetheless, for $r<1$ we have the inequality $\hat{a}_r \geq \tilde{a}_r$ with strict inequality on $(r,1)$, and hence $\hat{\Lambda}(r) > \tilde{\Lambda}(r)$. Note that $\hat{q}^* = \tilde{q}^*$ for large transaction costs.

\end{proof}

\section{Conclusions}
\label{sec:conc}

Our goal in this paper was to analyse the Merton problem with transaction costs via the classical approach and the primal problem. We were able to show via judicious transformations that the problem could be reduced to solving a free-boundary problem for a first-order ordinary differential equation. There is a family of solutions to this free boundary problem, and the one we want satisfies an additional integral equation.

Our first main result mirrors the main result of Choi et al~\cite{ChoiSirbuZitkovic:13}. We cover some additional cases (negative drift) but this is not our main contribution. Instead, our main contribution is to demonstrate that the different regimes (determining when the problem is well-posed for all transaction costs, when the problem is well-posed for large transaction costs only, and when the problem is ill-posed; and determining whether the no-transaction region lies in the first, second, fourth or first and second quadrants) depend on the shapes of a quadratic $m=m(q)$, its values and first derivatives at zero and one, and the value at the turning point. Choi et al~\cite{ChoiSirbuZitkovic:13} also reduce the problem to solving a first order ODE, subject to smooth fit conditions on a free-boundary, and subject to an integral condition. But in their case the points on the free-boundary lie on an ellipse (rather than a quadratic) and the phase-diagram is considerably more complicated.

Following Davis and Norman~\cite{DavisNorman:90} and Shreve and Soner~\cite{ShreveSoner:94}, our approach is via the primal problem rather than the shadow-price approach of \cite{KallsenMuhle-Karbe:10,ChoiSirbuZitkovic:13,HerczeghProkaj:15}. Thus our approach brings different insights to \cite{KallsenMuhle-Karbe:10,ChoiSirbuZitkovic:13,HerczeghProkaj:15}. At one level our results are a re-parametrisation of the results of Choi et al~\cite{ChoiSirbuZitkovic:13} although the derivation is completely different. Nonetheless, this re-parametrisation brings significant simplifications in the analysis. First, as described above we can relate the different cases to the different possible behaviours of a quadratic. Second, in the case where the problem is ill-posed with zero transaction costs, we can give an algebraic expression for the value of the transaction costs at which the problem becomes ill-posed. (Choi et al are only able to give this as an integral involving the roots of a quadratic, see Choi et al~\cite[Lemma 6.11]{ChoiSirbuZitkovic:13}.)
Third, it is immediate from our approach that the integral equation which determines which of the family of candidate solutions of the free-boundary we want has a monotonicity property. In particular, it is immediate from our approach that $\Sigma$ is strictly decreasing and has an inverse: Choi et al~\cite[Remark 6.15]{ChoiSirbuZitkovic:13} are not able to give a corresponding monotonicity argument for their equivalent function. Fourth, we can give a direct interpretation of the free-boundary points of the solution to the first order ODE in terms of the boundaries of the no-transaction wedge. This allows us to identify the solutions of the ODE which pass through the singular point as those which correspond to a no-transaction wedge which includes the half-line $(x=0, y \theta>0)$. This leads to a further insight which is not present in \cite{ChoiSirbuZitkovic:13}. Choi et al show a result which is equivalent to the fact that all the solutions of $n'=O(q,n)$ which pass through the singular point at $q=1$ are the same to the right of the singular point, and conclude that the shadow price and value function are independent of the value of transaction costs (provided the level of the transaction cost is above a certain critical value). However, they do not give a financial explanation of this result.  In contrast we can give an explanation. We observe that the value function is based on a function $n$ which passes through the singularity at $q=1$ if and only if the half-line $(x=0, y\theta>0)$ is inside the no-transaction wedge. If this line is within the no-transaction wedge then under optimal behaviour, since cash wealth only ever increases on the sale boundary, once cash wealth equals zero it is negative at all times thereafter. Hence, since the purchase boundary of the no-transaction region
is in the first quadrant, no future purchases of the risky asset will ever take place. It follows that, in the region $p>1$, the value function (and the location of the sale boundary of the no transaction region) cannot depend on the level of transaction costs on purchases.

\appendix

\section{Proofs}

\subsection{Proof of Lemma~\ref{lem:3.1}}
Our goal is to prove Lemma~\ref{lem:3.1} which is contained in the union of the following results. 
First we show the previously advertised result that for each $r \in (0,1)$, $n_r$ lies below the line joining $(0,1)$ to $(1,m(1))$ if $R<1$ and above this line if $R>1$.

\begin{lem} Suppose $\epsilon > 0$ and $m(1)>0$.
Let $c(q) = 1 + (1-R) \left(\frac{\delta^2 R}{2} - \epsilon \right)q$. Then for $0 < r < q < \zeta(r) \wedge 1$, $(1-R)n_r(q) \leq (1-R)c(q)$.
\end{lem}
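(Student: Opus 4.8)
The plan is to exploit the fact that $c$ is exactly the chord of the quadratic $m$ joining $(0,m(0))=(0,1)$ to $(1,m(1))$: indeed $c(0)=1=m(0)$ and $c(1)=1+(1-R)(\frac{\delta^2R}{2}-\epsilon)=m(1)$ by \eqref{eq:mdef}. Differencing \eqref{eq:mdef} and \eqref{eq:elldef} gives the two identities
\[ c(q)-m(q)=\frac{\delta^2}{2}R(1-R)q(1-q), \qquad \ell(q)-c(q)=\frac{\delta^2}{2}(1-R)^2 q(1-q), \]
so on $(0,1)$ the barrier $c$ lies strictly between $m$ and $\ell$, with ordering $m<c<\ell$ when $R<1$ and $c<\ell<m$ when $R>1$. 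The computation driving everything is the value of the drift $O$ in \eqref{eqn:O} along $c$: since $\frac{m(q)-c(q)}{\ell(q)-c(q)}=-\frac{R}{1-R}$, the formula collapses to
\[ O\big(q,c(q)\big)=\frac{1-R}{R}\,\frac{c(q)}{1-q}\cdot\Big(-\frac{R}{1-R}\Big)=-\frac{c(q)}{1-q}, \qquad q\in(0,1). \]

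First I would treat $R<1$ by a first-crossing argument against the chord $c$. Set $A(q)=(1-R)\big(n_r(q)-c(q)\big)$; since $n_r(r)=m(r)$ and $m(r)-c(r)=-\frac{\delta^2}{2}R(1-R)r(1-r)$, one gets $A(r)=-\frac{\delta^2}{2}R(1-R)^2 r(1-r)<0$, so the claim holds strictly at the left endpoint. If it failed somewhere in $(r,\zeta(r)\wedge1)$, let $q_0$ be the first zero of $A$; then $n_r(q_0)=c(q_0)$ and $A'(q_0)\ge0$. Using $n_r'(q_0)=O(q_0,c(q_0))=-c(q_0)/(1-q_0)$ together with $c'(q_0)=(1-R)(\frac{\delta^2R}{2}-\epsilon)$, a one-line cancellation (clearing the positive factor $1-q_0$) yields
\[ -\frac{c(q_0)}{1-q_0}-c'(q_0)=\frac{-m(1)}{1-q_0}, \]
so $A'(q_0)=(1-R)\frac{-m(1)}{1-q_0}<0$ because $R<1$ and $m(1)>0$. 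This contradicts $A'(q_0)\ge0$ and forces $A<0$, i.e. $(1-R)n_r\le(1-R)c$, throughout. No blow-up can occur before $q_0$, since on $[r,q_0)$ we have $n_r<c<\ell$, keeping $\ell-n_r$ bounded away from $0$.

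The case $R>1$ requires a genuinely different mechanism, and this is the main obstacle: the identical first-order computation now gives $A'(q_0)=(1-R)\frac{-m(1)}{1-q_0}>0$, so the chord $c$ is no longer a one-sided barrier. Instead I would use the singular curve $\ell$. Here $c<\ell<m$ on $(0,1)$, so it suffices to prove $n_r>\ell$, whence $n_r>\ell>c$ gives $(1-R)n_r<(1-R)c$. On $(r,\zeta(r)\wedge1)$ we have $n_r<m$ by definition of $\zeta$, and $n_r(r)=m(r)>\ell(r)$; as $n\downarrow\ell$ with $\ell<n<m$ the factor $\ell-n\to0^-$ forces $O(q,n)\to+\infty$, so $\ell$ acts as an impassable lower barrier that $n_r$ cannot reach (made precise by examining the sign of $d'$ for $d=n_r-\ell$ as $d\downarrow0$). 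Throughout, the only delicate point is the sign bookkeeping induced by the factor $(1-R)$, which is precisely what selects ``below the chord'' when $R<1$ and ``above the chord'' when $R>1$.
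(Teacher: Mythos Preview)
Your proof is correct and follows essentially the same approach as the paper: the core computation $O(q,c(q))=-c(q)/(1-q)$ and the resulting barrier argument for $R<1$ match the paper's (your exact identity $O(q_0,c(q_0))-c'(q_0)=-m(1)/(1-q_0)$ is in fact a tidier version of the paper's inequality $-c(q)<(1-q)(m(1)-1)$), and for $R>1$ both arguments rest on $n_r>\ell>c$, though you spell out more carefully than the paper's one-line assertion why $\ell$ is an impassable barrier.
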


\begin{proof} If $R>1$ then $m(q) > n(q) > \ell(q) > c(q)$ on $(0,1)$.

Now suppose  $R<1$.
Note that $-1 < m(1)-1$ and $-m(1)<0$ so that since $c$ is a straight line we have the inequality $-c(q) < (1-q)(m(1)-1)$ on $(0,1)$.
Then we have
\[ O(q,c(q)) = - \frac{1-R}{R} \frac{c(q)}{(1-q)} \frac{c(q)-m(q)}{\ell(q)-c(q)} = - \frac{1-R}{R} \frac{c(q)}{1-q} \frac{\frac{\delta^2 R}{2}(1-R)q(1-q)}{\frac{\delta^2}{2}(1-R)^2q(1-q)} = - \frac{c(q)}{(1-q)} < m(1)-m(0) = c'(q). \]
Hence solutions $n$ of \eqref{eqn:node} can only cross $c$ from above to below. Hence $n_r(q) \leq c(q)$.
\end{proof}

\begin{lem}
Suppose $R<1$, $m'(0)<0$, $m_M>0$, $m(1)>0$ and $m'(1)>0$. Then for $0 \leq r < q_M$, $q_M<\zeta(r) < 1$.
\end{lem}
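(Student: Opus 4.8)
The plan is to track the sign of the gap $a(q) := n_r(q) - m(q)$ as $q$ increases from $r$, exploiting the elementary \emph{crossing identity}: wherever $n_r(q_0) = m(q_0)$ one has $n_r'(q_0) = O(q_0, m(q_0)) = 0$, and hence $a'(q_0) = -m'(q_0)$. Throughout I may use the preceding Lemma, which confines $n_r$ below the chord $c = m + R(\ell - m)$, so that $0 < a \le R(\ell-m) = \frac{\delta^2}{2}R(1-R)q(1-q)$ on $(r,\zeta(r)\wedge 1)$; in particular $n_r < \ell$ there and the ODE stays regular. Note also that $m > 0$ everywhere (since $m_M>0$).

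First I would establish $\zeta(r) > q_M$. Since $a(r)=0$ and $a'(r) = -m'(r) > 0$ (because $r<q_M$ forces $m'(r)<0$), the gap is strictly positive just to the right of $r$. If $q_0$ denotes the first point beyond $r$ at which $a$ returns to $0$, then $a\downarrow 0$ at $q_0$ gives $a'(q_0)\le 0$, i.e. $m'(q_0)\ge 0$, i.e. $q_0 \ge q_M$. To make this strict I rule out $q_0 = q_M$: at such a degenerate contact $a(q_M)=a'(q_M)=0$, and differentiating the gap equation once more, using the identity $O_q(q,m(q)) = \frac{1-R}{R}\frac{m(q)}{1-q}\frac{m'(q)}{\ell(q)-m(q)}$ (which vanishes at the turning point because $m'(q_M)=0$), I obtain $a''(q_M) = -m''(q_M) = -\delta^2 R(1-R) < 0$. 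A downward-curving $a$ with vertex value $0$ at $q_M$ is negative on both sides, contradicting $a>0$ on $(r,q_M)$. Hence $q_0 > q_M$, and since $m'(q_0)>0$ there the crossing is transversal, so $\zeta(r) = q_0 > q_M$.

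For the upper bound $\zeta(r)<1$ I would argue by contradiction, assuming $a>0$ on all of $(r,1)$. Then by the squeeze $0<a\le \frac{\delta^2}{2}R(1-R)q(1-q)$ we have $a\to 0$ as $q\uparrow 1$. Writing $O$ through $a$ and $d:=\ell-m$ gives $-a' = \frac{1-R}{R}\frac{(m+a)a}{(1-q)(d-a)} + m' \ge m'$, so near $q=1$, where $m'\to m'(1)>0$, the gap falls at a rate bounded below; integrating and using $a(1^-)=0$ forces $a(q)\ge \tfrac12 m'(1)(1-q)$ for $q$ close to $1$. On the other hand, keeping only the $|O|$-term and using $d-a\le d$ yields $-(\ln a)' \ge \frac{\nu}{2(1-q)^2}$ near $1$, with $\nu = \frac{2m(1)}{R\delta^2}>0$; integrating shows $a$ decays faster than any power, so $a(q)=o(1-q)$. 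These two conclusions are incompatible, so $a$ must vanish before $q=1$, giving $\zeta(r)<1$.

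The routine ingredients are the crossing identity and the two one-sided integrations. The one genuinely delicate point is the behaviour at the singular point $q=1$, where $m$ and $\ell$ collide: the argument closes precisely because the two lower bounds on $-a'$ pull in opposite directions, the forcing term $m'(1)>0$ making $a$ decrease linearly while the $|O|$-term makes it decay super-exponentially, and no positive $a$ can meet both demands. I expect this endpoint analysis to be the main obstacle; the lower bound $\zeta(r)>q_M$ follows cleanly from first- and second-order information about the gap.
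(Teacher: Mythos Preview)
Your argument is correct. The treatment of the lower bound $\zeta(r)>q_M$ is essentially the paper's (you are simply more explicit in ruling out the degenerate contact $\zeta(r)=q_M$ by second-order analysis, whereas the paper writes $0=n'(\zeta(r))<m'(\zeta(r))$ without elaborating).

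For the upper bound $\zeta(r)<1$ your route is genuinely different from the paper's. The paper splits on the sign of $\ell'(1)$: if $\ell'(1)\ge 0$ then $\ell$ is increasing on $(0,1)$, so $m(1)=\ell(1)>\ell(0)=m(0)$, and since $n_r$ is decreasing from $m(r)<m(0)$ it cannot stay above $m$ all the way to $1$; if $\ell'(1)<0$ they squeeze $n_r$ between $m$ and $\ell$ near $1$ and show $(1-q)O(q,n_r(q))$ has a strictly negative limit, forcing $n_r'\to-\infty$ and contradicting $n_r<\ell$. Your argument instead uses the chord bound from the preceding lemma (indeed $c=(1-R)m+R\ell$, so $a\le R(\ell-m)$) to obtain the squeeze $a\to 0$ uniformly, and then plays off the two consequences of $-a'=|O|+m'$: the $m'$ term forces a linear lower bound $a(q)\gtrsim m'(1)(1-q)$, while the $|O|$ term forces $-(\ln a)'\gtrsim \mathrm{const}/(1-q)^2$ and hence super-polynomial decay. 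This unified treatment avoids the case split entirely and is arguably cleaner; the paper's version has the advantage that in the easy case $\ell'(1)\ge 0$ no asymptotic analysis at all is needed.
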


\begin{proof} First note we have $0 = n'(\zeta(r)) < m'(\zeta(r))$ and hence $\zeta(r) > q_M$.

If $\ell'(1) \geq 0$ then $\ell$ is increasing on $(0,1)$ and $m(1)>m(0)$. Then, for $r<1$, since $n_r$ is decreasing we must have $\zeta(r)<1$.

So, suppose $\ell'(1)<0$.
Suppose for a contradiction that $\zeta(r)\geq 1$. Then $n_r$ is decreasing on $(r,1]$ and $n_r(q) \in (m(1),\ell(q))$. Then $O(q,n(q)) < -\frac{1-R}{R} \frac{m(1)}{1-q} \frac{m(1)- m(q)}{\ell(q) - \ell(1)}$ and as $q \uparrow 1$, $\lim_{q}(1-q))O(q,n(q)) <  \frac{1-R}{R} m(1) \frac{m'(1)}{\ell'(1)}<0$. Hence $n_r'(1) = -\infty$ contradicting $n(q) < \ell(q)$ for $q$ close to 1.
\end{proof}

If $m_M<0$ but still $m(1)>0$ (and if $R<1$, $m'(0)<0$, $m'(1)>0$) and if $q_-$ is the root of $m$ in $(0,q_M)$ then the same proof gives that for $0 < r < q_-$, $\zeta(r)<1$.

\begin{lem} Suppose $q_M \notin \{ 0,1 \}$. Then
$\zeta(q_M)=q_M$.
\end{lem}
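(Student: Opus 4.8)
The plan is to study the gap function $a(q) = n_{q_M}(q) - m(q)$ in a neighbourhood of $q_M$ and to show that $(1-R)a(q) < 0$ for every $q \neq q_M$ sufficiently close to $q_M$; by the definition of $\zeta$ this forces $\zeta(q_M) = q_M$ at once. First I would check that $(q_M,m_M)$ is a regular point of \eqref{eqn:node}. The denominator appearing in $O$ is $\ell(q_M) - m_M = \tfrac{\delta^2}{2}(1-R)q_M(1-q_M)$, which is nonzero precisely because $q_M \notin \{0,1\}$ (and the factor $1-q$ is likewise nonzero there). Hence $O$ is smooth near $(q_M,m_M)$, and the solution $n_{q_M}$ with $n_{q_M}(q_M)=m_M$ exists, is unique, and is $C^2$ in a neighbourhood of $q_M$.

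The heart of the argument is a second-order Taylor expansion of $a$ at $q_M$. Since $O$ vanishes identically on the curve $\{n = m(q)\}$ (the factor $m(q)-n$ is zero there), we have $O(q_M,m_M)=0$, and since $q_M$ is the turning point of the quadratic $m$ we have $m'(q_M)=0$; hence $a(q_M)=0$ and $a'(q_M) = O(q_M,m_M) - m'(q_M) = 0$. The first-order comparison is therefore inconclusive, and this degeneracy is exactly the main obstacle: one must pass to second order. Differentiating the identity $O(q,m(q))\equiv 0$ gives $O_q(q,m(q)) + O_n(q,m(q))\,m'(q) \equiv 0$; evaluating at $q_M$ and using $m'(q_M)=0$ together with the finiteness of $O_n$ at the regular point $(q_M,m_M)$ yields $O_q(q_M,m_M)=0$. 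Consequently, since $n_{q_M}'(q_M)=0$,
\[ a''(q_M) = O_q(q_M,m_M) + O_n(q_M,m_M)\,n_{q_M}'(q_M) - m''(q_M) = -m''(q_M) = -\delta^2 R(1-R). \]

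It then follows that $a(q) = -\tfrac{1}{2}\delta^2 R(1-R)(q-q_M)^2 + o((q-q_M)^2)$, so that $(1-R)a(q) = -\tfrac{1}{2}\delta^2 R(1-R)^2(q-q_M)^2 + o((q-q_M)^2) < 0$ for all $q \neq q_M$ close to $q_M$. The crucial point is that this sign is \emph{unconditional}: it is independent of the sign of $m_M$, of whether $q_M$ lies in $(0,1)$, and of the sign of $1-R$, which is why the conclusion holds in the stated generality. Because the leading term is an even power of $(q-q_M)$, the inequality $(1-R)a<0$ holds on both sides of $q_M$; hence whether $\zeta$ is defined by an infimum over $q>q_M$ (Case A) or a supremum over $q<q_M$ (Case B), the definition gives $\zeta(q_M)=q_M$. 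As a cross-check I would instead linearise \eqref{eqn:node} directly, writing $O(q,m(q)+a) \approx -\tfrac{2m(q)}{R\delta^2 q(1-q)^2}\,a$ and solving the resulting linear equation $a' \approx -K_0 a - \delta^2 R(1-R)(q-q_M)$ with $a(q_M)=0$; the contributions linear in $(q-q_M)$ cancel and one recovers the same quadratic leading term, confirming the computation of $a''(q_M)$.
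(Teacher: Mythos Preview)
Your proof is correct and follows essentially the same line as the paper's: both arguments show that $a=n_{q_M}-m$ vanishes to first order at $q_M$ and that $a''(q_M)=-m''(q_M)=-\delta^2R(1-R)$, whence $(1-R)a<0$ in a punctured neighbourhood and $\zeta(q_M)=q_M$. The only cosmetic difference is that the paper computes $O_q(q,m(q))=\tfrac{1-R}{R}\,\tfrac{m(q)m'(q)}{(1-q)(\ell(q)-m(q))}$ explicitly and then evaluates at $q_M$, whereas you obtain $O_q(q_M,m_M)=0$ by differentiating the identity $O(q,m(q))\equiv 0$; your packaging with the factor $(1-R)$ also handles $R>1$ and Case~B in one stroke, which the paper's proof leaves implicit.
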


\begin{proof}
We have $n''(q) = \frac{\partial O}{\partial n} n'(q) +  \frac{\partial O}{\partial q}$. Then, at $(q,n(q)=m(q))$,
$n''(q) = \frac{\partial O}{\partial q}(q,m(q))=\frac{1-R}{R} \frac{m(q)}{(1-q)} \frac{m'(q)}{\ell(q)-m(q)}$.
If $R<1$, and $q_M \neq 1$ then $n_{q_M}(q_M) = m(q_M) = m_M$, $n'_{q_M}(q_M)= m'(q_M) = 0$ and $n_{q_M}''(q_M) = 0 < m''(q_M)$. Hence $n_{q_M}(q)$ lies below $m(q)$ to the right of $q_M$ and $\zeta(q_M)=q_M$.
\end{proof}

\begin{lem}
$\Lambda(q_M)=0$, $\lim_{r \downarrow 0} \Lambda(r) = \infty$ and $\Lambda$ is continuous and strictly decreasing.
\end{lem}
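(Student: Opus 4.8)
The plan is to dispatch the two elementary assertions first and then concentrate on the divergence at the origin. For $\Lambda(q_M)=0$ I would simply note that the preceding lemma gives $\zeta(q_M)=q_M$, so in \eqref{eq:Lamdef} the domain of integration degenerates to a point; the integrand stays bounded there since $n_{q_M}\approx m\approx m_M>0$ and $\ell-m>0$ keep $\ell-n$ bounded away from $0$. For strict monotonicity I would invoke the differentiation formula \eqref{eq:Lamderiv}, which is legitimate because the boundary contributions in Leibniz's rule vanish (in the first representation of \eqref{eq:Lamdef} the integrand is $0$ at both endpoints, where $n_r=m$); it yields $\partial\Lambda/\partial r<0$ once $n_r(q)$ is known to be strictly decreasing in $r$. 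That monotonicity follows from the non-crossing (uniqueness) property of \eqref{eqn:node}: at its left endpoint each $n_r$ leaves the curve $m$ upward, since $n_r'(r)=O(r,m(r))=0$ while $m'(r)<0$ for $r<q_M$, so $n_r>m$ on $(r,\zeta(r))$; hence for $r_1<r_2$ we have $n_{r_1}(r_2)>m(r_2)=n_{r_2}(r_2)$ and non-crossing propagates $n_{r_1}>n_{r_2}$ wherever both are defined. Differentiability gives continuity on $(0,q_M)$, and continuity at $q_M$ follows because the integration domain shrinks to a point while the integrand stays bounded.

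For the divergence I would use the first representation in \eqref{eq:Lamdef}, writing its integrand as $\frac{n_r-m}{(\ell-n_r)(\ell-m)}$, which is nonnegative on $(r,\zeta(r))$: the lower bound $n_r>m$ is the definition of $\zeta$, and $n_r<\ell$ holds because $O(q,n)\to-\infty$ as $n\uparrow\ell(q)$ acts as a barrier. Fix $q_0\in(0,\zeta(0))$, using $\zeta(0)<1$. Since the tail $\int_{q_0}^{\zeta(r)}$ is nonnegative, it suffices to bound $\int_r^{q_0}$ from below; and since $n_r(q)\to n_0(q)$ pointwise on $(0,q_0]$ as $r\downarrow 0$, Fatou's lemma reduces the claim to showing $\int_0^{q_0}\frac{n_0-m}{(\ell-n_0)(\ell-m)}\,dq=\infty$.

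The heart of the matter is the behaviour of $n_0$ at the singular point $(0,1)$ of \eqref{eqn:node}, where $m(0)=\ell(0)=1$ so both factors of $O$ in \eqref{eqn:O} vanish. I would analyse it by setting $n=1+y$ and $v=y/q$; to leading order \eqref{eqn:node} becomes the homogeneous equation $q\,v'=\frac{(v-v_-)(v-v_+)}{\gamma-v}$, where $\alpha=m'(0)=-\epsilon(1-R)$, $\gamma=\ell'(0)=(\frac{\delta^2}{2}-\epsilon)(1-R)$, and $v_\pm$ are the roots of $v^2-(\frac{1-R}{R}+\gamma)v+\frac{1-R}{R}\alpha$. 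Evaluating this quadratic at $\alpha$ and at $\gamma$ (using $\epsilon>0$, $R<1$) gives $\alpha<v_-<\gamma<v_+$, and the starting slope $v(r)=(m(r)-1)/r\to\alpha$ is carried monotonically up to the attracting equilibrium $v_-$, so that $\lim_{q\downarrow 0}(n_0(q)-1)/q=v_-$. Consequently $n_0-m\sim(v_--\alpha)q$, $\ell-n_0\sim(\gamma-v_-)q$ and $\ell-m\sim(\gamma-\alpha)q$ near $0$, whence the integrand is asymptotic to $c/q$ with $c=\frac{v_--\alpha}{(\gamma-v_-)(\gamma-\alpha)}>0$. This is non-integrable at $0$, giving $\int_0^{q_0}=\infty$ and hence $\Lambda(r)\to\infty$.

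The main obstacle is precisely the asymptotic statement $\lim_{q\downarrow 0}(n_0(q)-1)/q=v_-$: justifying that $n_0$ genuinely leaves the node along the eigendirection $v_-$ requires controlling the higher-order terms dropped in the homogeneous reduction, which is a careful (but standard) phase-plane argument at a node. Once that is in hand, the comparison of the gaps $n_0-m$ and $\ell-n_0$ and the resulting $c/q$ lower bound are routine, and everything downstream of the Fatou reduction follows immediately.
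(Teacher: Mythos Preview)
Your treatment of $\Lambda(q_M)=0$ and of strict monotonicity/continuity matches the paper's. For the divergence $\Lambda(0+)=\infty$ both arguments rest on the same linear analysis of \eqref{eqn:node} at the singular point $(0,1)$: your $v_\pm$ are the roots of the same quadratic the paper calls $H$, and your $v_-$ is the paper's $\chi$, interpreted there as the candidate slope $n_0'(0)$. The execution, however, differs. You pass to the pointwise limit $n_0$, invoke Fatou to reduce the claim to $\int_0^{q_0}\frac{n_0-m}{(\ell-n_0)(\ell-m)}\,dq=\infty$, and then argue via the homogeneous reduction that $n_0$ leaves the node along the eigendirection $v_-$, yielding an integrand $\sim c/q$. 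The paper avoids both the limit passage and the node analysis: it fixes a slope $\rho\in(m'(0),\chi)$, shows that the line $b(q)=1+\rho q$ is a one-sided barrier for \eqref{eqn:node} on a fixed interval $[0,Q]$, and deduces the \emph{uniform-in-$r$} linear bound $n_r(q)-m(q)\ge\hat\rho(q-r)$ directly. Combined with $\ell-n_r\le\ell-m=\tfrac{\delta^2}{2}(1-R)q(1-q)$ this gives $\Lambda(r)\gtrsim\int_r^{\tilde q}\frac{q-r}{q^2}\,dq\sim\ln(1/r)$. The paper's route is more elementary in that the only comparison needed is with a single straight line, and it sidesteps the step you flag as the main obstacle; your route is more conceptual, pinning down the actual slope of $n_0$, but the price is precisely the node argument you leave unformalised.
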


\begin{proof}
The strict monotonicity of $\Lambda$ follows from \eqref{eq:Lamderiv} and the fact the solutions $n_r$ are monotonic in $r$. Also, $\Lambda(q_M)=0$ is immediate from the fact that $\zeta(q_M)=q_M$. It remains to show that $\Lambda(0+) = \infty$. We prove this in the case $R<1$ and $\epsilon>0$, but the result follows similarly in other cases.

Let $\chi$ be the negative root of $H(x)=0$ where
\[ H(x) = Rx^2 - (1-R) \left[ \frac{\delta^2 R}{2} - R \epsilon + 1 \right] x - \epsilon (1-R)^2 \]
It is easy to see that $H(-\epsilon(1-R))> 0 > H( (1-R)(\frac{\delta^2}{2}-\epsilon))$ and hence
$m'(0) = - \epsilon(1-R) < \chi < (\ell'(0))^+ = (1-R)(\frac{\delta^2}{2}-\epsilon) \wedge 0$.
In fact if $n_0$ is differentiable at zero, then by l'H\^{o}pital's rule, $n_0'(0)$ solves
\[ n_0'(0) = \frac{(1-R)}{R} \frac{m'(0) - n'_0(0)}{\ell'(0) - n_0'(0)} \]
so that $\chi$ has the interpretation of a candidate value for $n_0'(0)$.

Fix $\rho$ with $m'(0) < \rho < \chi<0$. Then $H(\rho)>0$. Let $b(q)=1+ \rho q$. Then
\begin{eqnarray*}
O(q,b(q)) - \rho & = & -\frac{(1-R)}{R} \frac{(1 + \rho q)}{(1-q)} \frac{[(1+\rho q) - m(q)]}{[\ell(q) - (1 + \rho q)]} - \rho \\
 & = & -\frac{(1-R)}{R} \frac{(1 + \rho q)}{(1-q)}
 \frac{[\rho  + \epsilon(1-R) - \frac{\delta^2}{2}R(1-R)q]}{[(\frac{\delta^2}{2} - \epsilon)(1-R) - \rho  - \frac{\delta^2}{2}(1-R)^2q]} - \rho \\
 & = & \frac{1}{R(1-q) [(\frac{\delta^2}{2} - \epsilon)(1-R) - \rho  - \frac{\delta^2}{2}(1-R)^2q] } \left[ H(\rho) - B(\rho) q \right] \\
\end{eqnarray*}
where $B=B(\rho)$ is the constant
\[ B = (1-R)\rho (\rho  + \epsilon(1-R)) - \frac{\delta^2}{2}R(1-R)^2 - R \rho\left\{ \left( \frac{\delta^2}{2} - \epsilon \right)(1-R) - \rho \right\} - \rho \frac{\delta^2}{2}R(1-R)^2. \]
Since $H(\rho)>0$ by construction, there exists $Q^{(1)}>0$ such that for $0< q < Q^{(1)}$ we have $O(q,b(q))> \rho > m'(0)$.

For $r<Q^{(1)}$ let $\psi(r) = \inf \{ q : n_r(q) \geq b(q) \}$. If $n_r$ crosses $b$ before $Q^{(1)}$ then it crosses from below and stays above $b$ until $Q^{(1)}$. Also, for $r < q < Q^{(1)}\wedge \psi(r)$ we have $n_r'(q) = O(q, n_r(q)) > O(q, b(q)) > \rho$ by the monotonicity in the second argument of $O$.

Since $\rho > m'(0)$ there exists $Q^{(2)}>0$ such that for $0< q < Q^{(2)}(q)$ we have $m'(0) < m'(q) < \frac{m'(0) + \rho}{2}$. Then for $r< q< Q^{(1)}\wedge Q^{(2)} \wedge \psi(r)$,
\[ n'_r(q) - m'(q) > \rho - \frac{m'(0) + \rho}{2} = \frac{\rho - m'(0)}{2} =: \hat{\rho} \]
and for $r \leq q \leq  Q^{(1)}\wedge Q^{(2)} \wedge \psi(r)$,
\[ n_r(q) - m(q) \geq \hat{\rho}(q-r). \]
Further, for $\psi(r)<q< Q^{(1)}\wedge Q^{(2)}$  we have
$m(q) < 1 + ( \frac{m'(0)+\rho}{2} ) q$ and
\[ n_r(q) - m(q) > 1 + \rho q - 1 - \frac{m'(0)+\rho}{2} q = \hat{\rho} q > \hat{\rho}(q-r) .\]

Recall that for $q \in (r, \zeta(r) \wedge 1)$,  $\ell(q) - n_r(q) < \ell(q)-m(q) = \frac{\delta^2}{2}(1-R)q(1-q)$. Then, for $0<r<q < \tilde{q}$  where $\tilde{q}:=Q^{(1)}\wedge Q^{(2)} \wedge q_M$ we have
\[ \Gamma(r) > \int_r^{\tilde{q}} \frac{dq}{q(1-q)} \frac{\hat{\rho}(q-r)}{q(1-q) \frac{\delta^2}{2}(1-R)} > \frac{2 \hat{\rho}}{\delta^2(1-R)} \int_r^{\tilde{q}} dq \frac{(q-r)}{q^2} . \]
But $ \int_r^{\tilde{q}} dq \frac{(q-r)}{q^2} = \ln (\tilde{q}/r) + \frac{r}{\tilde{q}}-1$ which diverges as $r \downarrow 0$.

\end{proof}

\subsection{The threshold value of transaction costs below which the problem is ill-posed}
\begin{prop}
\label{prop:integral}
Let $Q$ and $P$ be the quadratics $Q(q)=(q_+ - q)(q - q_-)$ and $P(q) = (p_+ - q)(q - p_-)$. Suppose
either $p_- < 0 < q_- < q_+ < 1 < p_+$ or $p_- < 0 < 1 < p_+ < q_- < q_+$ or $q_- < q_+ < p_- < 0 < p_+$.

Then
\begin{eqnarray*}
\int_{q_-}^{q_+} dq \frac{1}{q(1-q)} \frac{Q(q)}{P(q)}
& = & \frac{q_+ q_-}{p_+ p_-} \ln \frac{q_+}{q_-}
-  \frac{(1-q_+)(1 -q_-)}{(p_+-1)(1-p_-)} \ln \frac{1-q_-}{1-q_+} \\
&& \hspace{5mm}
+ \frac{(p_+ - q_+)(p_+ - q_-)}{p_+(p_+ - 1)(p_+ - p_-)} \ln \frac{p_+- q_-}{p_+ - q_+} \\
&& \hspace{5mm}
- \frac{(q_+ - p_-)(q_- - p_-)}{p_-(1-p_-)(p_+ - p_-)} \ln \frac{q_+ - p_-}{q_- - p_-} .
\end{eqnarray*}
\end{prop}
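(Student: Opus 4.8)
The plan is to evaluate the integral by partial fractions, since the integrand $\frac{Q(q)}{q(1-q)P(q)}$ is a proper rational function (numerator of degree two, denominator of degree four) whose only singularities are simple poles at the four points $q\in\{0,1,p_+,p_-\}$. The first step is to record that, under each of the three orderings hypothesised, none of these four points lies in the open interval $(q_-,q_+)$: in the first configuration $(q_-,q_+)\subset(0,1)$; in the second $(q_-,q_+)$ lies entirely to the right of $p_+$; and in the third entirely to the left of $p_-$. Consequently each of the four linear factors $q$, $1-q$, $p_+-q$, $q-p_-$ keeps a constant sign on $[q_-,q_+]$, the integrand is bounded and continuous there, and the integral is a genuine convergent Riemann integral. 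These sign facts are exactly what will license writing each antiderivative as the logarithm of a \emph{positive} ratio at the end.

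Next I would carry out the decomposition
\[ \frac{Q(q)}{q(1-q)P(q)} = \frac{a_0}{q} + \frac{a_1}{1-q} + \frac{a_{+}}{p_+-q} + \frac{a_{-}}{q-p_-}, \]
computing each residue by the cover-up rule. Multiplying by $q$ and letting $q\to 0$ gives $a_0 = \frac{q_+q_-}{p_+p_-}$ directly; the other three residues come out the same way, and after rewriting factors such as $(q_+-1)$ as $-(1-q_+)$ and $(1-p_+)$ as $-(p_+-1)$ so that the signs are transparent, they take the forms
\[ a_1 = -\frac{(1-q_+)(1-q_-)}{(p_+-1)(1-p_-)}, \qquad a_{+} = \frac{(p_+-q_+)(p_+-q_-)}{p_+(p_+-1)(p_+-p_-)}, \qquad a_{-} = -\frac{(q_+-p_-)(q_--p_-)}{p_-(1-p_-)(p_+-p_-)}. \]

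Finally I would integrate term by term over $[q_-,q_+]$. Each summand yields a logarithm, namely $\int\frac{a_0}{q}\,dq = a_0\ln|q|$, $\int\frac{a_1}{1-q}\,dq = -a_1\ln|1-q|$, $\int\frac{a_+}{p_+-q}\,dq = -a_+\ln|p_+-q|$, and $\int\frac{a_-}{q-p_-}\,dq = a_-\ln|q-p_-|$; evaluating at the endpoints and using the constant-sign observation to drop the absolute values produces precisely the ratios $\ln\frac{q_+}{q_-}$, $\ln\frac{1-q_-}{1-q_+}$, $\ln\frac{p_+-q_-}{p_+-q_+}$, and $\ln\frac{q_+-p_-}{q_--p_-}$, with the sign flips from the $1-q$ and $p_+-q$ antiderivatives reversing the orientation of those two ratios. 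Matching $a_0,a_1,a_+,a_-$ to these four factors reproduces the stated formula verbatim. There is no genuine analytic difficulty in this proof; the only thing demanding care — and the step I expect to be the main source of error — is the sign bookkeeping: keeping each linear factor correctly oriented so that every logarithm argument is positive and each residue is paired with the right term. The three ordering hypotheses are invoked exactly to guarantee this, and a short check confirms that the same final expression is obtained in all three configurations.
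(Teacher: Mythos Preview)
Your proposal is correct and is essentially the same as the paper's proof: partial-fraction decomposition of $\frac{Q(q)}{q(1-q)P(q)}$ via the cover-up rule, term-by-term integration, and the observation that the ordering hypotheses ensure each logarithm has a positive argument. The residues you compute match the paper's verbatim.
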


\begin{proof}
We have 
\begin{eqnarray*}
\frac{(q_+-q)(q-q_-)}{q(1-q)(p_+-q)(q-p_-)} & = &
\frac{q_+ q_-}{p_+ p_-} \frac{1}{q}
- \frac{(1-q_+)(1 -q_-)}{(p_+-1)(1-p_-)} \frac{1}{1-q} \\
&& \hspace{5mm}
+ \frac{(p_+ - q_+)(p_+ - q_-)}{p_+(p_+ - 1)(p_+ - p_-)} \frac{1}{p_+ - q}
- \frac{(q_+ - p_-)(q_- - p_-)}{p_-(1-p_-)(p_+ - p_-)} \frac{1}{q - p_-} .
\end{eqnarray*}
The result follows on integrating. Note that under the relationships between $q_\pm$ and $p_\pm$ given in the statement of the proposition there are no roots of $P$ in $[q_-,q_+]$ and each of the four logarithms has a positive argument.
\end{proof}

\begin{cor}
\label{cor:underLam}
Suppose $|\epsilon|>\delta \sqrt{\frac{2R}{1-R}}$.
Then
\begin{eqnarray*}
\underline{\Lambda} = \int_{q_-}^{q_+} dq \frac{1}{q(1-q)} \frac{|m(q)|}{\ell(q)}
& = & - \ln \frac{q_+}{q_-}
-  \ln \frac{1-q_-}{1-q_+} \\
&& \hspace{5mm}
+ \frac{R}{1-R}\frac{(p_+ - q_+)(p_+ - q_-)}{p_+(p_+ - 1)(p_+ - p_-)} \ln \frac{p_+- q_-}{p_+ - q_+} \\
&& \hspace{5mm}
- \frac{R}{1-R} \frac{(q_+ - p_-)(q_- - p_-)}{p_-(1-p_-)(p_+ - p_-)} \ln \frac{q_+ - p_-}{q_- - p_-}
\end{eqnarray*}
where $q_\pm$ are the roots of $m$ and $p_\pm=$ are the roots of $\ell$:
\[ q_\pm =\frac{\epsilon \pm \sqrt{\epsilon^2 - (\delta \sqrt{\frac{2R}{1-R}})^2}}{\delta^2 R}
\hspace{20mm}
p_\pm = \frac{\frac{1}{2}\delta^2 - \epsilon \pm \sqrt{2 \delta^2 + (\frac{1}{2} \delta^2 - \epsilon)^2}}{\delta^2(1-R)} \]
\end{cor}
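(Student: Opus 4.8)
The plan is to recognise that $\underline{\Lambda}$ is, up to a constant factor, precisely the integral evaluated in Proposition~\ref{prop:integral}, and then to simplify the resulting coefficients. First I would factor the two quadratics through their roots. Since $m$ has leading coefficient $\frac{\delta^2}{2}R(1-R)$ we have $m(q)=\frac{\delta^2}{2}R(1-R)(q-q_-)(q-q_+)$, and because the $U$-shaped $m$ is negative on $(q_-,q_+)$, we get $|m(q)|=\frac{\delta^2}{2}R(1-R)Q(q)$ with $Q(q)=(q_+-q)(q-q_-)$. Likewise $\ell$ has leading coefficient $-\frac{\delta^2}{2}(1-R)^2$, so $\ell(q)=\frac{\delta^2}{2}(1-R)^2 P(q)$ with $P(q)=(p_+-q)(q-p_-)$. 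Dividing, the $\frac{\delta^2}{2}$ prefactors cancel and
\[ \frac{|m(q)|}{\ell(q)}=\frac{R}{1-R}\frac{Q(q)}{P(q)}, \qquad \underline{\Lambda}=\frac{R}{1-R}\int_{q_-}^{q_+}\frac{1}{q(1-q)}\frac{Q(q)}{P(q)}\,dq. \]

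Before invoking Proposition~\ref{prop:integral} I would check its hypotheses. The discriminant of $m$ equals $(1-R)\big[\epsilon^2(1-R)-2\delta^2 R\big]$, which is positive exactly when $|\epsilon|>\delta\sqrt{\tfrac{2R}{1-R}}$; thus under the stated assumption the roots $q_\pm$ are real, and the displayed formulae for $q_\pm$ and $p_\pm$ are just the quadratic formula applied to $m$ and $\ell$. Using $m(0)=\ell(0)=1>0$, the identity $\ell(1)=m(1)$ (immediate from \eqref{eq:elldef}), and the signs of the two leading coefficients, one then locates the roots relative to $0$ and $1$. Depending on the sign of $\epsilon$ and on whether $q_M\gtrless 1$, one lands in one of the three admissible orderings $p_-<0<q_-<q_+<1<p_+$, $p_-<0<1<p_+<q_-<q_+$, or $q_-<q_+<p_-<0<p_+$ listed in the proposition, so that the integrand has no pole on $[q_-,q_+]$ and each logarithm has positive argument.

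It remains to read off the value. Multiplying the four-term expression of Proposition~\ref{prop:integral} by $\frac{R}{1-R}$, the last two terms reproduce the third and fourth lines of the claimed formula verbatim. For the first two terms I would use Vieta's formulae for the products of roots, namely $q_+q_-=\frac{2}{\delta^2 R(1-R)}$ and $p_+p_-=-\frac{2}{\delta^2(1-R)^2}$, which give $\frac{R}{1-R}\frac{q_+q_-}{p_+p_-}=-1$ and hence collapse the first term to $-\ln\frac{q_+}{q_-}$. Similarly $(1-q_+)(1-q_-)=\frac{2m(1)}{\delta^2 R(1-R)}$ and $(p_+-1)(1-p_-)=\frac{2\ell(1)}{\delta^2(1-R)^2}$, so that the coefficient of the second term becomes $\frac{R}{1-R}\frac{(1-q_+)(1-q_-)}{(p_+-1)(1-p_-)}=\frac{m(1)}{\ell(1)}=1$, using $\ell(1)=m(1)$; this collapses the second term to $-\ln\frac{1-q_-}{1-q_+}$. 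Assembling the four pieces yields exactly the stated expression for $\underline{\Lambda}$.

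The computation is elementary, and the genuinely substantive content sits in Proposition~\ref{prop:integral}, which is already in hand. The main obstacle is therefore purely organisational: confirming the correct root ordering in each parameter regime covered by the hypothesis $|\epsilon|>\delta\sqrt{\tfrac{2R}{1-R}}$ so that the proposition applies and every logarithm is well defined. Once that bookkeeping is done, everything reduces to the two coefficient identities $\frac{R}{1-R}\frac{q_+q_-}{p_+p_-}=-1$ and $\frac{R}{1-R}\frac{(1-q_+)(1-q_-)}{(p_+-1)(1-p_-)}=1$ established above.
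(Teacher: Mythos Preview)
Your proposal is correct and follows essentially the same route as the paper's own proof: factor $m$ and $\ell$ through their roots to reduce $\underline{\Lambda}$ to $\frac{R}{1-R}$ times the integral of Proposition~\ref{prop:integral}, then simplify the first two coefficients via the Vieta identities $q_+q_-=\frac{2}{\delta^2 R(1-R)}$, $p_+p_-=-\frac{2}{\delta^2(1-R)^2}$ and $(1-q_+)(1-q_-)=\frac{2m(1)}{\delta^2 R(1-R)}$, $(p_+-1)(1-p_-)=\frac{2\ell(1)}{\delta^2(1-R)^2}$ together with $\ell(1)=m(1)$. You are in fact slightly more careful than the paper in explicitly discussing the root orderings needed to invoke the proposition; just be aware that the hypothesis $|\epsilon|>\delta\sqrt{\tfrac{2R}{1-R}}$ by itself also admits the regime $m(1)<0$ (Case~\emph{1Aa}), where $p_+\in(q_-,q_+)$ and the integral diverges, so the case split should additionally assume $m(1)>0$ when $\epsilon>0$---this is implicit in how $\underline{\Lambda}$ is used in the paper.
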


\begin{proof}
We have $m(q) = -\frac{\delta^2}{2}R(1-R)(q-q_-)(q_+-q)$ and $\ell(q) = \frac{\delta^2}{2}(1-R)^2(q-p_-)(p_+-q)$ and note $m(0)=1=\ell(0)$ and $\ell(1)=m(1)$.
Then with $Q(q)$ and $P(q)$ as in Proposition~\ref{prop:integral},
\begin{equation}
\int_{q_-}^{q_+} dq \frac{1}{q(1-q)} \frac{|m(q)|}{\ell(q)}
 =  \frac{R}{1-R} \int_{q_-}^{q_+} dq \frac{1}{q(1-q)} \frac{Q(q)}{P(q)}
\end{equation}
The result follows using $q_+ q_- = \frac{2}{\delta^2 R(1-R)}$ and $p_+ p_- = \frac{2}{-\delta^2(1-R)^2}$ so that $\frac{q_+ q_-}{p_+ p_-} = -\frac{1-R}{R}$ and
$(1-q_+)(1- q_-) = \frac{2m(1)}{\delta^2 R(1-R)}$ and $(p_+-1)(1-p_-) = \frac{2 \ell(1)}{\delta^2(1-R)^2}$ so that $\frac{(1-q_+)(1- q_-)}{(p_+-1)(1-p_-)} = \frac{1-R}{R}$.
\end{proof}

\section{Singular point of $n$}
Our goal is understand the nature of solutions which pass through the singular point $(1,m(1))$ and to prove Lemma~\ref{lem:singular} and Lemma~\ref{lem:Lambda}.

We assume that $\delta^2 R < \epsilon$ and also if $R<1$ that $\epsilon < \frac{1}{1-R} + \frac{\delta^2 R}{2}$. Then $(1-R)m'(1)<0$ and $m(1)>0$. We are interested in the behaviour of $n$ as it passes through the singular point $(1,m(1))$.

Let $\eta(x) = \frac{n(1+x)-m(1+x)}{\frac{1}{2} \delta^2(1-R)}$. Then the singular point is now at the origin. We have
\begin{eqnarray}
 \eta'(x) & = &  \frac{2}{\delta^2 R} n(1+x) \left[ - \frac{1}{x} - \frac{ \frac{1}{2} \delta^2 (1-R) (1+x) }{( m(1+x) - \frac{1}{2}\delta^2(1-R) x(1+x) - n(1+x) )} \right] - \frac{2m'(1+x)}{\delta^2(1-R)} \nonumber \\
 & = & - \frac{a(x,\eta)}{x^2} \eta + b(x) \label{eq:etaode}
\end{eqnarray}
where
\begin{eqnarray*}
a(x,\eta) & = & 
\frac{2}{\delta^2 R} \frac{m(1) + (1-R)(\delta^2 R - \epsilon)x + \frac{1}{2} \delta^2 R(1-R)x^2 + \frac{1}{2} \delta^2 (1-R) \eta}{1 + x + \frac{\eta}{x} }\\
b(x) & = &  - \frac{2}{\delta^2 (1-R)} m'(1+x) = \frac{2}{\delta^2} \left[{\epsilon} - {\delta^2 R} - \delta^2 R x \right]
\end{eqnarray*}

We have $m(1)>0$ and $m'(1)<0$ whence $\lim_{x \downarrow 0} a(x,0)=\frac{2}{\delta^2 R} m(1)>0$ and $b(0)>0$. Note that for sufficiently small $x$ we have that $a$ is positive and decreasing in the second argument and $b$ is positive. Our focus is on proving results about existence and uniqueness of solutions in a neighbourhood of $x=0$.

For intuition, and following Choi et al~\cite[Lemma 6.8]{ChoiSirbuZitkovic:13} consider the initial value ODE
\begin{equation}
\label{eq:intuition}
 f' = - A \frac{f}{x^2} + B,  \hspace{20mm} f(0)=0 ;
\end{equation}
where $A$ and $B$ are positive constants. We look for a solution $f = f(x)$ separately in $x \geq 0$ and $x \leq 0$. We find that there are multiple solutions for $x \leq 0$, but a unique solution for $x \geq 0$.

Fix $y>0$ and define $D(x) = D_y(x)$ by $D(x) = \int_x^y \frac{A}{z^2} dz = A \left( \frac{1}{x} - \frac{1}{y} \right)$. Then $D(0+)=\infty$. Also $\frac{d}{dx} (e^{-D(x)} f(x)) = Be^{-D(x)}$ and hence $f(x) = e^{D(x)} (f(y) - \int_x^y B e^{-D(z)} dz)$. The condition $f(0)=0$ forces $f(y) = \int_0^y B e^{-D(z)} dz$ and then
\[ f(x) = \int_0^x B \exp \left( - \int_z^x \frac{A}{y^2} dy \right) dz.  \]
Thus, the solution $f$ is unique for $x \geq 0$.

Now we look for a solution in $x \leq 0$. Fix $y<0$ and set $D(x)=D_y(x) = \int_y^x \frac{A}{z^2} dz$. Then $f(x) = e^{-D(x)} (f(y) - \int_y^x B e^{D(z)} dz) = e^{-D(x)} f(y) - \int_y^x B e^{-(D(x) - D(z))} dz$. Each value of $f(y)$ leads to a solution $f$ for which $f(0)=0$

We can also analyse the behaviour near $x=0$ of solutions to \eqref{eq:intuition}.
We have $f(x) = B \int_0^x e^{-A [ \frac{1}{y} - \frac{1}{x} ]} dy$. Then
\[ f(x) \leq B  e^{\frac{A}{x}} \int_0^x e^{-\frac{A}{y}} \frac{x^2}{y^2} dy = \frac{B}{A} x^2  e^{\frac{A}{x}} \int_0^x e^{-\frac{A}{y}}\frac{A}{y^2} dy  = \frac{B}{A} x^2. \]
Conversely, integrating by parts
\begin{eqnarray*} f(x) & = & \frac{B}{A}e^{\frac{A}{x}} \int_0^x \frac{A}{y^2} e^{-\frac{A}{y}} y^2 dy = \frac{B}{A}e^{\frac{A}{x}} \left\{ \left[y^2 e^{-\frac{A}{y}} \right]^x_0 - 2 \int_0^x y e^{-\frac{A}{y}} dy \right\} \\
& = & \frac{B}{A} x^2 - \frac{2B}{A} e^{\frac{A}{x}} \int_0^x \frac{y^3}{A} \frac{A}{y^2} e^{-\frac{A}{y}} dy  \geq  \frac{B}{A} x^2 - \frac{2B}{A^2} x^3 e^{\frac{A}{x}} \int_0^x \frac{A}{y^2} e^{-\frac{A}{y}} dy = \frac{B}{A} x^2 - \frac{2B}{A^2} x^3
\end{eqnarray*}
In particular,
\begin{equation}
\label{eq:limit}
\lim_{x \rightarrow 0} x^{-2}f(x) = \frac{B}{A}.
\end{equation}
We remark that if $A=A(x)$ and $B=B(x)$ are continuous and positive at $x=0$, then a small extension of the above argument gives
that a solution to \eqref{eq:intuition} satisfies $\lim_{x \rightarrow 0} x^{-2} f(x) = \frac{B(0)}{A(0)}$.

Now we turn to the solution of the problem
\begin{equation}
\label{eq:real}
 \eta' = - a(x,\eta) \frac{\eta}{x^2} + b(x)  \hspace{20mm} \eta(0)=0 .
\end{equation}
We have seen that each member of the family $\{ \eta_r \}_{r \in (0,1)}$ with $\eta_r$  given by
\[ \eta_r(x) = \frac{n_r(1+x) - m(1+x)}{\frac{1}{2} \delta^2 (1-R) }  \hspace{20mm} -(1-r) \leq x \leq 0 \]
solves \eqref{eq:real} for $x \leq 0$. So, our focus is on the case $x \geq 0$. Our consideration of \eqref{eq:intuition} leads us to expect that there is a unique solution.

\begin{prop}
\label{prop:real}
There exists a unique solution to \eqref{eq:real} in $x \geq 0$.
\end{prop}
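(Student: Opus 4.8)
The plan is to recast \eqref{eq:real} as a fixed-point problem by freezing the nonlinear coefficient. Writing the equation as $\eta' + \frac{a(x,\eta)}{x^2}\eta = b(x)$ and treating $a(x,\eta(x))$ as a known coefficient, the integrating-factor computation already carried out for the linear model \eqref{eq:intuition} shows that the \emph{only} solution which stays bounded as $x \downarrow 0$ is
\[ \eta(x) = \int_0^x b(z)\exp\left( - \int_z^x \frac{a(y,\eta(y))}{y^2}\,dy \right) dz . \]
I would therefore define the operator $T$ sending a function $\eta$ to the right-hand side above and seek a fixed point. The natural space is $C[0,x_0]$ for small $x_0>0$ under the weighted norm $\|\eta\|_* = \sup_{0<x\le x_0} x^{-2}|\eta(x)|$, and the natural set is the ball $\{\eta:\eta(0)=0,\ \|\eta\|_*\le K\}$ with $K$ chosen near $b(0)/a(0,0)$. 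The structural facts that make this work are that $a(0,0)=\frac{2}{\delta^2 R}m(1)>0$ and $b(0)=\frac{2}{\delta^2}(\epsilon-\delta^2 R)>0$, so both are continuous and strictly positive for small $x$ and small $\eta$; in particular $a(y,\eta(y))\ge a_0>0$ on the ball.

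Next I would verify that $T$ is a self-map. Using $\int_z^x \frac{a(y,\eta)}{y^2}\,dy \ge a_0\left(\frac1z-\frac1x\right)$, the exponential kernel is strongly damped when $z\ll x$, and the same Laplace-type, integration-by-parts estimates that produced \eqref{eq:limit} in the constant-coefficient case give $T\eta(x)=O(x^2)$ together with $\lim_{x\downarrow0}x^{-2}T\eta(x)=b(0)/a(0,0)$; choosing $x_0$ small and $K$ straddling $b(0)/a(0,0)$ then keeps $T$ inside the ball. For the contraction estimate I would compare the two kernels: since $a$ is smooth, hence Lipschitz in its second argument, $|a(y,\eta_1)-a(y,\eta_2)|\le L|\eta_1-\eta_2|\le L\|\eta_1-\eta_2\|_* y^2$, so the difference of exponents is bounded by $L\|\eta_1-\eta_2\|_*(x-z)$, and feeding this through the damped kernel yields $\|T\eta_1-T\eta_2\|_*\le C\,x_0\,\|\eta_1-\eta_2\|_*$. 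Shrinking $x_0$ makes $T$ a contraction, so the Banach fixed-point theorem delivers a unique solution on $[0,x_0]$.

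Finally I would extend the solution and upgrade uniqueness to the whole class of solutions with $\eta(0)=0$. For $x>0$ the right-hand side of \eqref{eq:real} is locally Lipschitz in $\eta$, so the local solution extends uniquely by standard theory. To rule out spurious solutions reaching the origin that the weighted ball might not capture, I would observe that $\beta(x)=c_+x^2$ and $\alpha(x)=c_-x^2$ with $c_-<b(0)/a(0,0)<c_+$ are respectively an upper and a lower fence near $0$: a direct substitution shows $\beta'\ge F(x,\beta)$ and $\alpha'\le F(x,\alpha)$ for small $x$, where $F(x,\eta)=-a(x,\eta)\eta/x^2+b(x)$, because the $O(x)$ left-hand sides are dominated by the sign of $-a(0,0)c_\pm+b(0)$. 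These fences form a funnel narrowing to $(0,0)$, so any solution attaining $\eta(0)=0$ is trapped between them and is automatically $O(x^2)$, hence lies in the ball and coincides with the fixed point. Equivalently, the difference $d=\eta_1-\eta_2$ of two such solutions solves the linear equation $d'=\left(-\frac{a(x,\eta_1)}{x^2}+c(x)\right)d$ with $c$ bounded (using $\eta_2=O(x^2)$ to absorb the $\partial_\eta a$ term), whose nonzero homogeneous solutions blow up like $\exp(a_0/x)$ as $x\downarrow0$, so compatibility with $d(0)=0$ forces $d\equiv0$.

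The main obstacle is the singular coefficient $1/x^2$ at the initial point: everything hinges on showing that the integrating-factor representation is well defined, lands in the correct $O(x^2)$ decay class, and contracts in the weighted norm, all while the frozen coefficient $a(x,\eta)$ varies with the unknown. Once the quantitative kernel estimates from the linear model \eqref{eq:intuition}--\eqref{eq:limit} are transferred to this variable-coefficient setting, existence, the precise boundary behaviour, and uniqueness all follow, with the strong damping from $a(0,0)>0$ being exactly what simultaneously selects the $O(x^2)$ solution and eliminates every competitor.
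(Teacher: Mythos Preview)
Your overall strategy is sound and the conclusion is correct, but one step in the contraction estimate is stated incorrectly. The coefficient $a(x,\eta)$ carries an $\eta/x$ in its denominator, so $\partial_\eta a(x,\eta)=O(1/x)$ rather than $O(1)$; the claim ``$a$ is smooth, hence Lipschitz in its second argument, $|a(y,\eta_1)-a(y,\eta_2)|\le L|\eta_1-\eta_2|$'' with uniform $L$ is false near $x=0$. On the ball $\|\eta\|_*\le K$ one instead gets $|a(y,\eta_1(y))-a(y,\eta_2(y))|\le (C/y)\,\|\eta_1-\eta_2\|_*\,y^2 = C\|\eta_1-\eta_2\|_*\,y$, and therefore the difference of exponents is bounded by $C\|\eta_1-\eta_2\|_*\int_z^x y^{-1}\,dy = C\|\eta_1-\eta_2\|_*\ln(x/z)$, not by $L\|\eta_1-\eta_2\|_*(x-z)$. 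This logarithm is unbounded as $z\downarrow 0$, so the bound you wrote does not follow. The fix is that the strongly damped kernel $\exp(-a_0(1/z-1/x))$ still controls the integral: substituting $s=1/z-1/x$ gives $\int_0^\infty e^{-a_0 s}\ln(1+xs)(1+xs)^{-2}x^2\,ds = O(x^3)$, and the contraction $\|T\eta_1-T\eta_2\|_*\le C x_0\|\eta_1-\eta_2\|_*$ survives. So the gap is reparable, but the argument as written does not go through.

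The route is genuinely different from the paper's. The paper constructs the solution by \emph{monotone iteration}: setting $f_0\equiv 0$ and solving the linear problem $f_{n+1}'=b(x)-a(x,f_n)f_{n+1}/x^2$, it uses that $a(x,\cdot)$ is decreasing in its second argument to show $(f_n)$ is increasing and bounded, then passes to the limit by monotone convergence. Uniqueness is obtained by a direct comparison: if $f>g$ on $(0,x_3]$ then $h=f-g$ satisfies $h'<0$ (using that $\xi\mapsto \xi\,a(z,\xi)$ is increasing), contradicting $h(0)=0$. Your Banach fixed-point argument in a weighted space is more systematic and does not rely on the monotonicity of $a$ in $\eta$; it also packages the $O(x^2)$ boundary behaviour directly into the function space. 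The paper's approach, by contrast, is more elementary and exploits the specific sign structure of the problem, which makes the uniqueness step a two-line calculation rather than a funnel or difference-equation argument.
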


\begin{proof}
Away from $x=0$ standard theory (see for example Walter~\cite[Chapter II, Section 7]{Walter:98}) gives the existence and uniqueness of a solution passing through any point $(x,\xi)$. So, our focus is on solutions near the origin.

For $x$ small enough $a(x,\xi)$ is positive and decreasing in $\xi$. We work on an interval $J^1=[0,x_1]$ such that $a(x,\xi)$ is positive and decreasing in $\xi$ on $J^1_0 = (0,x_1]$ and $b$ is positive and bounded on $J^1$. Further, we assume that for $x \in J^1_0$ we have the bounds $Cx \leq a(x,\xi) \leq a(x,0) =:A(x)$ and $ b(0)/2 \leq b(x) \leq 2b(0)$. Here $C = \frac{1-R}{R}$ where temporarily we assume $R<1$.

Define $L(x,g,h) = b(x) - \frac{a(x,g)}{x^2}h$. We work with functions defined on $J^1 = [0,x_1]$. Set $f_0(x) = 0$ and let $f_1$ be the solution to $f' = L(x,f_0,f)$ subject to $f(0)=0$. Then
\[ f_1(x) = \int_0^x b(y) \exp \left( - \int_y^x \frac{A(z)}{z^2} dz \right) dy. \]
Now construct a sequence of differentiable functions $(f_{n})_{n \geq 0}$ on $J^1$ where $f_{n+1}$ solves
$f = L(x, f_n,f)$ subject to $f(0)=0$. Then
\[ f_{n+1}(x) = \int_0^x b(y) \exp \left( - \int_y^x \frac{a(z,f_n(z))}{z^2} dz \right) dy. \]
We argue that this family of solutions is increasing in $n$.
Clearly $f_1>0 = f_0$ on $J^1_0$. Suppose inductively that $f_n > f_{n-1}$ on $J^1_0$. Then, since $a$ is decreasing in its second argument
$a(z,f_{n}(z)) < a(z, f_{n-1}(z))$ and for $x \in J^1_0$
\[  \int_0^x b(y) \exp \left( - \int_y^x \frac{a(z,f_n(z))}{z^2} dz \right) dy
> \int_0^x b(y) \exp \left( - \int_y^x \frac{a(z,f_{n-1}(z))}{z^2} dz \right) dy.  \]
It follows that $f_{n+1}(x) > f_{n}(x)$ as required.

Now we look for an upper bound.
Since $a(x,f) \geq Cx > 0$,
\begin{equation}
\label{eq:bound}
f_n(x) \leq \int_0^x b(y) e^{-\int_y^x \frac{C}{z} dz} = \int_0^x b(y) \frac{y^C}{x^C} dy \leq
2 b(0) \frac{x}{C+1}
\end{equation}
and for $z<x$
\[  - 2 b(0) z \left[ 1 - e^{-\int_z^x \frac{C}{u} } du \right] \leq f_n(x)-f_n(z) \leq 2 b(0) (x-z). \]
It follows that the function $f$ on $J$ given by $f(x)= \lim_n f_n(x)$ exists and is continuous with $f(0)=0$.
It remains to show that $f$ solves \eqref{eq:real}.
We have
\[ 
f(x) = \lim_n f_n(x)  =  \lim_n \int_0^x b(y) \exp \left( - \int_y^x \frac{a(z,f_n(z))}{z^2} dz \right) dy
 =  \int_0^x b(y) \exp \left( - \int_y^x \frac{a(z,f(z))}{z^2} dz \right) dy 
\] 
by monotone convergence. Since the right-hand-side of this expression is continuous differentiable we have that $f$ is continuously differentiable and $f'=L(x,f,f)$ as required.

The only place that we use $R<1$ is to say that there is a positive lower bound for $a(x,\eta)$.
If $R>1$, then we can construct a solution not in the strip $J^1 \times \R$, but rather until it first leaves the rectangle
$[0,x_1] \times [0, K]$ for some bound $K$, where $K$ is chosen so that $a(x,K) > \frac{x}{K}$ on $J^1$. The inequality $a(x,f) > \frac{x}{K}$ on $J^1 \times [0,K]$ gives an upper bound like \eqref{eq:bound} on $J^1$  which again can be used to prove existence of the limit $f$. Setting $x_2 = \min \{ x_1, \frac{K+1}{2 b(0)} \}$ and $J^2=[0,x_2]$ we have $f_n(x) \leq K$ on $J^2$. Then restricting attention to $J^2$ we find $f' = L(x,f,f)$ as required.

Now we consider uniqueness.
Let $f$ and $g$ be solutions which are non-negative on $J^3=[0,x_3]$ for some $x_3 \leq x_2$. We have that $a(x,f(x))>0$ and $a(x,g(x))>0$ on $J^3_0 = (0,x_3]$. If $f(z)=g(z)$ at some $z$ in $J^3_0$ then since $a(x,f)\frac{f}{x^2}$ is Lipschitz in $f$ away from $x=0$ we have that $f=g$ on $J^3_0$ and hence on $J^3$.

So suppose $f(z)>g(z)$ on $J^3_0$. We want to show that this leads to a contradiction. Since $\xi a(z,\xi)$ is increasing in $\xi$ (for small enough $\xi$), if $f>g$ we have $0 <a(z,g)g < a(z,f)f$. Let $h=f-g$; then $h$ solves
\[ h' = -a(x,f) \frac{f}{x^2} + a(x,g) \frac{g}{x^2} < 0\]
and $0 < h(x) < h(0)=0$ which is the desired contradiction.
\end{proof}

We can give an asymptotic analysis of the solution $\eta$ to \eqref{eq:real} in the same spirit as \eqref{eq:limit}. Take $f_0 = 0$, then $\eta(x) \geq f_1(x) \geq \int_0^x b(y) e^{ - \int_y^x \frac{A(z)}{z^2} } dz$. It follows by the comment after \eqref{eq:limit} that
$\lim_{x \downarrow 0} x^{-2} f_1(x) = \frac{b(0)}{A(0)}$ where $A(0)= \lim_{x \downarrow 0} a(x,0)$.
Hence $\lim_{x \downarrow 0} x^{-2} \eta(x) \geq \frac{b(0)}{A(0)}$. Conversely, using $h_0(x) = 2\frac{b(0)}{C+1}x$ we can conclude
$\eta(x) \leq h_1(x)$ where $h_1$ solves $h' = L(x, h_0,h)$. It can be shown that $h_1(x) \leq \kappa x^2$ for some constant $\kappa$. Repeating the argument, if $h_2$ solves $h' = L(x,h_1,h)$ subject to $h(0)=0$, then $\eta \leq h_2$ and $\lim_{x \downarrow 0} x^{-2} h_2(x) = \frac{b(0)}{A(0)}$. Hence $\lim_{x \downarrow 0} x^{-2} \eta(x) = \frac{b(0)}{A(0)}$. As a byproduct we conclude $\eta'$ is well defined at zero and $\eta'(0)=0$.

\begin{proof}[Proof of Lemma~\ref{lem:singular}]

(i) This part of the lemma follows from Proposition~\ref{prop:real}, and the argument above that $\eta'(0)=0$..

(ii) Suppose $R<1$. The case $R>1$ is similar, but sometimes involves reversed inequalities. Since $m(q) \leq n_r(q) \leq \ell(q)$ for $r < q <1$ and $m(1)=\ell(1)$ it follows
that for $r<1$ we have $n_r(1)=m(1)$ and $\zeta(r)>1$.

We have $m'(1)<0$. Suppose it is not the case that $n_r'(1-) = m'(1)$. Then either there exists $q_0 \in (0,1)$, $\theta \in (\ell'(1), m'(1))$ such that $n(q) > m(1) - (1-q)\theta$ or there exists $q_k \uparrow 1$, $\theta \in (\ell'(1), m'(1))$ such that $n(q_k) = m(1) - (1-q_k)\theta$. In the former case
\[ (1-q)O(q, n(1-q)) < -\frac{(1-R)}{R} m(1) \frac{\theta}{\delta^2(1-R)} \]
and hence $n'(1-)=-\infty$ contradicting $n < \ell$ on $(0,1)$. In the latter case
\[ O(q_k, n(q_k)) = -\frac{(1-R)}{R} \frac{m(1) - (1-q_k)\theta}{q_k}  \frac{\theta}{\delta^2(1-R)(1-q_k)} < \theta \]
for large enough $k$. Hence, sufficiently close to 1, $n$ can only cross the line $m(1) + (1-q)\theta$ from above to below, contradicting the existence of a sequence $q_k \uparrow 1$.

(iii) This follows from the uniqueness of solutions to the right of the singular point.
\end{proof}

\begin{proof}[Proof of Lemma~\ref{lem:Lambda}]
Given the results in Lemma~\ref{lem:3.1} which transfer to this context, all we need to show is that $\Lambda$ is continuous at $r=1$. This will follow if $I_1$ and $I_2$ are finite for small positive $x$ and $r<1-x$ where
\[ I_1(x) =  \int_1^{(1+x)\wedge \zeta(1)} dq \frac{1}{q(q-1)} \frac{n_1(q) - m(q)}{n_1(q) - \ell(q)}
\hspace{10mm}
I_2(x) = \int^1_{1-x} dq \frac{1}{q(1-q)} \frac{n_r(q) - m(q)}{\ell(q) - n_r(q)}   \]
But $n_1(q)-m(q) \sim (q-1)^2$, whereas $n_1(q)-\ell(q) \sim (q-1)$, where we write $f(x) \sim x^\alpha$ if $\lim_{x \downarrow 0} x^{-\alpha} f(x) = C$ for $C \in \R \setminus \{ 0 \}$. Hence $I_1$ is finite. Similar expansions hold to the left of 1 and we conclude $I_2$ is finite.
\end{proof}

\end{document}